\definecolor{Gray}{gray}{0.9}
\renewcommand*\env@matrix[1][*\c@MaxMatrixCols c]{%
	\hskip -\arraycolsep
	\let\@ifnextchar\new@ifnextchar
	\array{#1}}
\newcommand{\given}{\,|\,}
\newcommand\reallywidehat[1]{%
	\savestack{\tmpbox}{\stretchto{%
			\scaleto{%
				\scalerel*[\widthof{\ensuremath{#1}}]{\kern-.6pt\bigwedge\kern-.6pt}%
				{\rule[-\textheight/2]{1ex}{\textheight}}
			}{\textheight}%
		}{0.5ex}}%
	\stackon[1pt]{#1}{\tmpbox}%
}
\newcommand*{\addFileDependency}[1]{
  \typeout{(#1)}
  \@addtofilelist{#1}
  \IfFileExists{#1}{}{\typeout{No file #1.}}
}
\def\mathcolor#1#{\@mathcolor{#1}}
\def\@mathcolor#1#2#3{%
	\protect\leavevmode
	\begingroup
	\color#1{#2}#3%
	\endgroup
}
\renewcommand{\tilde}[1]{\widetilde{#1}}
\newcommand{\bolds}[1]{\boldsymbol{#1}}
\newcommand{\calD}{{\cal D}}
\newcommand{\calG}{{\cal G}}
\newcommand{\calL}{{\cal L}}
\newcommand{\calS}{{\cal S}}
\newcommand{\calT}{{\cal T}}
\newcommand{\barcalT}{\overline{\calT}}
\newcommand{\calU}{{\cal U}}
\newcommand{\Cov}{\bolds{C}}
\newcommand{\ba}{\bolds{a}}
\newcommand{\bA}{\bolds{A}}
\newcommand{\bb}{\bolds{b}}
\newcommand{\be}{\bolds{e}}
\newcommand{\bE}{\bolds{E}}
\newcommand{\bbf}{\bolds{f}}
\newcommand{\bF}{\bolds{F}}
\newcommand{\bg}{\bolds{g}}
\newcommand{\bG}{\bolds{G}}
\newcommand{\bH}{\bolds{H}}
\newcommand{\bI}{\bolds{I}}
\newcommand{\bl}{\bolds{\ell}}
\newcommand{\bm}{\bolds{m}}
\newcommand{\bM}{\bolds{M}}
\newcommand{\br}{\bolds{r}}
\newcommand{\bR}{\bolds{R}}
\newcommand{\bu}{\bolds{u}}
\newcommand{\bv}{\bolds{v}}
\newcommand{\bV}{\bolds{V}}
\newcommand{\bw}{\bolds{w}}
\newcommand{\bW}{\bolds{W}}
\newcommand{\bx}{\bolds{x}}
\newcommand{\by}{\bolds{y}}
\newcommand{\bz}{\bolds{z}}
\newcommand{\bzero}{\mathbf{0}}
\newcommand{\balpha}{\bolds{\alpha}}
\newcommand{\bbeta}{\bolds{\beta}}
\newcommand{\btheta}{\bolds{\theta}}
\newcommand{\bphi}{\bolds{\phi}}
\newcommand{\bPhi}{\bolds{\Phi}}
\newcommand{\bSigma}{\bolds{\Sigma}}
\newcommand{\bGamma}{\bolds{\Gamma}}
\newcommand{\bLambda}{\bolds{\Lambda}}
\newcommand{\blambda}{\bolds{\lambda}}
\newcommand{\bmu}{\bolds{\mu}}
\newcommand{\bxi}{\bolds{\xi}}
\newcommand{\bOmega}{\bolds{\Omega}}
\newcommand{\brho}{\bolds{\rho}}
\newtcolorbox[auto counter]{reviewcommentinside}[1][]{box align=center,
    width=0.9\textwidth,
    colframe = teal,
    colback=teal!10,
    code={\spacingset{0.9}},
    #1}
\newtcolorbox[auto counter]{reviewanswerinside}[1][]{box align=center,
    width=0.98\textwidth,
    colframe = orange!10,
    colback= orange!10,
    code={\spacingset{0.9}},
    #1}
\newcommand{\others}{\textrm{---}}
\definecolor{airforceblue}{rgb}{0.66, 0.84, 0.96}
\definecolor{amber}{rgb}{1.0, 0.85,0.55}
\definecolor{antiquefuchsia}{rgb}{0.77,0.56, 0.71}
\definecolor{antiquewhite}{rgb}{0.88,0.82,0.74}
\newcommand{\foritem}{\State\hspace{\algorithmicindent}}
\newcommand{\algindent}{\hspace{0.5cm}}
\newcommand{\plusequal}{{\ \displaystyle\mathrel{+}=\ }}
\algnewcommand{\algorithmicgoto}{\textbf{go to}}%
\algnewcommand{\Goto}[1]{\algorithmicgoto~\ref{#1}}%
\newcommand{\blind}{0}
\newcommand{\melange}{\textsc{melange}}
\newcommand{\Par}{\text{Par}}
\newcommand{\Chi}{\text{Chi}}
\newcommand{\Copar}{\text{Copar}}
\newcommand{\mb}{\texttt{mb}}
\date{}
\begin{document} 

\title{Spatial meshing for general \\Bayesian multivariate models}

\author{\name Michele Peruzzi \email peruzzi@umich.edu \\
        \addr Department of Biostatistics\\
        University of Michigan\\
        Ann Arbor, MI 48109-2029, USA
        \AND
        \name David B. Dunson \email dunson@duke.edu \\
        \addr Department of Statistical Science\\
        Duke University\\
        Durham, NC 27708-0251, USA}

\editor{Debdeep Pati}
\maketitle 

\begin{abstract}
Quantifying spatial and/or temporal associations in multivariate geolocated data of different types is achievable via spatial random effects in a Bayesian hierarchical model, but severe computational bottlenecks arise when spatial dependence is encoded as a latent Gaussian process (GP) in the increasingly common large scale data settings on which we focus. The scenario worsens in non-Gaussian models because the reduced analytical tractability leads to additional hurdles to computational efficiency. In this article, we introduce Bayesian models of spatially referenced data in which the likelihood or the latent process (or both) are not Gaussian. First, we exploit the advantages of spatial processes built via directed acyclic graphs, in which case the spatial nodes enter the Bayesian hierarchy and lead to posterior sampling via routine Markov chain Monte Carlo (MCMC) methods. Second, motivated by the possible inefficiencies of popular gradient-based sampling approaches in the multivariate contexts on which we focus, we introduce the simplified manifold preconditioner adaptation (SiMPA) algorithm which uses second order information about the target but avoids expensive matrix operations. We demostrate the performance and efficiency improvements of our methods relative to alternatives in extensive synthetic and real world remote sensing and community ecology applications with large scale data at up to hundreds of thousands of spatial locations and up to tens of outcomes. Software for the proposed methods is part of \texttt{R} package \texttt{meshed}, available on CRAN.
\end{abstract}
	
\noindent%
\begin{keywords} Multivariate models, Directed acyclic graph, Gaussian process, non-Gaussian data, Markov chain Monte Carlo, Langevin algorithms.
\end{keywords}
	
\section{Introduction} \label{section:intro}
Geolocated data are routinely collected in many fields and motivate the development of geostatistical models based on Gaussian processes (GPs).  GPs are appealing due to their analytical tractability, their flexibility via a multitude of covariance or kernel choices, and their ability to effectively represent and quantify uncertainty.
When Gaussian distributional assumptions are appropriate, GPs may be used directly as correlation models for the multivariate response. Otherwise, flexible models of multivariate spatial association can in principle be built via assumptions of conditional independence of the outcomes on a latent GP encoding space- and/or time-variability, regardless of data type.
The poor scalability of na\"ive implementations of GPs to large scale data is addressed in a growing body of literature. \cite{sunligenton}, \cite{sudipto_ba17} and \cite{Heaton2019} review and compare methods for big data geostatistics. Methods include low-rank approaches \citep{gp_predictive_process, frk}, covariance tapering \citep{taper1, taper2}, domain partitioning \citep{fsa, stein2014}, local approximations \citep{lagp}, and composite likelihood approximations \citep{steinetal2004}. In particular, a popular strategy is to assume sparsity in the Gaussian precision matrix via Gaussian random Markov fields \citep[GRMF;][]{grmfields} which can be represented as sparse undirected graphical models. Proper joint densities are a result of using directed acyclic graphs (DAG), leading to Vecchia's approximation \citep{vecchia88}, nearest-neighbor GPs \citep[NNGPs;][]{nngp}, and generalizations \citep[see e.g.][]{katzfuss_jasa17, katzfuss_vecchia}. DAGs can be designed by taking a small number of ``past'' neighbors after choosing an arbitrary ordering of the data. In models of the response and in the conditionally-conjugate latent Gaussian case, posterior computations rely on sparse-matrix routines for scalability \citep{nngp_algos, jurekkatzfuss2020}, enabling fast cross-validation \citep{conjnngp, sudipto_ss20}. Alternatives to sparse-matrix algorithms involve Gibbs samplers whose efficiency improves by prespecifying a DAG defined on domain partitions, resulting in spatially meshed GPs \citep[MGPs;][]{meshedgp}. These perspectives are reinforced when considering multivariate outcomes (see e.g. \citealt{zhangbanerjee20, deyetal20, spamtrees}).

The literature on scalable GPs predominantly relies on Gaussian assumptions on the outcomes, but in many applied contexts these assumptions are restrictive, inflexible, or inappropriate. For example, vegetation phenologists may wish to characterize the life cycle of plants in mountainous regions using remotely sensed Leaf Area Index (LAI, a count variable) and relate it to snow cover during 8 day periods (SC, a discrete variable whose values range from 0 to 8---see e.g., Figure \ref{fig:sat:data}). Similarly, community ecologists are faced with spatial patterns when considering counts or dichotomous presence/absence data of several animal species (Figure \ref{fig:nabbs:data}). In this article, we address this key gap in the literature, which is how to construct arbitrary Bayesian multivariate geostatistical models which (1) may include non-Gaussian components, (2) lead to efficient computation for massive datasets.

\begin{figure}
    \centering
    \includegraphics[width=.75\textwidth]{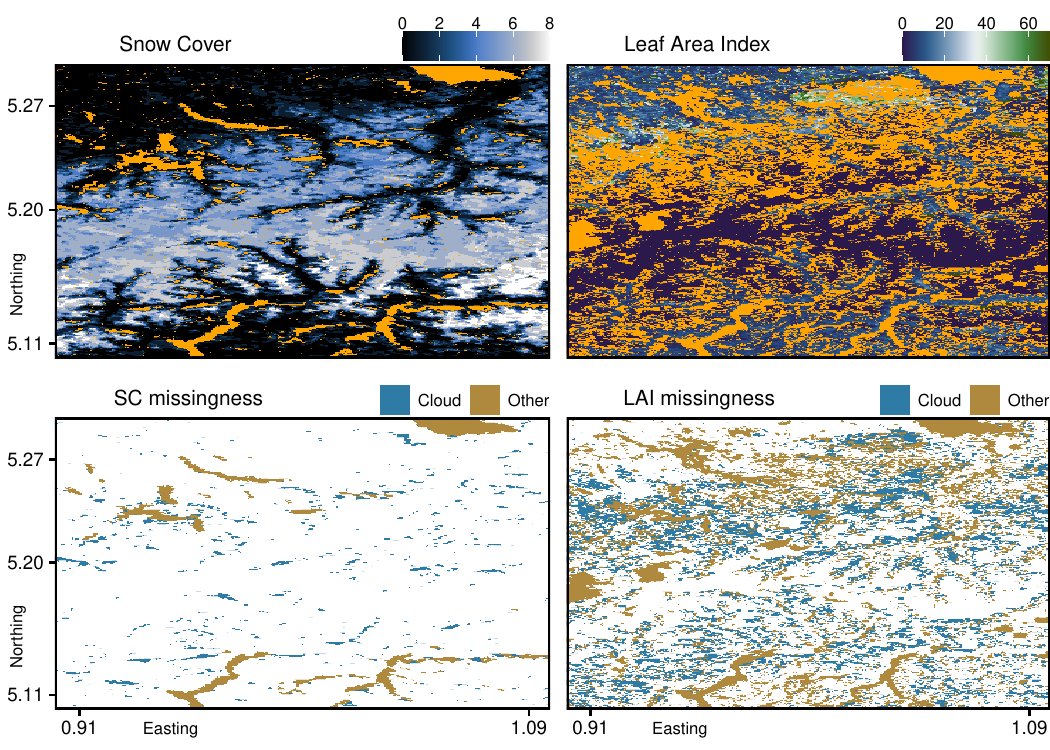}
    \caption{Snow cover (left) and Leaf Area Index, as measured by the MODIS-TERRA satellite. Missing data are in orange. Bottom maps detail the extents of cloud cover and other phenomena negatively impacting data quality.}
    \label{fig:sat:data}
\end{figure}

There are considerable challenges in these contexts for efficient Bayesian computation when avoiding Gaussian distributional assumptions on the outcomes. General purpose Markov chain Monte Carlo (MCMC) methods can in principle be used to draw samples from the posterior distribution of the latent process by making local proposals within accept/reject schemes. However, due to the huge dimensionality of the parameter space, poor mixing and slow convergence are likely. 
For instance, random-walk Metropolis proposals are cheaply computed but lack in efficiency as they overlook the local geometry of the high dimensional posterior.
Alternatively, one may consider gradient-based MCMC methods such as the Metropolis-adjusted Langevin algorithm (MALA; \citealt{robertsstramer02}), Hamiltonian Monte Carlo (HMC; \citealt{hmc_duane, hmc_neal, hmc_conceptual}) and others such as MALA and HMC on the Riemannian manifold \citep{girolamicalderhead11} or the no-U-turn sampler \citep[NUTS;][]{nuts} used in the \texttt{Stan} probabilistic programming language \citep{stan}. These methods are appealing because they modulate proposal step sizes using local gradient and/or higher order information of the target density. Unfortunately, their performance very rapidly drops with parameter dimension \citep{hastings50}. Although it is common in other contexts to rely on subsamples to cheaply approximate gradients, \cite{johndrowetal20}
show that such approximate MCMC algorithms are either slow or have large approximation error.
\begin{figure}
    \centering
    \includegraphics[width=.99\textwidth]{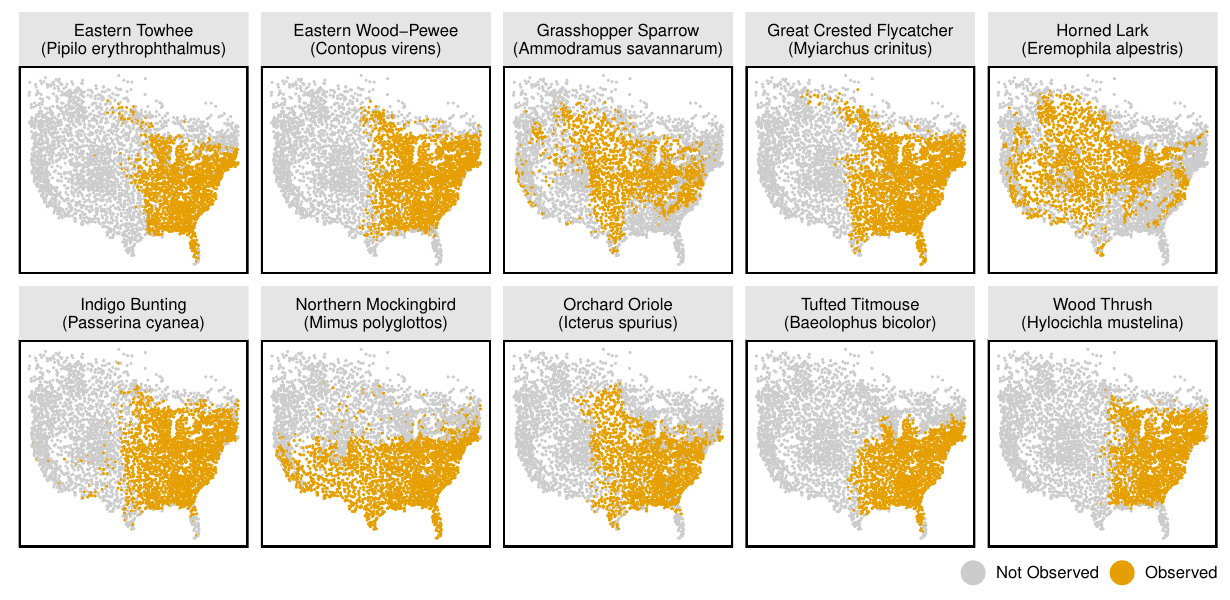}
    \caption{An extract of dichotomized North American Breeding Bird Survey data. Orange points correspond to locations at which at least 1 individual has been observed.}
    \label{fig:nabbs:data}
\end{figure}
Such issues can be tackled by considering low-rank models, which facilitate the design of more efficient proposals as they involve parameters of greatly reduced dimension. Certain low-rank models endowed with conjugate full conditional distributions \citep{bradley_ba, bradley_jasa} lead to always-accepted Gibbs proposals. However, excessive dimension reduction---which may be necessary for acceptable MCMC performance---may lead to oversmoothing of the spatial surface, overlooking the small-range variability that frequently occurs in big spatial data  \citep{gp_pp_biasadj}. 
Alternative dimension reduction strategies via divide-and-conquer methods that combine posterior samples obtained via MCMC from data subsets typically rely on assumptions of independence that are inappropriate in the highly correlated data settings in which we are interested \citep{neiswangeretal2014, wangdunson2014, wangetal2015, nemeth2018, blomstedtetal2019, mesquitaetal2020} or have only considered univariate Gaussian likelihoods \citep{metakriging}.

The poor practical performance of MCMC in high dimensional settings has motivated the development of MCMC-free methods for posterior computation that take advantage of Laplace approximations \citep{sengupta_cressie, vecchialaplace}. In particular, the integrated nested Laplace approximation \citep[INLA;][]{inla} iterates between Gaussian approximations of the conditional posterior of the latent effects, and numerical integrations over the hyperparameters. INLAs are accurate because of the non-negligible impact the Gaussian prior on the latent process has on its posterior; they achieve scalability to big spatial data by forcing sparsity on the Gaussian precision matrix via a GMRF assumption \citep{spde}. INLAs are reliable alternatives to MCMC methods in several settings, but may be outperformed by carefully-designed MCMC methods in terms of accuracy or uncertainty quantification \citep{taylordiggle14}. Furthermore, the practical reliance of INLAs on Matérn covariance models with small dimensional hyperparameters for fast numerical integration makes them less flexible than MCMC methods in multivariate contexts or whenever special-purpose parametric covariance functions are required. 

In this article, we introduce methodological and computational innovations for scalable posterior computations for general non-Gaussian spatial models. Our contributions include a class of Bayesian hierarchical models of multivariate outcomes of possibly different types based on spatial meshing of a latent multivariate process. In our treatment, outcomes can be misaligned---i.e., not all measured at all spatial locations---and relatively large in number, and there is no Gaussian assumption on the latent process. We maintain this perspective when developing posterior sampling methods. In particular, we develop a new Langevin algorithm which, based on ideas related to manifold MALA, adaptively builds a preconditioner but also avoids cubic-cost operations, leading to efficiency improvements in the contexts in which we focus. Our methods enable computations on data of size $10^5$ or more. Unlike low-rank methods, we do not require restrictive dimensionality reduction at the level of the latent process. Unlike INLA, our computational methods are exact (upon convergence) for a class of valid spatial processes which is not restricted to latent GPs with Mat\'ern covariances; furthermore, our methods are hit by a smaller computational penalty in higher-dimensional multivariate settings.
Our methods are generally applicable to models of spatially referenced data, but we highlight the connections between Langevin methods and the Gibbs sampler available for Gaussian outcomes, and we develop new results for latent coregionalization models using MGPs. In applications, we consider Student-t processes, HMC and NUTS, and other cross-covariance models as methodological and computational alternatives to latent GPs, Langevin algorithms, and coregionalization models, respectively. Software for the proposed methods and the related posterior sampling algorithms is available as part of the \texttt{meshed} package for \texttt{R}, available on CRAN.

The article proceeds as follows. Section \ref{sec:setup} outlines our model for spatially-referenced multivariate outcomes of different types and introduces general purpose methods and algorithms for scaling computations to high dimensional spatial data. Section \ref{sec:melange} outlines Langevin methods for posterior sampling of the latent process and introduces a novel algorithm for multivariate spatial models. Section \ref{sec:gaussianlmc} translates the proposed methodologies for the latent Gaussian model of coregionalization. The remaining sections highlight algorithmic efficiency in applications on large synthetic and real world datasets motivated by remote sensing and spatial community ecology. The supplementary material includes alternative constructions of our proposed methods based on latent grids, Student-t processes, and NUTS for posterior computations, in addition to proofs, practical guidelines, additional simulations, and a real world application of our methods in the context of spatial multi-species N-mixture models.

\section{Meshed Bayesian multivariate models for non-Gaussian data} \label{sec:setup}
We introduce our model for multivariate outcomes of possibly different types (e.g. continuous and counts) which also allows for misalignment.
Let $\calG = \{\bA, \bE \}$ be a DAG with nodes $\bA = \{ a_1, \dots, a_M \}$ and edges $\bE = \{ \Par(a) : a \in \bA \}$, where $\Par(a) \subset A$ is referred to as the parent set of $a$. Let $\calD$ be the input domain and $\calS \subset \calD$ denote a user-specified set of ``knots'' or ``reference locations.'' We partition $\calS$ into subsets $\calS_i \subset \calS$ such that $\calS_i \cap \calS_j = \emptyset$ if $i\neq j$ and $\cup_{i=1}^M \calS_i = \calS$. Then, we set up our hierarchical model for multivariate outcomes as:
\begin{equation} \label{eq:meshed_hierarchy}
    \begin{aligned}
    y_j(\bl) \given \eta_j(\bx_j(\bl),  w_j(\bl)), \gamma_j \sim &F_j(\eta_j(\bx_j(\bl), w_j(\bl)), \gamma_j),\\ 
    \bbeta_j, \gamma_j \sim \pi(\bbeta_j, \gamma_j) \quad &\btheta \sim \pi(\btheta), \quad
    \bw(\cdot) \sim \Pi_{\calG, \btheta} 
    \end{aligned}
\end{equation}
where $F_j$ is the probability distribution of the $j$th outcome, parametrized by an unknown constant $\gamma_j$ and spatially-varying function $\eta_j(\bx_j(\bl), w_j(\bl))$, which includes a $p_j$-dimensional vector of covariates specific for the $j$th outcome, denoted by $\bx_j(\bl)$, whereas $w_j(\bl)$ is the $j$th element of the random vector $\bw(\bl)$, for $j=1,\dots,q$. A common linear assumption leads to $\eta_j(\bl) = \bx_j(\bl)^\top\bbeta_j + w_j(\bl)$. Given a set of locations $\calL \subset \calD$ of size $n_{\calL}$ we denote $\bw_{\calL} = (\bw(\bl_1)^{\top},\bw(\bl_2)^{\top},\ldots,\bw(\bl_{n_{\calL}})^{\top})^{\top}$. We assume $\bw_{\calL}$ is the finite realization at $\calL$ of an infinite-dimensional latent process $\bw(\cdot)$, with law $\Pi_{\calG}$ and density $\pi_{\calG}$, which characterizes spatial/temporal dependence between outcomes. We construct such a process by enforcing conditional independence assumptions encoded in $\calG$ onto the law $\Pi$ of a $q$-variate spatial process (also referred to as the \textit{base} or \textit{parent} process). 
For locations $\bl \in \calS$, we make the assumption that $\pi_{\calG}$ factorizes according to $\calG$. This means $\pi_{\calG}(\bw_\calS \given \btheta) = \prod_{a_i \in \bA} \pi(\bw_{i} \given \bw_{[i]}, \btheta)$, where we denote $\bw_i = \bw_{\calS_i}$ and $\bw_{[i]}$ is the vector of $\bw(\cdot)$ at locations $\bl \in \cup_{a_j \in \Par(a_i)} S_j$ -- i.e. the set of locations mapped to parents of $a_i$. For locations $\bl \in \calU=\calD \setminus \calS$, we assume conditional independence given a set of parents $[\bl] \subset \bA$, which means $\pi_{\calG}(\bw_{\calU} \mid \bw_{\calS}, \btheta) = \prod_{\bl \in \calU} \pi(\bw(\bl) \mid \bw_{[\bl]}, \btheta)$ where $\bw_{[\bl]}$ is a vector collecting realizations of $\bw(\cdot)$ at locations $\calS_{[\bl]} = \cup_{\ba_i \in [\bl]} \calS_i$.

\subsection{DAG and partition choice}
We refer to the method of building spatial processes via sparse DAGs associated to domain partitioning as spatial meshing. Several options for constructing $\calG$ and populating and partitioning $\calS$ are available, but
sparsity assumptions on $\calG$ are necessary to avoid computational bottlenecks in using $\Pi_{\calG}$. Specifically, we restrict our focus on sparse DAGs such that $|\mb(a)| \leq \overline{m}$ for all $a \in \bA$, where $\mb(a)$ is the Markov blanket of $a$, and $\overline{m}$ is a small number. The Markov blanket of a node in a DAG is the set $\mb(a) = \Par(a) \cup \Chi(a) \cup \Copar(a)$ which enumerates the parents of $a$ along with the set of children of $a$, $\Chi(a) = \{ b \in \bA : a \in \Par(b)\}$, and the set of co-parents of $a$, $\Copar(a) = \{ c \in \bA : c \neq a \text{ and } \{ a, c \} \subset \Par(b) \text{ for some } b \in \Chi(a) \}$---this is the set of $a$'s children's other parents. 
We additionally assume that the undirected moral graph $\bar{\calG}$ obtained by adding pairwise edges between co-parents has a small number of colors; if node $a$ has color $c$, then no elements of $\mb(a)$ have the same color. Because our assumptions on the size of the Markov blanket lead to large scale conditional independence, the spatially meshed process $\Pi_{\calG}$ has a simpler dependence structure than the parent process $\Pi$ from which it originates. The ``screening'' effect \citep{stein_screening} makes these assumptions appealing in geostatistical contexts. Furthermore, if the Markov blanket of nodes in $\calG$ can be built to cover their spatial neighbors, then $\Pi_{\calG}$ can provably accurately approximate $\Pi$ in some settings \citep{radgp}. If $\Pi$ is a GP, the $i,j$ entry of the resulting precision matrix is nonzero if the corresponding nodes are in their respective Markov blankets. In the context of Gibbs-like samplers that visit each node of $\calG$, a small Markov blanket bounds the compute time for each step of the algorithm; we take advantage of our assumptions on step \ref{alg:lmc_meshed_posterior:step4} of Algorithm \ref{algorithm:meshed_posterior}. Refer to Algorithm \ref{algorithm:lmc_meshed_posterior} and Section \ref{appx:complexity_lmcqmgp} in the supplement for an account of computational complexity in the coregionalized GP setting.

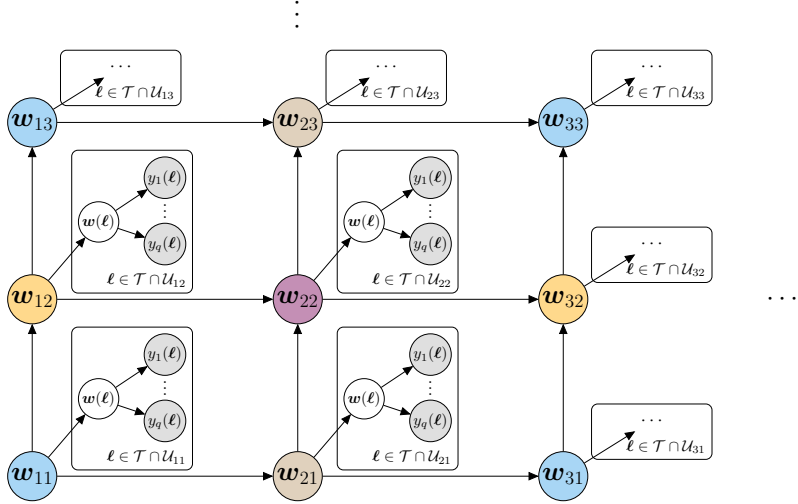
\begin{figure}  
    \centering
\resizebox{.65\columnwidth}{!}{%
\begin{tikzpicture} 
		\node [latent, fill=amber] (6) at (-9, 1) {\Large $\bw_{11}$};
		\node [latent] (9) at (-7.5, 2.75) {$\bw(\bl)$};
		\node [obs] (13) at (-6, 3.75) {$y_1(\bl)$};
		\node  (91) at (-6, 3.1) {$\vdots$};
		\node [obs] (14) at (-6, 2.25) {$y_q(\bl)$};
		
		\node [latent, fill=airforceblue] (0) at (-9, 5) {\Large $\bw_{12}$};
		\node [latent] (10) at (-7.5, 6.75) {$\bw(\bl)$};
		\node [obs] (15) at (-6, 7.75) {$y_1(\bl)$};
		\node  (93) at (-6, 7.1) {$\vdots$};
		\node [obs] (16) at (-6, 6.25) {$y_q(\bl)$};
		
		\node [latent, fill=amber] (2) at (-3, 5) {{\Large $\bw_{22}$}};
		\node [latent] (11) at (-1.5, 6.75) {$\bw(\bl)$};
		\node [obs] (17) at (0, 7.75) {$y_1(\bl)$};
		\node  (94) at (0, 7.1) {$\vdots$};
		\node [obs] (18) at (0, 6.25) {$y_q(\bl)$};
		
		\node [latent, fill=antiquefuchsia] (8) at (-3, 1) {\Large $\bw_{21}$};
		\node [latent] (12) at (-1.5, 2.75) {$\bw(\bl)$};
		\node [obs] (19) at (0, 3.75) {$y_1(\bl)$};
		\node  (92) at (0, 3.1) {$\vdots$};
		\node [obs] (20) at (0, 2.25) {$y_q(\bl)$};
		
		\node [latent, fill=antiquefuchsia] (29) at (-9, 9) {\Large $\bw_{13}$};
		\node  (32) at (-7, 10.25) {$\qquad \cdots \qquad $};
		
		\node [latent, fill=airforceblue] (30) at (-3, 9) {\Large $\bw_{23}$};
		\node  (33) at (-1, 10.25) {$\qquad \cdots \qquad $};
		
		\node [latent, fill=amber] (31) at (3, 9) {\Large $\bw_{33}$};
		\node  (34) at (5, 10.25) {$\qquad \cdots \qquad $};
		
		\node [latent, fill=antiquefuchsia] (22) at (3, 5) {\Large $\bw_{32}$};
		\node  (35) at (5, 6.25) {$\qquad \cdots \qquad $};
		
		\node [latent, fill=airforceblue] (21) at (3, 1) {\Large $\bw_{31}$};
		\node  (36) at (5, 2.25) {$\qquad \cdots \qquad $};
		
		\node  (37) at (-3, 11.5) {{\Large \rotatebox{90}{$\cdots$}}};
		\node  (38) at (8, 5) {{\Large $\cdots$ }};
		
		\edge {6} {9};
		\edge {9} {13};
		\edge {9} {14};
		\edge {6} {8};
		\edge {8} {2};
		\edge {6} {0};
		\edge {0} {2};
		\edge {0} {10};
		\edge {10} {16};
		\edge {10} {15};
		\edge {2} {11};
		\edge {11} {18};
		\edge {11} {17};
		\edge {8} {12};
		\edge {12} {19};
		\edge {12} {20};
		\edge {0} {29};
		\edge {29} {30};
		\edge {2} {30};
		\edge {8} {21};
		\edge {2} {22};
		\edge {21} {22};
		\edge {22} {31};
		\edge {30} {31};
		\edge {29} {32};
		\edge {30} {33};
		\edge {31} {34};
		\edge {22} {35};
		\edge {21} {36};
		
		\plate {data1} {(9)(13)(14)} {$\bl \in \calT\cap \calU_{11}$} ;
		\plate {data2} {(10)(15)(16)} {$\bl \in \calT\cap \calU_{12}$} ;
		\plate {data4} {(11)(17)(18)} {$\bl \in \calT\cap \calU_{22}$} ;
		\plate {data5} {(12)(19)(20)} {$\bl \in \calT\cap \calU_{21}$} ;
		
		\plate {dataplus1} {(32)} {$\bl \in \calT\cap \calU_{13}$} ;
		\plate {dataplus2} {(33)} {$\bl \in \calT\cap \calU_{23}$} ;
		\plate {dataplus3} {(34)} {$\bl \in \calT\cap \calU_{33}$} ;
		\plate {dataplus4} {(35)} {$\bl \in \calT\cap \calU_{32}$} ;
		\plate {dataplus5} {(36)} {$\bl \in \calT\cap \calU_{31}$} ;
\end{tikzpicture}
}
    \caption{Directed acyclic graph representing a special case of model (\ref{eq:meshed_hierarchy}). For simplicity, we omit the directed edges from $(\bbeta_j, \gamma_j)$ to each $y_j(\bl)$, $\bl \in \calT$. If $y_j(\bl)$ is unobserved and therefore $\bl \notin \calT_j$, the corresponding node is missing.}
    \label{fig:qmgpdag}
\end{figure}

Figure \ref{fig:qmgpdag} visualizes (\ref{eq:meshed_hierarchy}) when implemented on a ``cubic'' spatial DAG using row-column indexing of the nodes resulting in $M=M_{\text{row}} \cdot M_{\text{col}}$ and $\calS = \cup_{i=1}^{M_{\text{row}}} \cup_{j=1}^{M_{\text{col}}} S_{ji}$. Even though DAGs are abstract representations of conditional independence assumptions, nodes of the DAG in Figure \ref{fig:qmgpdag} conform to a single pattern (i.e., edges from left and bottom nodes, and to right and top nodes). As a consequence, the moral graph $\bar{\calG}$ only adds undirected edges between $a_{i+1, j}$ and $a_{i, j+1}$ for all $i=1, \dots, M_{\text{row}}-1$ and $j=1, \dots, M_{\text{col}}-1$, leading to cliques of size 3 and 3 colors, irrespective of input data. 
We refer to this kind of DAG as a cubic DAG as it naturally extends to a hypercube structure in $d>2$ dimensions. 

Once a sparse DAG has been set, one needs to associate each node to a partition $\calS_i$ of $\calS$. With cubic DAGs, the $i$th node of $\calG$ can be associated to the $i$th domain partition found via axis-parallel tiling, or via Voronoi tessellations using a grid of centroids. These two partitioning strategies are equivalent when data have no gaps; otherwise, the latter strategy simplifies the proposal in \cite{meshedgp} and can be used to guarantee that every domain partition includes observations, see e.g. Figure \ref{fig:explain_dagpartition}. 
\begin{figure}
    \centering
    \includegraphics[width=.9\textwidth]{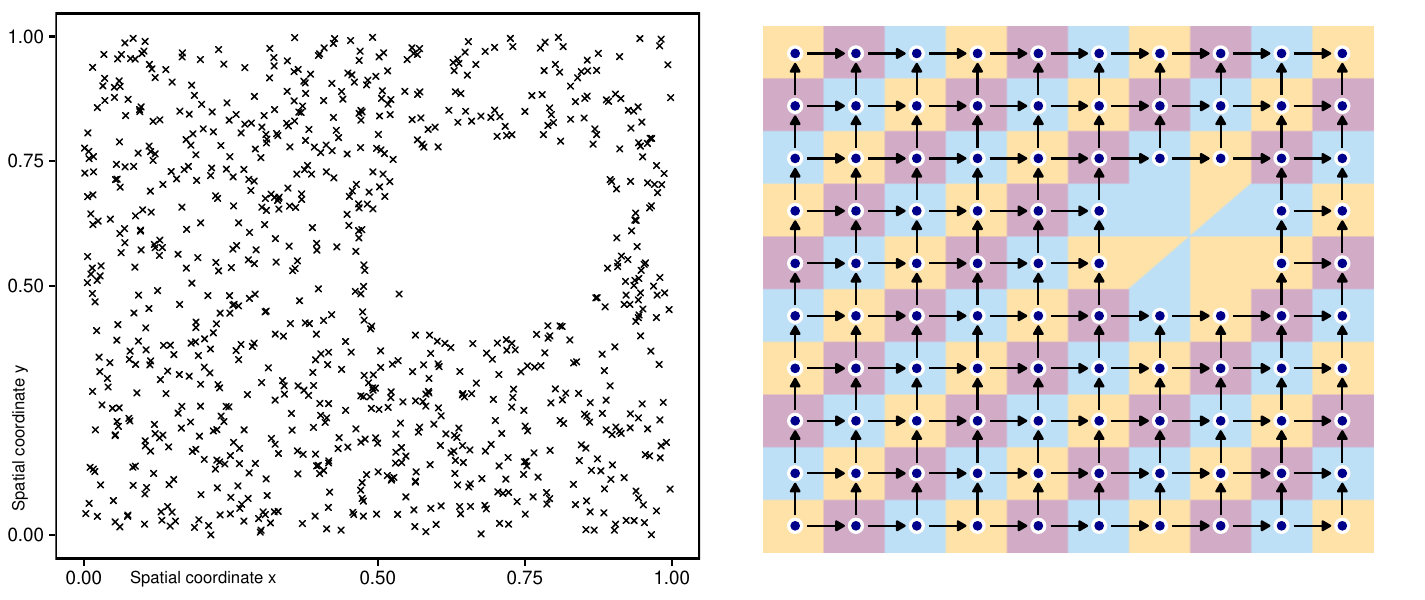}
    \caption{Visualizing cubic DAG and associated domain partitioning. Left: scatter of $\calS$ locations. Right: $\calG$ overlaid to partitions of the domain with colors matching those of $\bar{\calG}$.}
    \label{fig:explain_dagpartition}
\end{figure}
Suppose $\calD_i$, $i=1, \dots, M$ is the chosen domain tessellation. Then, the parent set $[\bl]$ for a location $\bl \in \calU$ can be as simple as letting $[\bl] = \calS_i$ if $\bl \in \calU_i = \calD_i \setminus \calS_i$. 

This general methodology can be used to construct other processes. For instance, dropping the sparsity assumptions on $\calG$, one can recover the base process itself.
\begin{proposition}
If $\calG$ is such that for all $a_{i_j} \in \bA$, $Par(a_{i}) = \{ a_{1}, \dots, a_{i-1} \}$ then $\Pi_{\calG} = \Pi$ at $\calS$, i.e. $\pi_{\calG}(\bw_{\calS}) = \pi(\bw_{\calS})$. The same result holds if $M=1$.
\end{proposition}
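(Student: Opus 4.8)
The plan is to recognize the proposition as a direct consequence of the chain rule of probability applied to the base law $\Pi$. First I would observe that, under the hypothesis $\Par(a_i) = \{a_1, \dots, a_{i-1}\}$, the parent locations of node $a_i$ are exactly $\cup_{j=1}^{i-1}\calS_j$, so that the conditioning vector $\bwpa{i}$ coincides with $(\bw_1^\top, \dots, \bw_{i-1}^\top)^\top$. Since the factors entering the definition of $\pi_\calG$ are, by construction, the conditional densities of the base process $\Pi$, each factor reduces to $\pi(\bw_i \given \bw_1, \dots, \bw_{i-1}, \btheta)$.

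Second I would substitute this into the defining factorization $\pi_\calG(\bw_\calS \given \btheta) = \prod_{a_i \in \bA} \pi(\bw_i \given \bwpa{i}, \btheta)$ and apply the telescoping identity $\prod_{i=1}^M \pi(\bw_i \given \bw_1, \dots, \bw_{i-1}, \btheta) = \pi(\bw_1, \dots, \bw_M \given \btheta)$. Because $\{\calS_i\}$ partitions $\calS$, the right-hand side is exactly $\pi(\bw_\calS \given \btheta)$, yielding $\pi_\calG = \pi$ at $\calS$ and hence $\Pi_\calG = \Pi$ there.

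For the $M=1$ case, there is a single node with empty parent set, so $\calS_1 = \calS$ and the product collapses to the single unconditional factor $\pi(\bw_1 \given \btheta) = \pi(\bw_\calS \given \btheta)$; the conclusion is then immediate.

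The only point requiring care---rather than a genuine obstacle---is justifying that the conditional densities used to assemble $\pi_\calG$ are the bona fide, consistently normalized conditionals of the base law $\Pi$, so that the telescoping product reconstructs the joint density rather than merely a product of separately specified kernels. This is exactly what the construction of $\Pi_\calG$ as the DAG-factorization of $\Pi$ guarantees, but I would state it explicitly to make the chain-rule step airtight.
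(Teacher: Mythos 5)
Your proposal is correct and follows essentially the same route as the paper's proof: identify each factor $\pi(\bw_i \given \bw_{[i]}, \btheta)$ with the base-process conditional $\pi(\bw_i \given \bw_1, \dots, \bw_{i-1}, \btheta)$ and telescope via the chain rule, with the $M=1$ case collapsing to a single unconditional factor. Your extra remark---that the factors are the bona fide conditionals of $\Pi$ by construction---is a reasonable explicit flag of what the paper leaves implicit, but it does not change the argument.
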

\begin{proof} 
Omitting $\btheta$ for clarity, $\pi_{\calG}(\bw_{\calS})$ $= \prod_{a_i \in \bA} \pi(\bw_{i} \given \bw_{[i]})$ $= \pi(\bw_1) \prod_{i=2}^M \pi(\bw_{i} \given \bw_1, \dots, \bw_{i-1}) = \pi(\bw_1, \dots, \bw_M) = \pi(\bw_{\calS})$. If $M=1$ then $\bA = \{a_1 \}$ and $\calS = \calS_1$, $\bE = \{ \emptyset \}$, and the result is immediate.
\end{proof}
Several other spatial process models based on Vecchia's approximation can be derived similarly \citep[][and others]{vecchia88, gp_predictive_process, nngp, katzfuss_jasa17, katzfuss_vecchia, spamtrees} and any of these can be used in place of $\Pi_{\calG}$. 
For example, a Vecchia approximation can be obtained by partitioning $\calS = \{\bl_1, \dots, \bl_{n_{\calS}} \}$ into sets of size $1$; the sparse DAG is then generated by finding the $m$ nearest neighbors of $\bl_i$ from the set $\{\bl_1, \dots, \bl_{i-1}\}$. Heuristic graph coloring algorithms can be used to ensure a degree of parallelization in Algorithm \ref{algorithm:meshed_posterior}.  Unlike in the cubic DAG setting, the number of colors cannot be determined in advance because it is bounded below by clique size, which depends on the order of elements in $\calS$ and their values, and $m$. A larger number of colors corresponds to smaller sampling blocks and may correspond to lower MCMC efficiency when sampling latent surfaces with strong spatial correlations.

DAG and partition choice both relate to the restrictiveness of spatial conditional independence assumptions. Relative to the same partition, adding edges to a DAG brings $\Pi_{\calG}$ closer to $\Pi$ in a Kullback-Leibler (KL) sense \citep[][Section 2]{meshedgp}, and similar reasoning informs placement of knots in recursive treed DAGs \citep{spamtrees}. Here, we consider a cubic DAG and alternative nested partitions. Proposition \ref{prop:kl_partitioning} shows that coarser partitions lead to smaller KL divergence of $\Pi_{\calG}$ from the base process $\Pi$.
\begin{proposition}\label{prop:kl_partitioning}
Consider a $2\times 1$ domain partition $\bw = (\bw_1^\top, \bw_2^\top)^\top$ and suppose $\calG_1$ is a DAG with nodes $\bA_1 = \{a_1, a_2\}$ and the edge $a_1 \to a_2$. Take a finer $3\times 1$ partition nested in the first, i.e. we write $\bw_2 = (\bw_{21}^\top, \bw_{22}^\top)^\top$, and DAG $\calG_2$ such that $\bA_2 = \{a_1, a_{21}, a_{22}\}$, edges $a_1 \to a_{21}$ and $a_{21} \to a_{22}$. Then, $KL(\pi \| \pi_{\calG_1}) \leq KL(\pi \| \pi_{\calG_2})$.
\end{proposition}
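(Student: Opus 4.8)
The plan is to reduce both divergences to explicit expressions via the DAG factorizations of Section~\ref{sec:setup}, observe that the left-hand side is exactly zero, and identify the right-hand side as a conditional mutual information, which is automatically non-negative. Throughout, $\btheta$ is suppressed and all densities are with respect to the same underlying variable $\bw$, since the finer partition is nested: $\bw = (\bw_1^\top, \bw_2^\top)^\top = (\bw_1^\top, \bw_{21}^\top, \bw_{22}^\top)^\top$.

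First I would handle $\calG_1$. Its only edge is $a_1 \to a_2$, so $a_2$'s parent set is the entire remaining block $a_1$; the factorization assumed in \eqref{eq:meshed_hierarchy} therefore gives $\pi_{\calG_1}(\bw) = \pi(\bw_1)\,\pi(\bw_2 \given \bw_1)$, which by the chain rule equals $\pi(\bw_1, \bw_2) = \pi(\bw)$ with no approximation. Hence $KL(\pi \,\|\, \pi_{\calG_1}) = 0$, and the proposition reduces to showing $KL(\pi \,\|\, \pi_{\calG_2}) \geq 0$. This already follows from Gibbs' inequality, but I would extract the sharper structural identity below, which also explains \emph{what} information the finer DAG discards.

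Next, for $\calG_2$ the edges $a_1 \to a_{21} \to a_{22}$ give $\pi_{\calG_2}(\bw) = \pi(\bw_1)\,\pi(\bw_{21}\given\bw_1)\,\pi(\bw_{22}\given\bw_{21})$, whereas the chain rule gives $\pi(\bw) = \pi(\bw_1)\,\pi(\bw_{21}\given\bw_1)\,\pi(\bw_{22}\given\bw_1,\bw_{21})$. The first two factors cancel in the log-ratio, leaving $KL(\pi \,\|\, \pi_{\calG_2}) = \mathbb{E}_{\pi}\!\left[\log\bigl(\pi(\bw_{22}\given\bw_1,\bw_{21})/\pi(\bw_{22}\given\bw_{21})\bigr)\right]$, which I would recognize as the conditional mutual information $I(\bw_{22};\bw_1 \given \bw_{21})$ under the base law $\Pi$. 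To see it is non-negative, condition on $\bw_{21}$ and write the remaining expectation over $(\bw_1,\bw_{22})$ as $\mathbb{E}_{\bw_{21}}\mathbb{E}_{\bw_1\given\bw_{21}}\bigl[KL\bigl(\pi(\cdot\given\bw_1,\bw_{21}) \,\|\, \pi(\cdot\given\bw_{21})\bigr)\bigr]$, a nested average of genuine KL divergences and hence $\geq 0$. Combining, $KL(\pi \,\|\, \pi_{\calG_1}) = 0 \leq KL(\pi \,\|\, \pi_{\calG_2})$.

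I do not expect a genuine analytic obstacle here; the entire content is bookkeeping. The one point to verify carefully is that refining $a_2$ into $(a_{21}, a_{22})$ leaves the $a_1$- and $a_{21}$-conditionals untouched, so that the telescoping log-ratio collapses to a \emph{single} discrepant factor $\pi(\bw_{22}\given\bw_1,\bw_{21})$ versus $\pi(\bw_{22}\given\bw_{21})$; together with tracking the direction of the KL, this is what makes the gap come out as exactly one conditional mutual information term rather than a sum with ambiguous sign.
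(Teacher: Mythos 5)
Your proof is correct, but it takes a genuinely different route from the paper's. The paper never uses the fact that the left-hand side vanishes: instead it rewrites the coarse model on the finer partition, noting $\pi_{\calG_1}(\bw) = \pi(\bw_1)\,\pi(\bw_{21}\given\bw_1)\,\pi(\bw_{22}\given\bw_{21},\bw_1)$, so that $\calG_1$ is equivalent to a DAG $\calG_1^*$ on the three finer nodes carrying the extra edge $a_1 \to a_{22}$; it then invokes the monotonicity result cited earlier in the text (adding edges to a DAG decreases $KL(\pi\,\|\,\pi_{\calG})$) to conclude, since $\calG_2$ is sparser than $\calG_1^*$. You instead observe that $\calG_1$ is saturated---its single edge encodes the full chain rule---so $\pi_{\calG_1}=\pi$ exactly (this is the content of the paper's first proposition, on recovering the base process), whence $KL(\pi\,\|\,\pi_{\calG_1})=0$, and you identify the right-hand side explicitly as the conditional mutual information $I(\bw_{22};\bw_1\given\bw_{21}) \geq 0$ after the telescoping cancellation, which you verify correctly. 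Your argument is sharper in this instance: it gives the exact value of the gap and explains precisely what information the finer DAG discards, rather than only an inequality. What the paper's argument buys is generality: the device of re-expressing a coarse-partition model as a denser DAG on the refined partition extends to comparisons where \emph{both} models are genuine approximations (e.g., many blocks, sparse DAGs, where neither KL term is zero), and it is the template for the non-nested partition discussion in the supplement; your zero-LHS shortcut is specific to the $2\times 1$ coarse model being exact and would not transfer to such settings without reinstating something like the edge-monotonicity lemma.
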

\begin{proof}
Since $\pi_{\calG_1} = \pi(\bw_1) \pi(\bw_2 \given \bw_1) = \pi(\bw_1) \pi(\bw_{21} \given \bw_1) \pi(\bw_{22} \given \bw_{21}, \bw_1)$, the coarser partition model can be equivalently written in terms of the finer partition using the DAG $\calG_1^*$ with nodes $\bA_1^* = \bA_2$ and the additional edge $a_1 \to a_{22}$. Then, $\calG_{2}$ is sparser than $\calG_{1}^*$ and therefore $KL(\pi \| \pi_{\calG_1}) \leq KL(\pi \| \pi_{\calG_2})$.
\end{proof}
We provide a discussion in the supplement relating to KL comparisons between non-nested partitioning schemes.

\subsection{Posterior distribution and sampling}
After introducing the set $\calT_j = \{ \bl \in \calT : y_j(\bl) \text{ is observed} \}$, we obtain $\calT_1 \cup \cdots \cup \calT_q = \calT = \{ \bl_1, \dots, \bl_n \}$ as the set of locations at which at least one outcome is observed. Then, we denote as $\barcalT = \calT \setminus \calS$ the set of non-reference locations with at least one observed outcome. The posterior distribution of (\ref{eq:meshed_hierarchy}) is
\begin{equation} \label{eq:meshedposterior}
\begin{aligned}
    &\pi( \{ \bbeta_j, \gamma_j \}_{j=1}^q, \bw_{\calS}, \bw_{\barcalT}, \btheta \given \by_{\calT})  \propto \\
    & \qquad \qquad \pi(\btheta) \pi_{\calG}(\bw_\calS \given \btheta) \pi_{\calG}(\bw_{\barcalT} \given \bw_\calS \btheta) \prod_{j=1}^q \pi(\bbeta_j, \gamma_j) \prod_{\bl \in \calT_j}  dF_j(y_j(\bl) \given w_j(\bl), \bbeta_j, \gamma_j).
\end{aligned}
\end{equation}
Sampling (\ref{eq:meshedposterior}) may proceed via Algorithm \ref{algorithm:meshed_posterior}, where we denote as $\by_i$ the vector of observed outcomes at $\calS_i$ and as $\bw_{\mb(i)}$ the vector of latent effects at the Markov blanket of $\bw_i$, which includes parents, children, coparents of $a_i \in A$, and all locations $\bl \in \calU$ such that $\bw_i$ is part of $\bw_{[\bl]}$. Algorithm \ref{algorithm:meshed_posterior} has the structure of a Gibbs sampler, as the Bayesian hierarchy is expanded to include the spatial DAG $\calG$: at each step of the MCMC loop, the goal is to sample from a full conditional distribution of one random component, conditioning on the most recent value of all the others. Upon convergence, one obtains correlated samples from the target joint posterior density. The lack of conditional conjugacy at steps 1--5, which is expected given our avoidance of simplifying assumptions on $F_j$'s and the base process $\Pi$, implies that 1--5 will require accept/reject steps in which updating parameter $\bz$ proceeds by generating a move to $\bz^*$ via a proposal distribution $q(\cdot \mid \bz)$ and then accepting such move with probability $\min \{1, \frac{ p(\bz^* \mid -) q(\bz \mid \bz^*) }{p(\bz \mid -) q(\bz^* \mid \bz)} \}$ where $p(\bz \mid -)$ is the target distribution to be sampled from. Steps \ref{alg:meshed_posterior:step1} and \ref{alg:meshed_posterior:step2} are generally not a concern in the setting on which we focus due to the independence of $(\bbeta_j, \gamma_j)$ on $(\bbeta_{i}, \gamma_i)$ for $i \neq j$ given the latent process and the fact that the number of covariates for each outcomes is typically small relative to the data size. 

\begin{algorithm}
{ \small
  \caption{Posterior sampling of spatially meshed model (\ref{eq:meshed_hierarchy}) and predictions.}\label{algorithm:meshed_posterior}
  \begin{algorithmic}[1]
  \Statex Initialize $\bbeta_j^{(0)}$ and $\gamma_j^{(0)}$ for $j=1, \dots, q$, $\bw_{\calS}^{(0)}$ $\bw_{\barcalT}^{(0)}$, and $\btheta^{(0)}$
  \Statex \textbf{for} $t \in \{1, \dots, T^*, T^* + 1, \dots, T^* + T\}$ \textbf{do} \Comment{{\footnotesize sequential MCMC loop}}
    \foritem for $j=1, \dots, q$, sample $\bbeta_j^{(t)} \given \by_\calT, \bw_{\calT}^{(t-1)}, \gamma_j^{(t-1)}$ \label{alg:meshed_posterior:step1} 
    \foritem for $j=1, \dots, q$, sample $\gamma_j^{(t)} \given \by_\calT, \bw_{\calT}^{(t-1)}, \bbeta_j^{(t)}$ \label{alg:meshed_posterior:step2} 
    \foritem sample $\btheta^{(t)} \given \bw^{(t-1)}_{\barcalT}, \bw^{(t-1)}_\calS $ \label{alg:meshed_posterior:step3} 
    \Statex \algindent \textbf{for} $c \in \text{Colors}(\calG)$ \textbf{do} \Comment{{\footnotesize sequential}}
    \Statex \algindent \algindent \textbf{for} $i \in \{ i: \text{Color}(a_i) = c \}$ \textbf{do \underline{in parallel}} 
    \foritem \algindent \algindent \algindent sample $\bw_{i}^{(t)} \given \bw_{\mb(i)}^{(t)}, \by_i, \btheta^{(t)}, \{ \bbeta_j^{(t)}, \gamma_j^{(t)} \}_{j=1}^q$ \label{alg:meshed_posterior:step4} \Comment{{\footnotesize reference sampling}}
    \Statex \algindent \textbf{for} $\bl \in \barcalT$ \textbf{do \underline{in parallel}} 
    \foritem \algindent sample $\bw(\bl)^{(t)} \given \bw_{[\bl]}^{(t-1)}, \by(\bl), \btheta^{(t)}, \{ \bbeta_j^{(t)}, \gamma_j^{(t)} \}_{j=1}^q$ \label{alg:meshed_posterior:step5} \Comment{{\footnotesize non-reference sampling}}
  \Statex Assuming convergence has been attained after $T^*$ iterations:
\Statex discard $\{ \bbeta_j^{(t)}, \gamma_j^{(t)} \}_{j=1}^q, \bw_{\calS}^{(t)}, \bw_{\barcalT}^{(t)}, \btheta^{(t)}$ for $t = 1, \dots, T^*$
\Statex \textbf{Output:} Correlated sample of size $T$ with density \[ \{ \bbeta_j^{(t)}, \gamma_j^{(t)} \}_{j=1}^q, \bw_{\calS}^{(t)}, \bw_{\barcalT}^{(t)}, \btheta^{(t)} \sim \pi_\calG(\{ \bbeta_j, \gamma_j \}_{j=1}^q, \bw_{\calS}^{(t)}, \bw_{\barcalT}^{(t)}, \btheta \mid \by_{\calT}).\]
\Statex \textbf{Predict at $\bl^* \in \calU$}: for $t=1, \dots, T$ and $j=1, \dots, q$, sample from $\pi(\bw_{\bl^*}^{(t)} \given \bw_{[\bl^*]}^{(t)}, \btheta^{(t)})$, then from $F_j(w_j(\bl^*)^{(t)}, \bbeta_j^{(t)}, \gamma_j^{(t)})$ 
\end{algorithmic} 
}
\end{algorithm}

It is also typical in these settings to choose a reference set $\calS$ which includes all locations with at least one observed outcome, implying that $\barcalT = \emptyset$; when this is the case, step \ref{alg:meshed_posterior:step5} is not performed in Algorithm \ref{algorithm:meshed_posterior}. We consider alternative strategies to restore flexibility in choosing $\calS$ in the supplementary material. 
Our sparsity assumptions encoded in $\Pi_{\calG}$ via $\calG$ facilitate computations at steps \ref{alg:meshed_posterior:step3} and \ref{alg:meshed_posterior:step4}, which would otherwise be the two major computational bottlenecks. 
Specifically, in step 3 and assuming $\barcalT = \emptyset$, a proposal $\btheta^*$ generated from a distribution $q(\cdot \mid \btheta)$ is accepted with probability $\alpha$
\begin{align}\label{eq:meshed_mhratio}
\alpha = \min \left\{ 1, \frac{ \pi(\btheta^*) 
\prod_{i=1}^M \pi(\bw_i \given \bw_{[i]}, \btheta^*) q(\btheta \mid \btheta^*) }{ \pi(\btheta) 
\prod_{i=1}^M \pi(\bw_i \given \bw_{[i]}, \btheta) q(\btheta^* \mid \btheta) } \right\},
\end{align}
whose computation is likely expensive when $\bw_i$ and $\bw_{[i]}$ are high dimensional because the base law $\Pi$ models pairwise dependence of elements of $\bw_i$ based on their spatial distance. As an example, a GP assumption on $\Pi$ leads to $\pi(\bw_i \given \bw_{[i]}, \btheta) = N(\bw_i; \bH_i, \bR_i)$ where $\bH_i = \Cov_{i, [i]} \Cov_{[i]}^{-1}$ and $\bR_i = \Cov_{i} - \bH_i \Cov_{[i], i}$, whose computation has complexity $O(\min\{ n_i^3 q^3, n_{[i]}^3 q^3 \})$. If $n_i$ or the number of parent locations $n_{[i]}$ are large, such density evaluation is computationally prohibitive. Partitioning of $\calS$ ensures that $n_i$ is small for all $i$, and the assumed small Markov blankets of nodes in $\calG$ ensure that the number of parents, and thus $n_{[i]}$, is small.

Step \ref{alg:meshed_posterior:step4} updates the latent process at each partition and is performed in two loops. The outer loop is sequential with a number of sequential steps equalling the number of colors of $\bar{\calG}$, which is small by construction. The inner loop can be performed in parallel or, equivalently, all partitions of the same color can be updated as a single block. In step \ref{alg:meshed_posterior:step4}, the lack of conditional conjugacy implies that proposals for $\bw_i^*$ for all $i=1, \dots, M$ need to be designed and then accepted with probability $\alpha_{i}$
\begin{align} \label{eq:meshed_wratio}
    \alpha_{i} =& \min \left\{1, \frac{ \pi(\bw^*_{i} \given \others) dF(\by_i \given \bw^*_i, \others) q( \bw_i\mid \bw^*_i ) }{ \pi(\bw_{i} \given \others) dF(\by_i \given \bw_i, \others) q( \bw_i^* \mid \bw_i ) } \right\},
\end{align} 
where we denote the full conditional distribution of $\bw_i$ as $\pi(\bw_i \given \others)$ and the outcome densities $dF(\by_i \given \bw^*_i, \others) = \prod_{j=1}^q \prod_{\bl_i \in \calS_i \cap \calT_j} dF_j(y_j(\bl_i) \given w_j(\bl), \bbeta_j, \gamma_j)$. Here, it is desirable to increase the size of each $\bw_i$: in proposition \ref{prop:kl_partitioning} we showed that a coarser partitioning of $\calS_i$ leads to less restrictive spatial conditional independence assumptions. Furthermore, we may expect a smaller number of larger blocks to lead to improved sampling efficiency at step \ref{alg:meshed_posterior:step4}. However, several roadblocks appear when $\bw_i$ is high dimensional. Firstly, evaluating $\pi(\bw^*_i \given \others)/\pi(\bw_i \given \others)$ becomes expensive. Secondly, it is difficult to design an efficient proposal distribution $q(\cdot \mid \bw_i)$ in high dimensions. A random-walk Metropolis (RWM) proposal proceeds by letting $\bw^*_i = \bw_i + \bg_i$ where we let $\bg_i \sim N(\bzero, \bG_i)$, but the $n_iq\times n_iq$ matrix $\bG_i$ must be specified by the user for all $i$, making a RWM proposal unlikely to achieve acceptable performance in practice if $n_i$ is large, especially if one were to take $\bG_i$ as diagonal matrices. Manual specification of $\bG_i$'s can be circumvented via Adaptive Metropolis (AM) methods, which build $\bG_i$ dynamically based on past acceptances and rejections \citep[see e.g.,][]{haario2001, andrieuthoms2008, vihola2012}, or via gradient-based schemes such as HMC, which use local information about the target distribution. However, when the dimension of $\bw_i$ is large the Markov chain will only make small steps and thus negatively impact overall efficiency and convergence regardless of the proposal scheme. The above mentioned issues worsen when $q$ is larger, because spatial meshing via partitioning and a sparse DAG only operates at the level of the spatial domain. 

Finally, while it is easier to specify smaller dimensional proposals, reducing the size of each $\bw_i$ will lead to more restrictive spatial conditional independence assumptions and poorer sampling performance due to high posterior correlations in the spatial nodes.
Therefore, proposal mechanisms for updating $\bw_i$ should (1) be inexpensive to compute and allow for the number of outcomes to increase without overly restrictive spatial conditional independence assumptions, and (2) use local target information with minimal or no user input or tuning.

We begin detailing novel computational approaches in the next section, maintaining a general perspective. We implement our proposals on Gaussian coregionalized meshed process models and detail Algorithm \ref{algorithm:lmc_meshed_posterior} with an account of computational cost in terms of flops and clock time.

\section{Gradient-based sampling of spatially meshed models}\label{sec:melange}
Algorithm \ref{algorithm:meshed_posterior} is essentially a Metropolis-within-Gibbs sampler for updating the latent effects $\bw_{\calT}$ in $M + |\barcalT|$ small dimensional substeps. The setup and tuning of efficient proposals for updating $\bw_i$ remains a challenge and we consider several update schemes below. 
Given our assumption that $\barcalT = \emptyset$, we only need to sample all $\bw_i$'s conditional on their Markov blanket (step \ref{alg:meshed_posterior:step4}). The target full conditional density, for $i=1, \dots, M$, is
\begin{equation}\label{eq:target_fullcond}
    \begin{aligned}
    & p(\bw_{i} \given \others) \propto \pi(\bw_{i} \given \bw_{[i]}, \btheta) \prod_{j \in \{i \to j\}} \pi(\bw_{j} \given \bw_{i}, \bw_{[j]\setminus \{i\}}, \btheta )
    \prod_{\substack{j=1,\dots,q,\\ \bl \in \calS_i \\ y_j(\bl) \text{ is observed}}} dF_j(y_j(\bl) \given w_j(\bl),  \bbeta_j, \gamma_j  ),
    \end{aligned}
\end{equation}
which takes the form $p(\bw_{i} \given \others) \propto \text{[$i$'s parents]} \times \text{[$i$'s children]} \times \text{[data at $i$]}$ and where the last term is a product of  one-dimensional densities due to conditional independence of the outcomes given the latent process. The update of $\bw_i$ proceeds by proposing a move $\bw_i \to \bw_i^*$ using density $q(\cdot \given \bw_i)$; then, $\bw_i^*$ is accepted with probability $\min\{1, \alpha \}$ where $\alpha = \frac{p(\bw_i^* \given \others) q(\bw_i \given \bw_i^*)}{p(\bw_i \given \others) q(\bw_i^* \given \bw_i)}$. We consider gradient-based update schemes that are accessible due to the sparsity of $\calG$ and the low dimensional terms in (\ref{eq:target_fullcond}).

\subsection{Langevin methods for meshed models}
Updating $\bw_{\calS}$ in spatial models via a Metropolis-adjusted Langevin algorithm proceeds in general by proposing a move to $\bw_i^*$ for each $i=1, \dots, M$ via
\begin{equation} \label{eq:precondmala}
    \begin{aligned}
    q(\bw_i^* \mid \bw_i) &= N\left(\bw_i + \varepsilon_i^2 \bM \nabla_{\bw_i} \log p(\bw_i \mid \others)/2, \varepsilon_i^2 \bM \right),\\
    \text{i.e.} \quad
    \bw_i^* &= \bw_i + \frac{\varepsilon_i^2}{2} \bM \nabla_{\bw_i}  \log p(\bw_i \mid \others) + \varepsilon_i \bM^{\frac{1}{2}} \bu,
    \end{aligned}
\end{equation}
where $\bu \sim N(\bzero, I_{n_i})$, $I_{n_i}$ is the identity matrix of dimension $n_i$, $\nabla_{\bw_i} p(\bw_i \mid \others)$ denotes the gradient of the full conditional log-density $\log p(\bw_i \mid \others)$ with respect to $\bw_i$, and $\varepsilon_i$ is a step size specific to node $i$ which can be chosen adaptively via dual averaging \citep[see, e.g., the discussion in][]{nuts}. 
With (\ref{eq:target_fullcond}) as the target, let $\bbf_i$ be the $n_i q\times 1$ vector that stacks $n_i$ blocks of size $q\times 1$; each of the $n_i$ blocks has $\frac{\delta}{ \delta w_j(\bl)} \log dF(y_j(\bl) \given w_j(\bl), \bbeta_j, \gamma_j)$ as its $j$th element, for $\bl \in \calS_i$, and zeros if $y_j(\bl)$ is unobserved. Then, we obtain
\begin{equation} \label{eq:gradient}
    \begin{aligned}
    \nabla_{\bw_i} \log p(\bw_i \mid \others) &= \bbf_i + \frac{\delta}{ \delta \bw_i} \log p(\bw_{i} \given \bw_{[i]}, \btheta) + \sum_{j \to \{i \to j \}}\frac{\delta}{ \delta \bw_i} \log p(\bw_{j} \given \bw_{i}, \bw_{[j]\setminus \{i\}}, \btheta ).
    \end{aligned}
\end{equation}
The matrix $\bM$ in (\ref{eq:precondmala}) is a preconditioner also referred to as the mass matrix \citep{hmc_neal}. In the simplest setting, one sets $\bM = I_{n_i}$ to obtain a MALA update \citep{robertstweedie96}. If we assume that gradients can be computed with linear cost, MALA iterations run very cheaply in $O(qn_i)$ flops. 
However, we may conjecture that taking into account the geometry of the target beyond its gradient might be advantageous when seeking to formulate efficient updates. Complex update schemes that achieve this goal may operate on the Riemannian manifold \citep{girolamicalderhead11}, but lead to an increase in the computational burden relative to simpler schemes. A special case of manifold MALA corresponding to relatively small added complexity uses a position-dependent preconditioner $\bM_{\bw_i} = \bG_{\bw_i} = \left(-E \left[ \frac{\delta^2}{ \delta \bw_i^2} \log p(\bw_i \given \others ) \right] \right)^{-1}$. Let $\bF_i$ be the $n_i q\times n_i q$ diagonal matrix whose diagonal $\text{diag}(\bF_i)$ is a $n_i q\times 1$ vector that stacks $n_i$ blocks of size $q\times 1$; each of the $n_i$ blocks has $-E\left[ \frac{\delta^2}{ \delta^2 w_j(\bl)} \log dF(y_j(\bl) \given w_j(\bl), \bbeta_j, \gamma_j) \right]$ as its $j$th element, for $\bl \in \calS_i$, and zeros if $y_j(\bl)$ is unobserved. For a target taking the form of (\ref{eq:target_fullcond}) we find
\begin{equation} \label{eq:gmatrix}
    \begin{aligned}
    \bG_{\bw_i}^{-1} &= \bF_i - \frac{\delta^2}{ \delta \bw_i^2} \log p(\bw_{i} \given \bw_{[i]}, \btheta) - \sum_{j \to \{i \to j \}}\frac{\delta^2}{ \delta \bw_i^2} \log p(\bw_{j} \given \bw_{i}, \bw_{[j]\setminus \{i\}}, \btheta );
    \end{aligned}
\end{equation}
this choice leads to an interpretation of (\ref{eq:precondmala}) as a \textit{simplified} manifold MALA proposal (SM-MALA) in which the curvature of the target $p(\bw_i \given \others)$ is assumed constant. We make a connection between a modified SM-MALA update and the Gibbs sampler available when the latent process and all outcomes are Gaussian.
\begin{proposition}\label{prop:smmala_is_gibbs}
In the hierarchical model $\balpha \sim N_k(\balpha; \bm_{\alpha}, \bV_{\alpha})$, $\bx \given \balpha, S \sim N_n(\bx; A\balpha, S)$, consider the following proposal for updating $\balpha \given \bx, S$:
\[ \balpha^* = \balpha + \frac{\varepsilon_1^2}{2} \bG_{\balpha} \nabla_{\balpha}  \log p(\balpha \mid \others) + \varepsilon_2 \bG_{\balpha}^{\frac{1}{2}} \bu, \]
where $\bu \sim N_n(0, I_{n})$, and we set $\varepsilon_1 = \sqrt{2}$, $\varepsilon_2 = 1$. Then, $q(\balpha^* \given \balpha) = p(\balpha^* \given \bx, S)$, i.e. this modified SM-MALA proposal leads to always accepted Gibbs updates.
\end{proposition}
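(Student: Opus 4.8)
The plan is to exploit the conjugacy of the model: the full conditional $p(\balpha \mid \bx, S)$ is Gaussian, and I will show that the proposed move reproduces its mean and covariance exactly. Since the resulting proposal is then an independence sampler whose density equals the target, the Metropolis--Hastings ratio is identically one and every draw is accepted, i.e. the update is an exact Gibbs step.

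First I would write the log full conditional up to an additive constant,
\[ \log p(\balpha \mid \bx, S) = c - \tfrac{1}{2}(\balpha - \bm_{\alpha})^\top \bV_{\alpha}^{-1}(\balpha - \bm_{\alpha}) - \tfrac{1}{2}(\bx - A\balpha)^\top S^{-1}(\bx - A\balpha), \]
and differentiate. The gradient $\nabla_{\balpha}\log p(\balpha\mid\bx,S) = -\bV_{\alpha}^{-1}(\balpha - \bm_{\alpha}) + A^\top S^{-1}(\bx - A\balpha)$ is affine in $\balpha$, and the negative Hessian is the constant matrix $\bV_{\alpha}^{-1} + A^\top S^{-1}A$. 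Because the curvature is position-independent, the expected-information preconditioner of (\ref{eq:gmatrix}) specializes to $\bG_{\balpha} = (\bV_{\alpha}^{-1} + A^\top S^{-1}A)^{-1}$, which is exactly the posterior covariance, with posterior mean $\bmu = \bG_{\balpha}(\bV_{\alpha}^{-1}\bm_{\alpha} + A^\top S^{-1}\bx)$.

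The crucial step is substituting $\varepsilon_1 = \sqrt{2}$, so that the drift coefficient $\varepsilon_1^2/2$ equals one and the deterministic part of the proposal reduces to $\balpha + \bG_{\balpha}\nabla_{\balpha}\log p(\balpha\mid\bx,S)$. Collecting the two terms of the gradient that are linear in $\balpha$ yields precisely $-\bG_{\balpha}^{-1}\balpha$, so premultiplying by $\bG_{\balpha}$ cancels the leading $\balpha$ and the drift collapses to $\bmu$, independent of the current state. This is the key cancellation: a single preconditioned gradient step of length $\sqrt{2}$ lands a Gaussian target on its mode, which coincides with its mean. With $\varepsilon_2 = 1$ the proposal covariance is $\bG_{\balpha}$, matching the posterior covariance, so $q(\balpha^* \mid \balpha) = N(\balpha^*; \bmu, \bG_{\balpha}) = p(\balpha^* \mid \bx, S)$.

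Finally, since the proposal density equals the target and does not depend on $\balpha$, the acceptance ratio $\alpha = \frac{p(\balpha^*\mid -)\,q(\balpha\mid\balpha^*)}{p(\balpha\mid -)\,q(\balpha^*\mid\balpha)}$ is identically one, so the move is always accepted and constitutes an exact Gibbs draw. There is no substantive obstacle beyond bookkeeping; the one point deserving care is verifying that the SM-MALA preconditioner in the Gaussian case is genuinely the full posterior precision (not merely an approximation of it), since it is this identity that makes the Newton-type step with $\varepsilon_1 = \sqrt{2}$ exact rather than approximate.
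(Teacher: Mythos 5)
Your proof is correct and follows essentially the same route as the paper's: compute the affine gradient of the Gaussian full conditional, identify $\bG_{\balpha} = (A^\top S^{-1}A + \bV_{\alpha}^{-1})^{-1}$ as the posterior covariance, and observe that $\varepsilon_1 = \sqrt{2}$ makes the drift coefficient equal to one so the current state cancels and the proposal becomes $N(\bmu, \bG_{\balpha}) = p(\balpha^* \given \bx, S)$. Your explicit remark that the constant curvature makes the expected-information preconditioner exact (rather than approximate) is a nice clarification but does not change the argument.
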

The proof is in the supplement, Section \ref{appx:melange_proof}. A corollary of this proposition in the context of spatially meshed models is that when $F_j(y_j(\bl); w_j(\bl), \bbeta_j, \gamma_j )  = N(y_j(\bl); w_j(\bl) + \bx_j(\bl)^\top \bbeta_j, \gamma_j^2)$ for all $j=1, \dots, q$, an algorithm based on the modified SM-MALA proposal with unequal step sizes for updating $\bw_i$ is a Gibbs sampler. In other words, SM-MALA updates are related to a generalization of Gibbs samplers that have been shown to scale to big spatial data analyses \citep{nngp, nngp_aoas, nngp_algos, meshedgp, spamtrees, grips}. With non-Gaussian outcomes, the probability of accepting the proposed $\bw_i^*$ depends on the ratio $q(\bw_i \given \bw_i^*)/q(\bw_i^* \given \bw_i)$. Computing this ratio requires $O(2q^3n_i^3)$ floating point operations since the dimension of $\bw_i$ and $\bw_i^*$ is $qn_i$ and one needs to compute both $\bG_{\bw_i}^{-\frac{1}{2}}$ and $\bG_{\bw_i^*}^{-\frac{1}{2}}$, e.g. via Cholesky or QR factorizations. For these reasons, SM-MALA proposals may lead to unsatisfactory performance with larger $q$ due to their steeper compute costs relative to simpler MALA updates. We propose a novel adaptive algorithm below to overcome these issues.

\subsection{Simplified Manifold Preconditioner Adaptation}
Using a dense, constant preconditioner $\bM$ in (\ref{eq:precondmala}) rather than the identity matrix leads to a computational cost of $O(q^2n_i^2)$ per MCMC iteration; this cost is larger than MALA updates, but ``good'' choices of $\bM$ might improve overall efficiency. Relative to position-dependent SM-MALA updates, a constant $\bM$ might be convenient if $q$ and/or $n_i$ are large, but it is unclear how $\bM$ can be fixed from the outset in the context of Algorithm \ref{algorithm:meshed_posterior}. In the context of model (\ref{eq:meshed_hierarchy}), we cannot take $\bM^{-1}$ as the expected Fisher information evaluated at the mode  due to the high dimensionality of the latent variables and their dependence on unknown hyperparameters. Adaptive methods may build a preconditioner (or its inverse) by starting from an initial guess $\bM_{(0)}$, then applying smaller and smaller changes to $\bM_{(m)}$ at iteration $m$ to get $\bM_{(m+1)}$. Past values of $\bw_i$ can be used to build a preconditioner: see, e.g., \cite{haario2001}, \cite{andrieuthoms2008}, \cite{marshallroberts12} for adaptive Metropolis, and \cite{atchade06} for MALA. These methods are not immediately advantageous because adaptation using past acceptances may be slow and lead to poor performance, especially in the within-Gibbs contexts in which we operate. Because $O(q^3 n_i^3)$ updates must be performed each time $\bM_{(m)}$ or its inverse are updated due to the need to compute a matrix square root (e.g., Cholesky), slow adaptation methods become increasingly unappealing compared to simpler methods, like MALA, or methods that systematically construct  a position-dependent preconditioner, like RM-MALA.

\begin{algorithm}
  \caption{The $m$th iteration of Simplified Manifold Preconditioner Adaptation.}\label{alg:simpa}
{ \small 
  \begin{algorithmic}[1]
  \Statex \textbf{Setup and inputs}: $d$-dimensional random vector $X \in \mathcal{X} \subseteq \Re^d$, $X \sim P$ whose density $p(\cdot)>0$ is continuous with respect to the Lebesgue measure, assume $K$ is a compact subset of $\mathcal{X}$, 
  fix the constants $D \gg 0$, $\kappa > 0$, $0<T^{\text{adapt}}<\infty$, step size $0 < \varepsilon < D$, denote $\bg_{\bx} = \nabla_{\bx}  \log p(\bx)$, $\tilde{\bg}_{\bx} = \bg_{\bx} \cdot \min \left\{ \frac{D}{\max_i \{ \bg_{\bx}[i] \} }, 1 \right\}$, $\bG_{\bx}^{-1} = -\frac{\delta^2}{ \delta \bx^2} \log p(\bx ) $, $\tilde{\bG}_{\bx}^{-1} = \bG_{\bx}^{-1} \cdot \min \left\{ \frac{D}{\max_i \{ \bG_{\bx}[i,i] \} }, 1 \right\} $, let the sequence $(\gamma_m, m \in \mathbb{N})$ be such that $\gamma_m >0$, $\gamma_m \downarrow 0$.\vspace{.2cm}
  \Statex \textbf{function} SiMPA:
\foritem Sample $z \sim U(0,1)$, $v \sim U(0,1)$, $\bu \sim N(\bzero, I_d)$.
\foritem Let $\bmu_{\text{(new)}} = \bx_{(m-1)} + \frac{\varepsilon^2}{2} \bM_{(m-1)}\tilde{\bg}_{\bx_{(m-1)}} $ and propose $\bx_{\text{(new)}} = \bmu_{\text{(new)}} + \varepsilon \bM_{(m-1)}^{\frac{1}{2}} \bu$.
  \foritem Let $\bmu_{\text{(back)}} = \bx_{\text{(new)}} + \frac{\varepsilon^2}{2} \bM_{(m-1)}\tilde{\bg}_{\bx_{\text{(new)}}} $.
  \foritem  Compute $$\alpha = \frac{ p(\bx_{\text{(new)}}) }{ p(\bx_{(m-1)})} \cdot \frac{N(\bx_{(m-1)}; \bmu_{\text{(back)}}, \varepsilon^2 \bM_{(m-1)}) }{ N(\bx_{\text{(new)}}; \bmu_{\text{(new)}}, \varepsilon^2 \bM_{(m-1)})}. $$
  \State \algindent \textbf{ if} $\alpha < v$ \textbf{and} $\|\bx_{(new)} - \bx_{(m)}\|<D$, \Comment{{\footnotesize proposal accepted }}
  \State \algindent \algindent Set $\bx_{(m)} = \bx_{\text{(new)}}.$ 
  \State \algindent \textbf{ else:} set $\bx_{(m)} = \bx_{(m-1)}$. \Comment{{\footnotesize proposal rejected }}
  \vspace{.1cm}
  \Statex \algindent \textbf{ if} $z < \gamma_m$ and $(\bx_{(m)} \in K$ or $m < T^{\text{adapt}})$: \Comment{{\footnotesize adapting}}\label{alg:simpa:s1}
  \foritem \algindent Set $\bM_{(m)}^{-1} = \bM_{(m-1)}^{-1} + \kappa ( \tilde{\bG}_{\bx_{(m)}}^{-1} - \bM_{(m-1)}^{-1} )$ and compute $\bM_{(m)}^{\frac{1}{2}}$. \label{alg:simpa:s8}
  \Statex \algindent \textbf{ else}: \Comment{{\footnotesize not adapting}} 
  \foritem \algindent Set $\bM_{(m)} = \bM_{(m-1)}$. \label{alg:simpa:s9}
\end{algorithmic}
}
\end{algorithm}

To resolve these issues, we outline our Simplified Manifold Preconditioner Adaptation (SiMPA) as Algorithm \ref{alg:simpa}. We present SiMPA in general terms as it operates independently of spatial meshing. The main feature of our algorithm is that it uses the negative Hessian matrix $\bG^{-1}_{\bx}$ to adaptively build a (position-independent) preconditioner. In spatially meshed models and corresponding within-Gibbs posterior sampling algorithms, $\bG_{\bx}$ can be computed easily using (\ref{eq:gmatrix}), also see Appendix \ref{sec:langevin_coreg}. 
Comparatively, an adaptive algorithm similar to \cite{atchade06}, which we label YA-MALA, replaces step \ref{alg:simpa:s8} in Algorithm \ref{alg:simpa} with $\overline{\bx}_{(m)} = \overline{\bx}_{(m-1)} + \kappa (\bx_{(m)} - \overline{\bx}_{(m-1)})$ and $\bM_{(m)} = \bM_{(m-1)} + \kappa (\bGamma_m - \bM_{(m-1)})$, where $\bGamma_m = (\bx_{(m)} - \overline{\bx}_{(m-1)}) (\bx_{(m)} - \overline{\bx}_{(m-1)})^\top + 10^{-6} I_d$ and leaves everything else the same. We show the benefits of adapting via SiMPA compared to YA-MALA in Section \ref{sec:spammix}.

In SiMPA, we reduce the number of iterations with $O(q^3 n_i^3)$ complexity by applying fixed changes to $\bM_{(m)}$ with probability $\gamma_m \to 0$ as $m\to \infty$. As a consequence, the (expected) cost at iteration $m$ is $O(q^2 n_i^2 + \gamma_m q^3 n_i^3)$ rather than $O(q^3 n_i^3)$. 
In the context of spatially meshed models, $n_i$ is small, and the quadratic cost on $q$ can be further reduced via coregionalization (we do so in Section \ref{sec:gaussianlmc}). In our applications, we use $\gamma_m = \mathbf{1}_{(m \le \overline{T})} + \mathbf{1}_{(m > \overline{T})} (m-\overline{T})^{-a}$, where $\mathbf{1}_{A}$ is the indicator for the occurrence of $A$, $\overline{T} < \infty$ is the number of initial iterations during which adaptation always occurs, and $a>0$ is the rate at which the probability of adaptation decays after $\overline{T}$. Small values of the parameter $\kappa$ lead to $\bM_{(m)}$ having long memory of the past. 

We conservatively choose $\overline{T}=500$, $a=1/3$, $\kappa=1/100$ as these values allowed ample burn-in time for all spatial nodes in all our applications. Preliminary analyses with $\overline{T}=1000$ led to an increase in compute time with no advantage in estimation, prediction, or efficiency. On the other hand, $\overline{T}=100$ or $a=1/2$ resulted in lower compute times at the cost of overall performance: letting $\gamma_m$ decay too quickly may lead to an inability to capture the appropriate geometry of the target density.

Because its update does not result in any increase in computational complexity, the step size $\varepsilon$ can be changed at each step, for example via dual averaging (DA) as in Algorithms 5 and 6 of \cite{nuts}. We use the same DA scheme when comparing SiMPA to other gradient-based sampling methods. DA involves updates to $\varepsilon$ at each iteration $m < T^{\text{adapt}}$ and none afterwards. Because $T^{\text{adapt}} < \infty$, DA has no impact on the eventual convergence of the chain.
Finally, the constant $D$ is used to limit the jump size of the proposals as well as bound the index set for adaptation. We need $D$ as well as additional conditions on the algorithmic behavior near the boundary of $K$ to satisfy the containment or bounded convergence condition \citep{robertsrosenthal07, robertsrosenthal09examples} that allows SiMPA to provably converge in total variation distance to the target distribution $P$ even when the state space is not compact. Intuitively, outside of the compact $K$ we stop adapting after iteration $T^{\text{adapt}}$, whereas we perform an infinite (diminishing) adaptation inside it, in order to satisfy the conditions of Theorem 21 of \cite{craiuetal15}. 
\begin{proposition} \label{prop:simpa_converges} 
Suppose $\pi$ is everywhere non-zero and twice differentiable so that $\bg_{\bx}$ and $\bG_{\bx}$ are well defined. Let $\varepsilon >0$, $K \subset \Re^d$, $D > 0$. Additionally assume that if $\bx_{(m)} \in  K$ with $\text{dist}(\bx_{(m)}, K^c) = u$ with $0\leq u \leq 1$ then the proposal is changed to $\bx_{(\text{new})} \sim N( \bx_{(m)} + \frac{\varepsilon^2}{2} \bM_{(T^\text{adapt})} \tilde{\bg}_{\bx_{(m)}}, \varepsilon^2 \bM_{(T^\text{adapt})})$. Then, Algorithm \ref{alg:simpa} converges in distribution to $P$.
\end{proposition}
The proof is in the supplement, Section \ref{appx:melange_proof}. The containment condition would hold without introducing $K$ and without specifying the behavior of the algorithm near and outside $K$ by assuming that $\mathcal{X}$ itself is compact, which is in principle a restrictive assumption. In practice, $K$ can be fixed large enough so that the chain essentially never leaves it. The SiMPA preconditioner will not in general correspond to the negative Hessian computed at the mode of the target density; rather, by a law of large numbers argument it will converge to the expectation of the negative Hessian of the target density.

\section{Gaussian coregionalization of multi-type outcomes} \label{sec:gaussianlmc}
We have so far outlined general methods and sampling algorithms for big data Bayesian models on multivariate multi-type outcomes. In this section, we remain agnostic on the outcome distributions, but specify a Gaussian model of latent dependency based on coregionalization.
GPs are a convenient and common modeling option for characterizing latent cross-variability. We now assume the base process law $\Pi_{\btheta}$ is a $q$-variate GP, i.e. $\bw(\bl) \sim GP(\bzero, \Cov_{\btheta}(\cdot,\cdot))$. The matrix-valued \emph{cross-covariance} function $\Cov_{\btheta}(\cdot,\cdot)$ is parametrized by $\btheta$ and is such that $\Cov_{\btheta}(\cdot, \cdot) = [\text{cov}\{w_i(\bl),w_j(\bl')\}]_{i,j = 1}^q$, the $q\times q$ matrix with $(i,j)$th element given by the covariance between $w_i(\bl)$ and $w_j(\bl')$. $\Cov_{\btheta}(\cdot, \cdot)$ must be such that $\Cov_{\btheta}(\bl,\bl') = \Cov_{\btheta}(\bl',\bl)^{\top}$ and $\sum_{i=1}^n\sum_{j=1}^n \bz_i^{\top}\Cov_{\btheta}(\bl_i,\bl_j)\bz_j > 0$ for any integer $n$ and any finite collection of points $\{\bl_1,\bl_2,\ldots,\bl_n\}$ and for all $\bz_i \in \Re^{q}\setminus \{\bolds{0}\}$ \citep[see, e.g.,][]{genton_ccov}. 

\subsection{Coregionalized cross-covariance functions}
The challenges in constructing valid cross-covariance functions can be overcome by considering a linear model of coregionalization \citep[LMC;][]{matheron82, wackernagel03, schmidtgelfand}. 
A stationary LMC builds $q$-variate processes via linear combinations of $k$ univariate processes, i.e. $\bw(\bl) = \sum_{h=1}^k \blambda_h v_h(\bl) = \bLambda \bv(\bl)$, where $\bLambda = [\blambda_1,\ldots,\blambda_k]$ is a $q \times k$ full (column) rank matrix with $(i,j)$th entry $\lambda_{ij}$, whose $i$th row is denoted $\blambda_{[i,:]}$, and each $v_j(\bl)$ is a univariate spatial process with correlation function $\rho_j(\bl, \bl') = \rho(\bl, \bl'; \bphi_j)$, and therefore $\btheta = ( \text{vec}(\bLambda)^\top, \bPhi^\top )^\top$ where $\bPhi = (\bphi_1^\top, \dots, \bphi_k^\top)^\top$. 
Independence across the $k\leq q$ components of $\bv(\bl)$ implies $\text{cov}\{v_j(\bl), v_h(\bl') \} = 0$ whenever $h\neq j$, and therefore $\bv(\bl)$ is a multivariate process with diagonal cross-correlation $\brho(\bl, \bl'; \bPhi)$. As a consequence, the $q$-variate $\bw(\cdot)$ process cross-covariance is defined as $\Cov_{\btheta}(\bl, \bl') = \bLambda \brho(\bl, \bl'; \bPhi) \bLambda^\top = \sum_{h=1}^k \blambda_h \blambda_h^{\top} \rho(\bl, \bl', \bphi_h)$. If $\| \bl - \bl' \| = 0$, then $\Cov_{\btheta}(\bzero) = \bLambda \brho(\bzero; \bPhi) \bLambda^\top = \bLambda \bLambda^\top$ since $\brho(\bzero; \bPhi) = \bI_k$. Therefore, when $k=q$, $\bLambda$ is identifiable e.g. as a lower-triangular matrix with positive diagonal entries corresponding to the Cholesky factorization of $\Cov_{\btheta}(\bzero)$ \citep[see e.g.,][and references therein for Bayesian LMC models]{finley2008,zhangbanerjee20}. When $k<q$, a coregionalization model is interpretable as a latent spatial factor model. For a set $\calL = \{\bl_1, \dots, \bl_n \}$ of locations, we let $\brho_{\bPhi, \calL}$ be the $kn \times kn$ block-matrix whose $(i,j)$ block is $\brho(\bl_i, \bl_j, \bphi)$--which has zero off-diagonal elements--and thus $\Cov_{\btheta, \calL} = (I_n \otimes \bLambda) \brho_{\bPhi, \calL} (I_n \otimes \bLambda^\top)$. Notice that the $qn \times 1$ vector $\bw_\calL$ can be represented by a $n\times q$ matrix $\bW$ whose $j$th column includes realizations of the $j$th margin of the $q$-variate process. Assuming a GP, we find $\bw_{\calL} = \text{vec}(\bW^\top) \sim N(\bzero, \Cov_{\btheta, \calL})$. We can also equivalently represent process realizations by outcome rather than by location: if we let $\tilde{\bw}_{\calL} = \text{vec}(\bW)$ then $\tilde{\bw}_{\calL} \sim N(\bzero, Q \Cov_{\btheta, \calL} Q^\top )$ where $Q$ is a permutation matrix that appropriately reorders rows of $\Cov_{\btheta, \calL}$ (and thus, $Q^\top$ reorders its columns). We can write $Q \Cov_{\btheta, \calL} Q^\top = \tilde{\Cov}_{\btheta, \calL} = (\bLambda^\top \otimes I_n) \tilde{\brho}_{\bPhi, \calL} (\bLambda \otimes I_n) = (\bLambda^\top \otimes I_n) J \brho_{\bPhi, \calL} J^\top (\bLambda \otimes I_n)$ where $J$ is a $nk \times nk$ permutation matrix that operates similarly to $Q$ but on the $k$ components of the LMC. Here, $ \tilde{\brho}_{\bPhi, \calL}$ is a block-diagonal matrix whose $j$th diagonal block is $\rho_{j, \calL}$, i.e. the $j$th LMC component correlation matrix at all locations. This latter representation clarifies that prior independence (i.e., a block diagonal $ \tilde{\brho}_{\bPhi, \calL}$) does not translate to independence along the $q$ outcome margins once the loadings $\bLambda$ are taken into account (in fact, $\Cov_{\btheta, \calL}$ is dense).

\subsection{Latent GP hierarchical model}
In practice, LMCs are advantageous in allowing one to represent dependence across $q$ outcomes via $k\ll q$ latent spatial factors.
We build a multi-type outcome spatially meshed model by specifying $\Pi$ in (\ref{eq:meshed_hierarchy}) as a latent Gaussian LMC model with MGP factors
\begin{equation} \label{eq:latent_gaussian_lmc}
    \begin{aligned}
    y_j(\bl) \given \eta_j(\bl), \gamma_j \sim F_j&(\eta_j(\bl), \gamma_j), \\
    \eta_j(\bl) = \bx_j(\bl)^\top \bbeta_j + \blambda_{[j,:]} \bv(\bl), & \qquad \qquad v_h(\cdot) \sim MGP_{\calG}(\bzero, \rho_h(\cdot, \cdot)), h = 1, \dots, k
    \end{aligned}
\end{equation}
whose posterior distribution is
\begin{equation} \label{eq:lmcposterior}
\begin{aligned}
    \pi( \{ \bbeta_j^{(t)}, \gamma_j^{(t)} \}_{j=1}^q, \bv_{\calT}, &\bPhi, \bLambda \given \by_{\calT}) \propto \pi(\bPhi) \prod_{h=1}^k \prod_{i=1}^M \pi(\bv_{h, i} \given \bv_{h, [i]} \bphi_h) \cdot\\ &\qquad\qquad\qquad  \prod_{j=1}^q \left( \pi(\bbeta_j, \gamma_j) 
    \cdot \prod_{\bl \in \calT_j}  dF_j(y_j(\bl) \given v_j(\bl), \blambda_{[j,:]}, \bbeta_j, \gamma_j) \right).
\end{aligned}
\end{equation}
The LMC assumption on $\bw(\cdot)$ using MGP margins leads to computational simplifications in evaluating the density of the latent factors. For each of the $M$ partitions, we now have a product of $k$ independent Gaussian densities of dimension $n_i$ rather than a single density of dimension $qn_i$. 

\subsection{Spatial meshing of Gaussian LMCs}\label{sec:meshed_lmc}
When seeking to achieve scalability of LMCs to large scale data via spatial meshing, it is unclear whether one should act directly on the $q$-variate spatial process $\bw(\cdot)$ obtained via coregionalization, or independently on each of the $k$ LMC component processes. We now show that the two routes are equivalent with MGPs if a single DAG and a single domain partitioning scheme are used. 
\begin{algorithm}
{ \small
  \caption{Posterior sampling and prediction of LMC model (\ref{eq:meshed_hierarchy}) with MGP priors.}\label{algorithm:lmc_meshed_posterior}
  \begin{algorithmic}[1]
  \Statex Initialize $\bbeta_j^{(0)}$, $\bLambda^{(0)}$ and $\gamma_j^{(0)}$ for $j=1, \dots, q$, $\bv_{\calS}^{(0)}$, and $\bPhi^{(0)}$
  \Statex \textbf{for} $t \in \{1, \dots, T^*, T^* + 1, \dots, T^* + T\}$ \textbf{do} \Comment{{\footnotesize sequential MCMC loop}}
    \Statex \algindent \textbf{for} $j=1, \dots, q$, \textbf{do \underline{in parallel}} 
    \foritem \algindent use SiMPA to update $\bbeta_j^{(t)}, \blambda_{[j,:]}^{(t)} \given \by_\calT, \bv_{\calS}^{(t-1)}, \gamma_j^{(t-1)}$ \label{alg:lmc_meshed_posterior:step1} \Comment{{\footnotesize $O(nq(p+k)^2$)}}
    \Statex \algindent \textbf{for} $j=1, \dots, q$, \textbf{do \underline{in parallel}} 
    \foritem \algindent use Metropolis-Hastings to update $\gamma_j^{(t)} \given \by_\calT, \bv_{\calS}^{(t-1)}, \bbeta_j^{(t)}, \blambda_{[j,:]}^{(t)} $ \label{alg:lmc_meshed_posterior:step2} \Comment{{\footnotesize $O(nq)$}}
    \foritem use Metropolis-Hastings to update $\bPhi^{(t)} \given \bv^{(t-1)}_\calS $ \label{alg:lmc_meshed_posterior:step3} \Comment{{\footnotesize $O(n k d^3  m^2)$ }}
    \Statex \algindent \textbf{for} $c \in \text{Colors}(\calG)$ \textbf{do} \Comment{{\footnotesize sequential}}
    \Statex \algindent \algindent \textbf{for} $i \in \{ i: \text{Color}(a_i) = c \}$ \textbf{do \underline{in parallel}} 
    \foritem \algindent \algindent use SiMPA to update $\bv_{i}^{(t)} \given \bv_{\mb(i)}^{(t)}, \by_i, \bLambda^{(t)}, \bPhi^{(t)}, \{ \bbeta_j^{(t)}, \gamma_j^{(t)} \}_{j=1}^q$ \label{alg:lmc_meshed_posterior:step4} \Comment{{\footnotesize $O(nmk^2)$}}
  \Statex Assuming convergence has been attained after $T^*$ iterations:
\Statex discard $\{ \bbeta_j^{(t)}, \gamma_j^{(t)} \}_{j=1}^q, \bv_{\calS}^{(t)}, \bLambda^{(t)}, \bPhi^{(t)}$ for $t = 1, \dots, T^*$
\Statex \textbf{Output:} Correlated sample of size $T$ with density \[ \{ \bbeta_j^{(t)}, \gamma_j^{(t)} \}_{j=1}^q, \bv_{\calS}^{(t)}, \bLambda^{(t)}, \bPhi^{(t)} \sim \pi_\calG(\{ \bbeta_j, \gamma_j \}_{j=1}^q, \bv_{\calS}^{(t)}, \bLambda, \bPhi, \mid \by_{\calT}).\]
\Statex \textbf{Predict at $\bl^* \in \calU$}: for $t=1, \dots, T$ and $j=1, \dots, q$, sample from $\pi(\bv_{\bl^*}^{(t)} \given \bv_{[\bl^*]}^{(t)}, \bPhi^{(t)})$, then from $F_j(w_j(\bl^*)^{(t)}, \bbeta_j^{(t)}, \blambda_{[j,:]}^{(t)}, \gamma_j^{(t)})$ 
\end{algorithmic} }
\end{algorithm}

If the base process $\Pi$ is a $q$-variate coregionalized GP, then for $i=1, \dots, M$ the conditional distributions are $\pi(\bw_i \given \bw_{[i]}, \btheta) = N( \bw_i; \bH_i \bw_{[i]}, \bR_i )$ where $\bH_i = \Cov_{i, [i]} \Cov_{[i]}^{-1}$, $\bR_{i} = \Cov_i - \bH_i \Cov_{[i], i}$, and $\Cov(\bl, \bl') = \bLambda \brho(\bl, \bl') \bLambda^\top$ (we omit the $\btheta$ and $\bPhi$ subscripts for simplicity). When sampling, (\ref{eq:target_fullcond}) simplifies to
\begin{equation}\label{eq:mgp_fullcond}
    \begin{aligned}
     p(\bw_{i} \given \textrm{---}) \propto & N(\bw_{i}; \bH_i \bw_{[i]}, \bR_i) \prod_{j \in \{i \to j\}} N(\bw_{j}; \bH_{i\to j} \bw_i + \bH_{[j]\setminus \{i\}} \bw_{[j]\setminus \{i\}}, \bR_j) \cdot \\
    &\qquad\qquad\qquad \cdot \qquad
    \prod_{\substack{j=1,\dots,q,\\ \bl \in \calS_i \\ y_j(\bl) \text{ is observed}}} dF_j(y_j(\bl) \given w_j(\bl),  \bbeta_j, \gamma_j  ),
    \end{aligned}
\end{equation}
where the notation $i\to j$ and $[j]\setminus \{i\}$ refers to the partitioning of $\bH_j$ by column into $\bH_j = [\bH_{i\to j}\ \bH_{[j]\setminus \{i\}} ]$ and thus $\bw_{[j]\setminus \{i\}}$ corresponds to blocks of $\bw_{[j]}$ excluding $\bw_i$ (i.e. the co-parents of $i$ relative to node $j$). $\bH_i$ and $\bR_i$  have dimension $qn_i \times qn_{[i]}$ and $qn_i \times qn_i$, respectively. Their dimension depends on $q$, and the following proposition uncovers their structure.

\begin{proposition}\label{prop:mgp_lmc_equivalence} A $q$-variate MGP on a fixed DAG $\calG$, a domain partition $\mathbf{T}$, and a LMC cross-covariance function $\Cov_{\btheta}$ is equal in distribution to a LMC model built upon $k$ independent univariate MGPs, each of which is defined on the same $\calG$ and $\mathbf{T}$.
\end{proposition}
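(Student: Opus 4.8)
The plan is to show that the two constructions induce the same (possibly degenerate) Gaussian law on $\bw_\calS$; the extension to the non-reference locations $\calU$ then follows because both use the identical conditional-prediction step $\pi(\bw(\bl)\mid\bw_{[\bl]},\btheta)$. Since the base process is Gaussian and $\bw=\bLambda\bv$ is a linear map, both versions of $\bw_\calS$ are zero-mean Gaussian, so it suffices to match their joint covariances, or equivalently their $\calG$-factorized conditional laws $\bw_i\mid\bw_{[i]}$ for each $i=1,\dots,M$.

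First I would write the ``$k$ independent univariate MGPs'' object explicitly. Each factor $\bv_h$ has meshed density $\prod_i N(\bv_{h,i};H_{h,i}\bv_{h,[i]},R_{h,i})$ built from the scalar correlation $\rho_h$, with $H_{h,i}=\rho_{h;i,[i]}\rho_{h;[i]}^{-1}$ and $R_{h,i}=\rho_{h;i}-H_{h,i}\rho_{h;[i],i}$. Because the $k$ factors are independent and share $\calG$ and $\mathbf{T}$, stacking them partition by partition gives $\bv_i\mid\bv_{[i]}\sim N(\bH_i^{\rho}\bv_{[i]},\bR_i^{\rho})$, where $\bH_i^{\rho}$ and $\bR_i^{\rho}$ are block-diagonal across the $k$ components; this is exactly the block-diagonality of $\brho_{\bPhi,\calS}$ made explicit by the permutation $J$. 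Applying $\bw_i=(I_{n_i}\otimes\bLambda)\bv_i$ and $\bw_{[i]}=(I_{n_{[i]}}\otimes\bLambda)\bv_{[i]}$ then turns this into a conditional law for $\bw_i\mid\bw_{[i]}$.

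The key algebraic step is to show this coincides with the direct $q$-variate MGP conditional $N(\bH_i\bw_{[i]},\bR_i)$ obtained from $\Cov_{\btheta}=\bLambda\brho\bLambda^\top$. Substituting the representation $\Cov_{i,[i]}=(I_{n_i}\otimes\bLambda)\brho_{i,[i]}(I_{n_{[i]}}\otimes\bLambda^\top)$, and analogously for $\Cov_i$ and $\Cov_{[i]}$, into $\bH_i=\Cov_{i,[i]}\Cov_{[i]}^{-1}$ and $\bR_i=\Cov_i-\bH_i\Cov_{[i],i}$, the mixed-product property of the Kronecker product collapses the $\bLambda$ factors and leaves $\bH_i=(I_{n_i}\otimes\bLambda)\bH_i^{\rho}(I_{n_{[i]}}\otimes\bLambda)^{-1}$ and $\bR_i=(I_{n_i}\otimes\bLambda)\bR_i^{\rho}(I_{n_i}\otimes\bLambda^\top)$, so the two recursions match. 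Equivalently, at the level of the full autoregressive representation $(I-\bH)\bw_\calS=\beps$, I would verify the intertwining $(I-\bH)(I\otimes\bLambda)=(I\otimes\bLambda)(I-\bH^{\rho})$ together with $\bR=(I\otimes\bLambda)\bR^{\rho}(I\otimes\bLambda^\top)$, which immediately yields $\mathrm{Cov}(\bw_\calS)=(I\otimes\bLambda)(I-\bH^{\rho})^{-1}\bR^{\rho}(I-\bH^{\rho})^{-\top}(I\otimes\bLambda^\top)$ for both constructions.

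The main obstacle is the reduced-rank case $k<q$, where $\bLambda$ and hence each parent block $\Cov_{[i]}$ are singular, so the inverse in $\bH_i=\Cov_{i,[i]}\Cov_{[i]}^{-1}$ and the change of variables $\bv_{[i]}=(I\otimes\bLambda^{-1})\bw_{[i]}$ are not literally available and the law on $\bw_\calS$ is degenerate. I would handle this by carrying out the comparison entirely at the level of covariance matrices rather than conditional densities: using the permutations $Q$ and $J$ to block-diagonalize $\brho_{\bPhi,\calS}$ across components, the covariance of the independent-factor construction is manifestly $(I\otimes\bLambda)\big[\bigoplus_h \Sigma_{v_h}\big](I\otimes\bLambda^\top)$ with $\Sigma_{v_h}$ the $h$th univariate meshed covariance, an expression that is well defined without inverting $\bLambda$; matching it to the meshed $q$-variate covariance through the intertwining relation above completes the argument, the identifiable case $k=q$ being the special instance in which both laws are nondegenerate.
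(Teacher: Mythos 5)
Your proposal is correct, and its computational core coincides with the paper's: both arguments hinge on the Kronecker mixed-product collapse $\bH_i\bw_{[i]} = (I_{n_i}\otimes\bLambda)\,\brho_{i,[i]}\brho_{[i]}^{-1}\bv_{[i]}$ and $\bR_i = (I_{n_i}\otimes\bLambda)\,\Ddot{\bR}_i\,(I_{n_i}\otimes\bLambda^\top)$, followed by the observation that block-diagonality of $\brho$ across the $k$ components (made explicit by a factor-reordering permutation, the paper's $K_i$) splits each conditional into $k$ independent univariate meshed conditionals. Where you genuinely differ is in packaging: the paper works blockwise at the level of conditional \emph{densities}, showing $\pi(\bw_i \given \bw_{[i]}) = \pi(\bv_i\given\bv_{[i]}) = \prod_{h=1}^k N(v_i^{(h)};\tilde{H}_i^{(h)}v_{[i]}^{(h)},\tilde{R}_i^{(h)})$ and multiplying over $i$, and it treats $k\le q$ uniformly by replacing $\bLambda^{-1}$ with the Moore--Penrose pseudoinverse (using $\bLambda^+\bLambda = I_k = \bLambda^\top(\bLambda^\top)^+$ under full column rank), so it never splits cases; you instead reduce to covariance matching (legitimate, since both laws are zero-mean Gaussian), introduce the global autoregressive representation $(I-\bH)\bw_{\calS}=\beps$ with the intertwining $(I-\bH)(I\otimes\bLambda)=(I\otimes\bLambda)(I-\bH^{\rho})$, and quarantine the rank-deficient case $k<q$. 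Your route buys two things the paper's write-up does not deliver: it is cleaner when $k<q$, where the paper's density manipulations (e.g.\ $|\bR_i|^{-1/2}$ with $\bR_i$ singular, and an honest Lebesgue density for $\bw_i$ not existing) do not literally make sense and only the degenerate-law statement survives; and you explicitly note the extension to non-reference locations $\calU$, which the paper's proof leaves implicit by establishing equality only at $\calS$. One caveat: your degenerate-case plan does not fully escape generalized inverses, since the $q$-variate meshed law is itself defined through conditional means $\bH_i\bw_{[i]}$ with $\bH_i = \Cov_{i,[i]}\Cov_{[i]}^{-}$, so verifying the intertwining when $k<q$ still requires the pseudoinverse identity $\Cov_{[i]}^{+} = (I\otimes(\bLambda^\top)^{+})\,\brho_{[i]}^{-1}\,(I\otimes\bLambda^{+})$ --- precisely the paper's device; because the conditional mean of a degenerate Gaussian is almost surely invariant to the choice of g-inverse, this closes your argument, but the step should be stated rather than implied.
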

\noindent The proof proceeds by showing that if $\bw_i=(I_{n_i} \otimes \bLambda) \bv_i$ then $\pi(\bw_i \given \bw_{[i]}) = \pi(\bv_{i} \given \bv_{[i]})$ and that for all $i=1,\dots,M$ we can write $\pi(\bv_{i} \given \bv_{[i]}) = \prod_{h=1}^k \pi(v^{(h)}_i \given v^{(h)}_{[i]})$, concluding that $\pi_{\calG}(\bw_{\calS})=\prod_{i=1}^M \pi(\bw_i \given \bw_{[i]}) = \prod_{i=1}^M \prod_{h=1}^k  \pi(v^{(h)}_i \given v^{(h)}_{[i]}) = \prod_{h=1}^k \pi_{\calG}^{(h)}(v^{(h)}_{\calS})$ where $\pi_{\calG}^{(h)}$ is the density of the $h$th independent univariate MGP using $\calG$, $\mathbf{T}$, and correlation function $\rho_h(\cdot, \cdot)$. The complete derivation is available in the supplement.
A corollary of Proposition \ref{prop:mgp_lmc_equivalence} is that a \textit{different} spatially meshed GP can be constructed via unequal spatial meshing (i.e., different graphs and partitions) along the $k$ margins; this result intuitively says that an MGP behaves like a standard GP with respect to the construction of multivariate processes via LMCs and in other words, there is no loss in flexibility when using MGPs compared to the full GP.
The supplementary material provides details on $\nabla_{\bv_i} \log p(\bv \given \others)$ and $\bG_{\bv_i}$ for posterior sampling of the latent meshed Gaussian LMC models via Algorithm \ref{algorithm:meshed_posterior}. 

\section{Applications on bivariate non-Gaussian data} \label{sec:large_scale_spatial_counts}
We concentrate here on a scenario in which two possibly misaligned non-Gaussian outcomes are measured at a large number of spatial locations and we aim to jointly model them. We will consider a larger number of outcomes in Section \ref{sec:jsdm}, in the context of community ecology.
In addition to the analysis presented here, the supplement (Section \ref{appx:applications}) includes (1) a comparison of methods across 750 multivariate synthetic datasets, and (2) performance assessments of multiple sampling schemes in multivariate multi-type models using latent coregionalized QMGPs.

\subsection{Illustration: bivariate log-Gaussian Cox processes} \label{sec:lgcp}
When modeling spatial point patterns via log-Gaussian Cox processes with the goal of estimating the intensity surface, one typically proceeds by counting occurrences within cells in a regular grid of the spatial domain. We simulate this scenario by generating a bivariate Poisson outcome at each location of a $120\times 120$ regular grid, for a data dimension of $qn = 28800$. In model (\ref{eq:meshed_hierarchy}), we let $F_j$ be a Poisson distribution with intensity $\exp\{ \eta_j(\bl)\}$ at $\bl \in [0,1]^2$, where $\eta_j(\bl)= \bx(\bl)^\top \bbeta_j + w_j(\bl)$ is the log-intensity for count outcome $j$. We sample $3$ correlated covariates at each location independently as $\bx^\top(\bl) \sim N_3(\bzero, \bSigma_x)$ where $\bSigma_x$ is a correlation matrix with off-diagonal elements $\sigma_{12} = 0.9, \sigma_{13} = -0.3, \sigma_{23} = -0.6$, and we let $\bbeta_1 = (-0.5, -1, 0)^\top, \bbeta_2 = (-1, -0.5, 0.5)^\top$.
We fix the latent process $\Pi$ in one scenario as a coregionalized GP and in another as a coregionalized NNGP. In both cases, $w_j(\bl) = \blambda_{[j,:]} \bv(\bl)$ and $\bLambda\bLambda^{\top} = (\sigma_{ij})_{i,j=1,2}$ where $\sigma_{11} = 4$, $\sigma_{12} = \sigma_{21} = -1.3$, $\sigma_{22} = 1$, which yields a latent cross-correlation between the two outcomes of $\rho = -0.65$; the two spatial correlations used in the LMC model are $\rho_h(\|\bl - \bl' \|) = \exp\{ -\phi_h \|\bl - \bl' \| \}$ and we let $\phi_1 = 1.5, \phi_2 = 2.5$. We use R package \texttt{GpGp} to generate an NNGP using maxmin ordering of the spatial locations and $10$ neighbors. 
We depict the latent NNGP process along with the full data in Figure \ref{fig:mvpoisson_data}. We introduce missing values at 20\% of the spatial locations, independently for each outcome. As a result, our training data are misaligned.

\begin{figure}
    \centering
    \includegraphics[width=\textwidth]{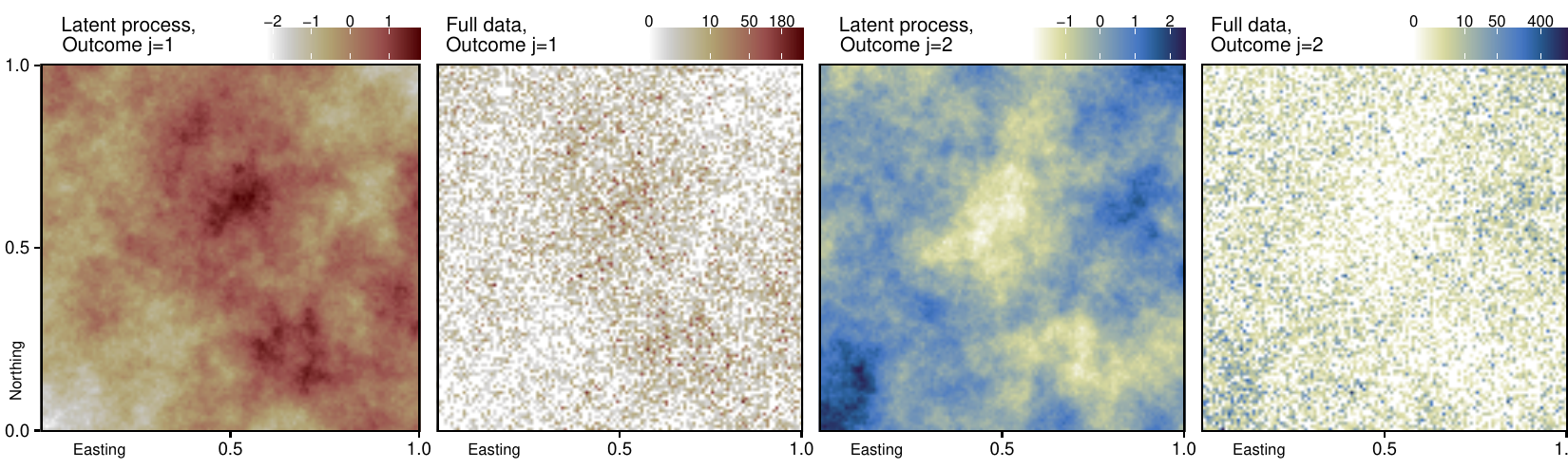}
    \caption{Latent NNGP process realization and corresponding synthetic count data at $14,400$ spatial locations for correlated spatial outcomes. We omit the plots corresponding to the unrestricted GP scenario as they are visually indistinguishable.}
    \label{fig:mvpoisson_data}
\end{figure}
\begin{figure}
    \centering
    \includegraphics[width=0.99\textwidth]{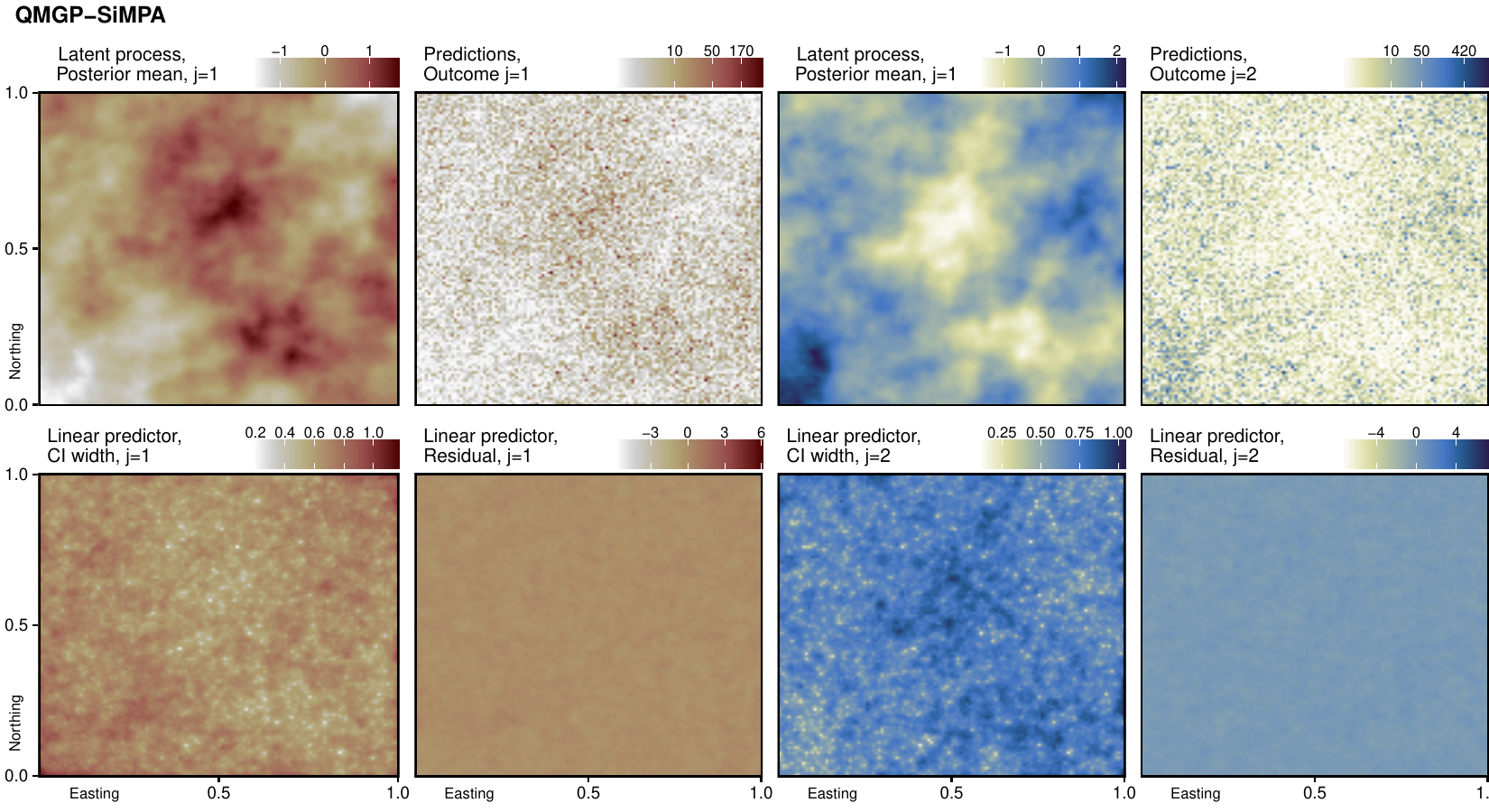}
    \caption{Output from fitting a coregionalized QMGP via SiMPA to simulated data in the latent NNGP scenario. Top row: estimated posterior mean of the spatial latent process and predictions for both outcomes. Bottom row: width of posterior credible intervals about log-intensity, and residual log-intensity.}
    \label{fig:mvpoisson_simpa}
\end{figure}
We investigate the comparative performance of several coregionalized QMGP variants computed via MALA, SM-MALA, SiMPA and NUTS. We also consider latent multivariate Student-t processes (\citealt{mvstudentreg, studenttprocess}) using an alternative cross-covariance specification based on \cite{apanasovich_genton2010}---in short ``AG10''---and previously used in \cite{meshedgp}, which we also implement in the meshed Gaussian case. We detail the specifics of spatial meshing and gradient-based sampling for Student-t processes in Section \ref{appx:studentt}. To the best of our knowledge, ours is the first implementation of a scalable Student-t process using DAGs. We also compare with a data transformation method based on NNGPs: for each outcome, we use $y^* = log(1+y)$, then fit NNGP models of the response on each outcome independently. All MCMC-based results are based on chains of length 30,000. 
All gradient-based methods share the dual averaging setup for adapting the step size $\varepsilon$ and are thus allowed the same burn-in period. 
Finally, we implement an MCMC-free stochastic partial differential equations method \citep[SPDE;][]{spde} fit via INLA. 
The SiMPA-estimated posterior means for the latent process, predictions across the spatial domain, as well as the width of 95\% CIs about the linear predictors are reported in Figure \ref{fig:mvpoisson_simpa}, where we also highlight that the lack of visible spatial patterns in the linear predictor residuals is evidence of the ability of SiMPA to capture the spatial correlation in the data.

A summary of results from all implemented methods is available in Table \ref{tab:mv_lgcp}, which reports root mean square prediction error (RMSPE) and mean absolute error in prediction (MAEP) when predicting the log-intensity $\eta_{j, \text{test}}$ and the outcomes $y_{j, \text{test}}$, $j=1,2$ on the test set of $5740$ locations, and the empirical coverage of 95\% credible intervals (CI) about the log-intensity, in both scenarios. We observe that SiMPA offers excellent predictive performance and well calibrated credible intervals at a fraction of the compute cost, relative to state-of-the-art posterior sampling methods in this context. In the NNGP scenario, there is a disconnect between the fitted DAG (arising from a QMGP) and the data-generating DAG. This disconnect may explain why QMGP methods implementing the flexible AG10 cross-covariance function perform relatively better than in the GP scenario. Even in the NNGP setting, SiMPA retains excellent performance at a small compute cost.

Because the only difference between SiMPA and YA-MALA is in how the preconditioner is adapted, the relatively poor performance of YA-MALA can be attributed to it requiring a much longer burn-in period in practice. We attribute the poor performance of the implemented NNGP methods to the fact that they are unable to capture cross-variable dependence, as well as their being limited to Gaussian outcomes in R package \texttt{spNNGP}

Figure \ref{fig:mvpoisson_coverage} expands on the analysis of empirical coverage of CIs by reporting the performance of all models at additional quantiles, relative to the oracle coverage, i.e., the empirical coverage of the model in which all unknowns are set to their true value. A value of relative coverage near 1 implies that the empirical coverage of the $Q\%$ CI is close to the coverage of the true data generating model. From Figure \ref{fig:mvpoisson_coverage} we observe that SiMPA outpeforms other methods at this task.

\begin{table}[H]
\centering
\resizebox{\textwidth}{!}{%
\begin{tabular}{|c|c|c|r|rr|rrr|r|rr|rrr|r|}
\hline
 &  &  & & \multicolumn{6}{c|}{Unrestricted GP} & \multicolumn{6}{c|}{Nearest-neighbor GP, NN $=10$}\\
\cline{5-16}
 &  &  &  & \multicolumn{2}{c|}{ $y_{j, \text{test}}(\bl)$} &  \multicolumn{3}{c|}{ $\eta_{j, \text{test}}(\bl)$} & \multirow{2}{*}{\footnotesize{Time(s)}} & \multicolumn{2}{c|}{ $y_{j, \text{test}}(\bl)$} &  \multicolumn{3}{c|}{ $\eta_{j, \text{test}}(\bl)$} & \multirow{2}{*}{\footnotesize{Time(s)}}\\
\multirow{-3}{*}{\shortstack[c]{Spatial\\model}} & \multirow{-3}{*}{\shortstack[c]{Covariance\\model}} & \multirow{-3}{*}{\shortstack[c]{Compute\\algorithm}} & \multirow{-3}{*}{$j$}& \footnotesize{RMSPE} & \footnotesize{MAEP} & \footnotesize{RMSPE} & \footnotesize{MAEP} & \footnotesize{Covg}$_{95\text{\%}}$  &  & \footnotesize{RMSPE} & \footnotesize{MAEP} & \footnotesize{RMSPE} & \footnotesize{MAEP} & \footnotesize{Covg}$_{95\text{\%}}$ & \\
\hline 
 &  &  & 1 & 3.01 & 1.39 & 0.32 & 0.25 & 69.48 &  & 2.87 & 1.36 & 0.32 & 0.26 & 69.27 & \\
\multirow{-2}{*}{\raggedright\arraybackslash SPDE} &  & \multirow{-2}{*}{\raggedright\arraybackslash INLA} & 2 & 6.14 & 1.90 & 0.33 & 0.26 & 63.44 & \multirow{-2}{*}{\raggedleft\arraybackslash 333} & 5.79 & 1.85 & 0.32 & 0.25 & 65.52 & \multirow{-2}{*}{\raggedleft\arraybackslash 334}\\
\cline{1-1} \cline{3-16}
 &  &  & 1 & 2.33 & 1.27 & 0.21 & 0.17 & 99.51 &  & 2.26 & 1.22 & 0.21 & 0.17 & 99.44 & \\
 &  & \multirow{-2}{*}{\raggedright\arraybackslash MALA} & 2 & 4.08 & 1.57 & 0.20 & 0.16 & 94.27 & \multirow{-2}{*}{\raggedleft\arraybackslash 90} & 3.77 & 1.53 & \textbf{0.19} & \textbf{0.15} & 93.58 & \multirow{-2}{*}{\raggedleft\arraybackslash 89}\\
 \cline{3-16}
 &  &  & 1 & 7.27 & 2.56 & 0.95 & 0.76 & 5.42 &  & 6.96 & 2.48 & 0.93 & 0.75 & 5.56 & \\
 &  & \multirow{-2}{*}{\raggedright\arraybackslash YA-MALA} & 2 & 18.98 & 4.92 & 1.28 & 1.01 & 3.92 & \multirow{-2}{*}{\raggedleft\arraybackslash 111} & 18.85 & 4.89 & 1.30 & 1.03 & 4.20 & \multirow{-2}{*}{\raggedleft\arraybackslash 108}\\
 \cline{3-16}
 &  &  & 1 & 2.18 & \textbf{1.22} & \textbf{0.17} & \textbf{0.14} & 95.83 &  & 2.16 & \textbf{1.19} & \textbf{0.17} & \textbf{0.14} & 95.21 & \\
 &  & \multirow{-2}{*}{\raggedright\arraybackslash SiMPA} & 2 & 4.03 & 1.57 & \textbf{0.19} & \textbf{0.15} & 94.55 & \multirow{-2}{*}{\raggedleft\arraybackslash 117} & 3.60 & 1.51 & \textbf{0.19} & \textbf{0.15} & \textbf{94.48} & \multirow{-2}{*}{\raggedleft\arraybackslash 116}\\
 \cline{3-16}
 &  &  & 1 & 2.19 & \textbf{1.22} & 0.18 & 0.15 & 96.28 &  & 2.17 & 1.20 & 0.18 & 0.15 & \textbf{94.93} & \\
 &  & \multirow{-2}{*}{\raggedright\arraybackslash RM-MALA} & 2 & 4.20 & \textbf{1.56} & 0.23 & 0.18 & 93.37 & \multirow{-2}{*}{\raggedleft\arraybackslash 187} & 3.97 & 1.56 & 0.23 & 0.18 & 92.57 & \multirow{-2}{*}{\raggedleft\arraybackslash 183}\\
 \cline{3-16}
 &  &  & 1 & 2.18 & 1.23 & \textbf{0.17} & \textbf{0.14} & \textbf{95.73} &  & 2.16 & 1.20 & \textbf{0.17} & \textbf{0.14} & 94.65 & \\
 &  & \multirow{-2}{*}{\raggedright\arraybackslash HMC} & 2 & \textbf{3.96} & \textbf{1.56} & \textbf{0.19} & \textbf{0.15} & \textbf{94.76} & \multirow{-2}{*}{\raggedleft\arraybackslash 246} & 3.54 & 1.51 & \textbf{0.19} & \textbf{0.15} & 93.82 & \multirow{-2}{*}{\raggedleft\arraybackslash 359}\\ 
 \cline{3-16}
 &  &  & 1 & \textbf{2.16} & \textbf{1.22} & 0.18 & \textbf{0.14} & 93.89 &  & 2.21 & 1.20 & 0.18 & \textbf{0.14} & 92.57 & \\
 & \multirow{-14}{*}{\raggedright\arraybackslash LMC} &  & 2 & 4.07 & 1.57 & 0.20 & 0.16 & 90.66 & \multirow{-2}{*}{\raggedleft\arraybackslash 620} & 3.56 & 1.51 & \textbf{0.19} & \textbf{0.15} & 90.97 & \multirow{-2}{*}{\raggedleft\arraybackslash 633}\\
  \cline{2-2} \cline{4-16}
 &  &  & 1 & 2.22 & 1.23 & 0.18 & \textbf{0.14} & 92.64 &  & \textbf{2.13} & \textbf{1.19} & \textbf{0.17} & \textbf{0.14} & 91.39 & \\
\multirow{-14}{*}{\raggedright\arraybackslash QMGP} &  &  & 2 & 4.01 & \textbf{1.56} & 0.20 & 0.16 & 91.60 & \multirow{-2}{*}{\raggedleft\arraybackslash 501} & 3.56 & \textbf{1.50} & \textbf{0.19} & \textbf{0.15} & 92.01 & \multirow{-2}{*}{\raggedleft\arraybackslash 480}\\
\cline{1-1} \cline{4-16}
 &  &  & 1 & 2.19 & 1.23 & 0.18 & \textbf{0.14} & 91.77 &  & 2.16 & 1.20 & 0.18 & \textbf{0.14} & 90.97 & \\
\multirow{-2}{*}{\raggedright\arraybackslash QMTP} & \multirow{-4}{*}{\raggedright\arraybackslash AG10} & \multirow{-6}{*}{\raggedright\arraybackslash NUTS} & 2 & 4.02 & 1.57 & 0.20 & 0.16 & 90.21 & \multirow{-2}{*}{\raggedleft\arraybackslash 857} & \textbf{3.50} & \textbf{1.50} & 0.19 & \textbf{0.15} & 91.28 & \multirow{-2}{*}{\raggedleft\arraybackslash 841}\\
\cline{1-16}
\multirow{2}{*}{NNGP} & \multirow{2}{*}{Exp} & \multirow{2}{*}{\shortstack[c]{Transform \&\\Response} } & 1 & 5.71 & 2.01 & 1.19 & 0.98 & 59.41 &  & 5.52 & 1.96 & 1.19 & 0.99 & 59.10 & \\
 & & & 2 & 15.55 & 3.63 & 1.36 & 1.12 & 58.09 & \multirow{-2}{*}{\raggedleft\arraybackslash 166} & 15.44 & 3.56 & 1.34 & 1.10 & 58.68 & \multirow{-2}{*}{\raggedleft\arraybackslash 171}\\
\hline 
\end{tabular}
}
\caption{Summary of out-of-sample results for all implemented models. Bolded values correspond to best performance.}
    \label{tab:mv_lgcp}
\end{table}

\begin{figure}
    \centering
    \includegraphics[width=.96\textwidth]{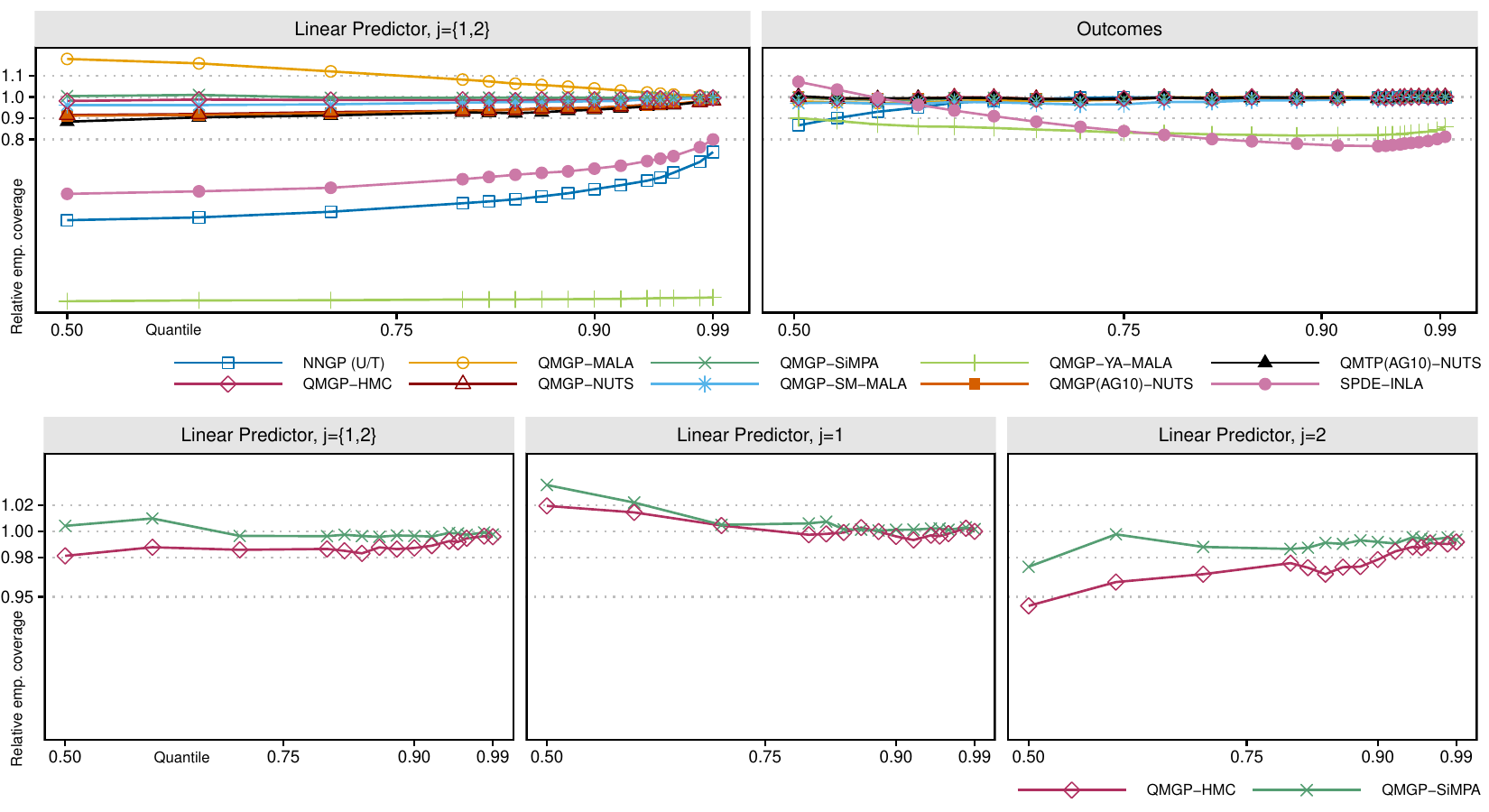}
    \caption{Top row: empirical coverage of uncertainty intervals at different quantiles, relative to the oracle model (values under 1 imply undercoverage of the interval), in the NNGP scenario. Bottom row: detailed comparison of relative coverage of SiMPA and HMC for the linear predictor of each outcome.}
    \label{fig:mvpoisson_coverage}
\end{figure}

\subsection{MODIS data: leaf area and snow cover} \label{sec:modis}
\begin{figure}
\centering
\begin{subfigure}{0.99\textwidth}
    \includegraphics[width=.99\textwidth]{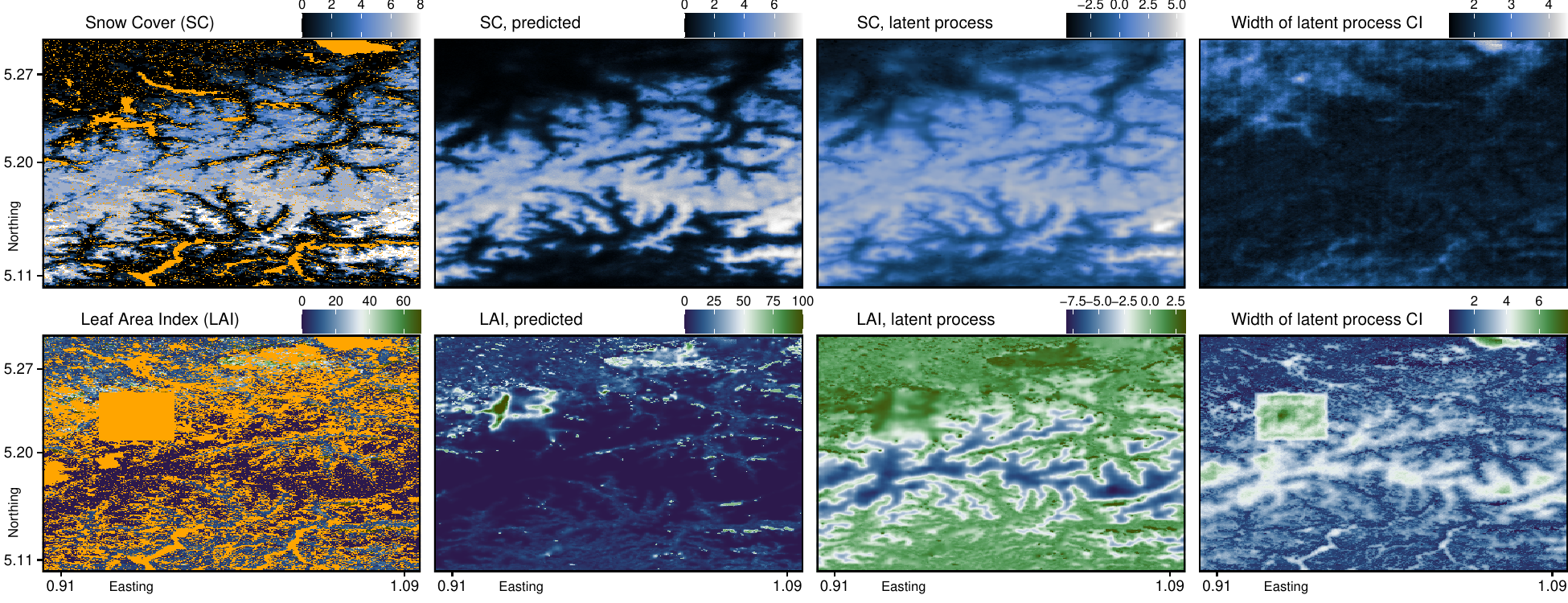}
    \caption{From the left: in-sample data, predictions of outcomes and latent processes, uncertainty quantification from a QMGP-SiMPA model using a Poisson distribution for LAI.}
    \label{fig:sat:simpaout}
\end{subfigure}
\hfill
\begin{subfigure}{0.99\textwidth}
    \includegraphics[width=.99\textwidth]{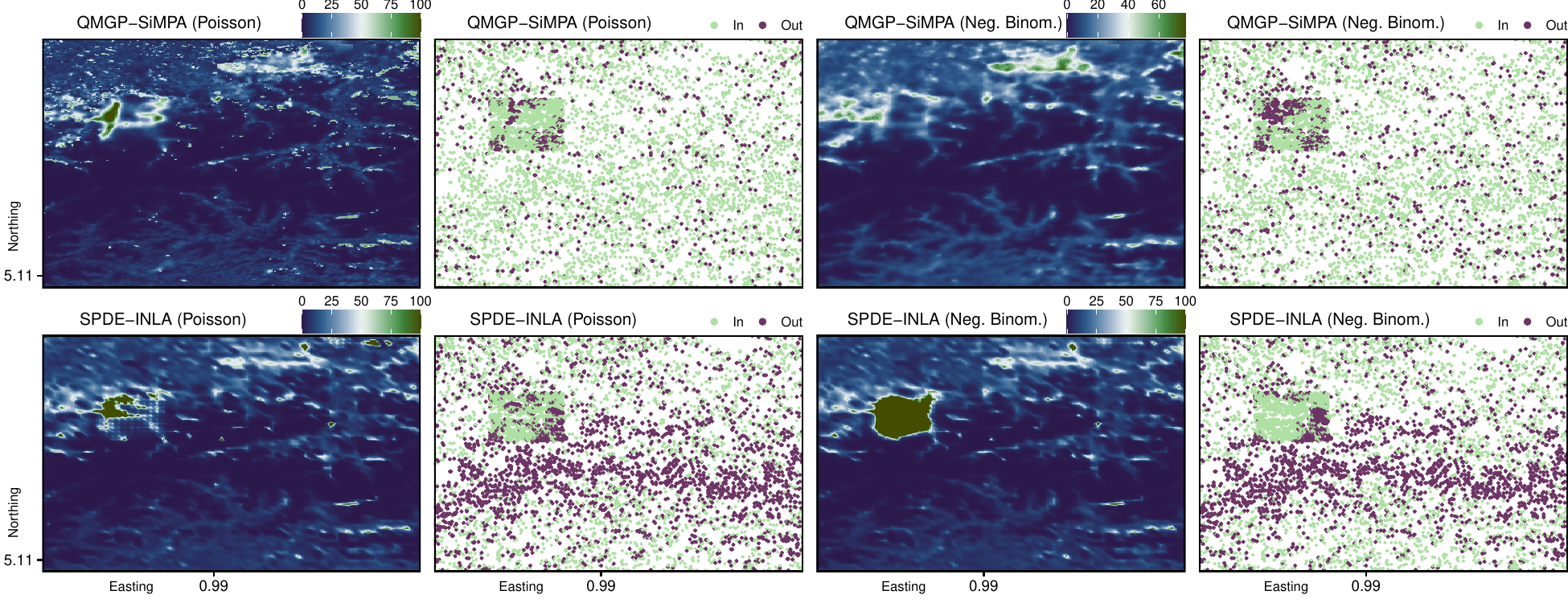}
    \caption{Each raster image reports whole-domain predictions of all tested models, whereas each dark dot in the point clouds represents a domain location at which the 75\% one-sided credible interval does not include the observed data point.}
    \label{fig:sat:output}
\end{subfigure}
\caption{Performance of QMGP-SiMPA and SPDE-INLA in the MODIS data application.}
\label{fig:figures}
\end{figure}
The dynamics of vegetation greenness are important drivers of ecosystem processes; in alpine regions, they are influenced by seasonal snow cover. Predictive models for vegetation greenup and senescence in these settings are crucial for understanding how local biological communities respond to global change \citep{longterm_snow,jonsson2010modis_snow,snow_effects,xieetal2020}. We consider remotely sensed leaf area and snow cover data from the MODerate resolution Imaging Spectroradiometer (MODIS) on the Terra satellite operated by NASA (v.6.1) at 122,500 total locations (a 350$\times$ 350 grid where each cell covers a 0.25km$^2$ area) over a region spanning northern Italy, Switzerland, and Austria, during the 8-day period starting on December 3rd, 2019 (Figure \ref{fig:sat:data}). Leaf area index (LAI; number of equivalent layers of leaves relative to a unit of ground area, available as level 4 product MOD15A2H) is our primary interest and is stored as a positive integer value but is missing or unavailable at 38.2\% of all spatial locations due to cloud cover or poor measurement quality. Snow cover (SC; number of days within an 8-day period during which a location is covered by snow, obtained from level 3 product MOD10A2) is measured with error and missing at 7.3\% of the domain locations. 

We create a test set by introducing missingness in LAI at 10,000 spatial locations, of which 5030 are chosen uniformly at random across the whole domain and 4970 belong to a contiguous rectangular region as displayed on the bottom left subplot of Figure \ref{fig:sat:simpaout}.
We attempt to explain LAI based on SC by fitting (\ref{eq:latent_gaussian_lmc}) on the bivariate outcome $\by(\bl) = (y_{\text{SC}}(\bl), y_{\text{LAI}}(\bl) )^\top$ where we assume a Binomial distribution with $8$ trials and logit link for SC, i.e. $E(y_{\text{SC}}(\bl) \mid \mu(\bl)) = 8\mu(\bl) = 8 (1+\exp\{ - \eta_{\text{SC}}(\bl)\} )^{-1}$, and a Poisson or Negative Binomial distribution for LAI. In both cases, $E(y_{\text{LAI}}(\bl) \mid \eta_{\text{LAI}}(\bl)) = \mu_{\text{LAI}}(\bl) = \exp\{ \eta_{\text{LAI}}(\bl) \}$; for the Poisson model, $Var(y_{\text{LAI}}(\bl) \mid \eta_{\text{LAI}}(\bl)) = \mu_{\text{LAI}}(\bl)$, whereas for the Negative Binomial model $
Var(y_{\text{LAI}}(\bl) \mid \eta_{\text{LAI}}(\bl)) = \mu_{\text{LAI}}(\bl) + \tau  \mu_{\text{LAI}}^2(\bl) $ where $\tau$ is an unknown scale parameter. We fit model (\ref{eq:latent_gaussian_lmc}) using latent coregionalized QMGPs with $k=2$ on a $50\times 50$ axis-parallel domain partition and run SiMPA for 30,000 iterations, of which 10,000 are discarded as burn-in and thinning the remaining ones with a 20:1 ratio, leading to a posterior sample of size 1,000. We compare our approaches in terms of prediction and uncertainty quantification about $y_{\text{LAI}}$ on the test set to a SPDE-INLA approach implemented on a $60\times 60$ mesh which led to similar compute times. As shown in Table \ref{tab:modis_results}, QMGP-SiMPA is competitive with or outperforms the SPDE-INLA method across all measured quantities. Figure \ref{fig:sat:output} reports predictive maps of the tested models (prediction values are censored at 100 for visualization clarity), along with a visualization of 75\% one-sided credible intervals which shows the SPDE-INLA method exhibiting undesirable spatial patterns, unlike QMGP-SiMPA.

\begin{table}[ht]
\centering
\resizebox{\textwidth}{!}{
\begin{tabular}{|r|c|cc|cc|ccc|c|}
  \hline
 \multirow{2}*{Method}  & \multirow{2}*{$F_{\text{LAI}}$} & \multirow{2}*{RMSPE} & \multirow{2}*{MedAE} & \multicolumn{2}{c|}{CRPS} & \multirow{2}*{$CI_{75}$} & \multirow{2}*{$CI_{95}$} & \multirow{2}*{$CI_{99}$} & Time \\
 &  &  &  & (mean) & (median) &  &  &  & (minutes) \\ 
  \hline
\multirow{2}*{QMGP-SiMPA} & Poisson & 16.543 & 1.322 & 3.916 & 1.199 & 0.867 & 0.974 & 0.989 & 25.4 \\
 & Neg. Binom. & 11.726 & 2.155 & 4.462 & 2.241 & 0.809 & 0.980 & 0.994 & 32 \\ 
 \hline
\multirow{2}*{SPDE-INLA} & Poisson & 27.839 & 2.154 & 4.695 & 1.214 & 0.835 & 0.938 & 0.961 & 25.8 \\ 
 & Neg. Binom. & 27019.470 & 2.444 & 54.986 & 1.720 & 0.875 & 0.975 & 0.987 & 86.5 \\ 
   \hline
\end{tabular}
}
\caption{Root mean square error (RMSPE), median absolute error (MedAE), continuous ranked probability score (CRPS), and empirical coverage of one-sided intervals ($CI_q$), over the out-of-sample test set of 6,998 locations.} \label{tab:modis_results} 
\end{table}

\section{Applications: spatial community ecology} \label{sec:jsdm}
Ecologists seek to jointly model the spatial occurrence of multiple species, while inferring the impact of phylogeny and environmental covariates (see, e.g., \citealt{DorazioRoyle05, Doseretall2022}).
In order to realistically model such a scenario, we consider cases in which a relatively large number of georeferenced outcomes is observed, with the goal of predicting their realization at unobserved locations and estimating their latent correlation structure after accounting for spatial and/or temporal variability. Presence/absence information for different species is encoded as a multivariate binary outcome. Our model for multivariate binary outcomes lets $F_j(y_j(\bl); \eta_j(\bl)) = Bern( \mu_j(\bl))$ where $\mu_j(\bl) = (1+\exp\{ -\eta_j(\bl) \})^{-1}$ and $v_h(\cdot) \sim QMGP(\bzero, \brho_h(\cdot, \cdot)), h=1, \dots, k$ in model (\ref{eq:latent_gaussian_lmc}), leading to coregionalized $k$-factor QMGPs which we fit via several Langevin methods, all of which use domain partitioning with blocks of size $\approx 36$ and independent standard Gaussian priors on the lower-triangular elements of the factor loadings $\bLambda$, unless otherwise noted. 

We compare QMGP methods fit via our proposed Langevin algorithms to joint species distribution models (JSDM) implemented in R package \texttt{Hmsc} \citep{hmsc_package}, a popular software package for community ecology. \texttt{Hmsc} uses a probit link for binary outcomes, i.e. $\mu_j(\bl) = \Phi(\eta_j(\bl))$ where $\Phi(\cdot)$ is the Gaussian distribution function; then, non-spatial JSDMs are implemented by letting $v_h(\bl) \sim N(0, 1)$ independently for all $\bl$ and $h=1,\dots, K$, whereas NNGP-based JSDMs assume  $v_h(\cdot) \sim NNGP(\bzero, \brho_h(\cdot, \cdot)), h=1, \dots, k$. We set the number of neighbors as $m=20$ in the NNGP specification. \texttt{Hmsc} assumes a cumulative shrinkage prior on the factor loadings \citep{bhattacharya_dunson}, which we set up with minimal shrinking ($a_1=2, a_2=2$) unless otherwise noted.

Section \ref{sec:spammix} in the Supplementary Material compares our methods with alternative posterior sampling algorithm in fitting a multi-species N-mixture model for multi-species abundance data in community ecology.

\subsection{Synthetic occupancy data}
We generate 30 datasets by sampling $q=10$ binary outcomes at $n=900$ locations scattered uniformly in the domain $[0,1]^2$: after sampling $k=3$ independent univariate GPs $v_j(\cdot) \sim GP(\bzero, C_{\varphi_j})$ where $C_{\varphi_j}(\bl, \bl') = \exp\{ - \varphi_j \| \bl - \bl' \| \}$ is the exponential covariance function with decay parameter $\varphi_j$, we construct a $q$-variate GP via coregionalization by letting $\bw(\bl) = \bLambda \bv(\bl)$ where $\bLambda$ is a $q \times k$ lower-triangular matrix. We then sample the binary outcomes using a probit link, i.e. $y_j(\bl) \sim Bern( \mu_j(\bl) )$ where $\mu_j(\bl) = \Phi(\bx(\bl)^\top \bbeta_j + w_j(\bl))$ for each $j=1, \dots, q$ and where $\bx(\bl)$ is a column vector of $p=2$ covariates. For each of the 30 datasets, we randomly set $\varphi_j \sim U(1/2, 10), j=1, \dots, k$, $\bLambda_{jj} \sim U(3/2, 2)$ for $j=1, \dots, k$, $\bLambda_{ij} \sim U(-2, 2)$ for $i<j$, and $\bbeta_j \sim N(\bzero, I_2/5)$. These choices lead to a wide range of latent pairwise correlations induced on the outcomes via $\bw(\cdot)$: letting $\bOmega = (\omega_{ij})_{i,j=1,\dots, q} = \bLambda \bLambda^\top$ represent the cross-covariance at zero spatial distance, we find the cross-correlations as $\bOmega_{\text{corr}} = \text{diag}(\omega_{jj}^{-1/2}) \bOmega  \text{diag}(\omega_{jj}^{-1/2})$. We realistically model long-range spatial dependence by choosing small values for $\varphi_j$, $j=1,\dots,k$. Lastly, we create a test set using 20\% of the locations for each outcome (missing data locations differ for each outcome).

\begin{figure}
    \centering
    \includegraphics[width=0.99\textwidth]{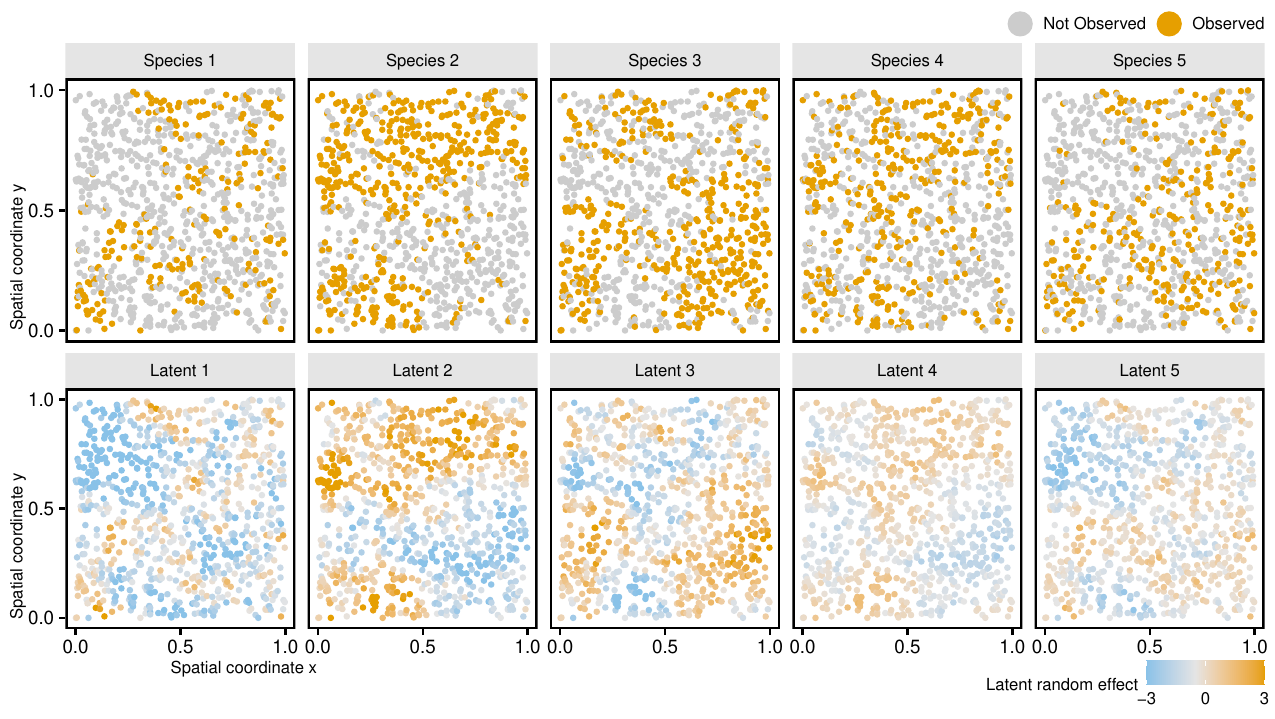}
    \caption{Synthetic dichotomous occurrence data (top row), and the spatial latent effects used to generate them (bottom row). Here, we show 5 (of 10) outcomes in 1 (of 30) simulated datasets.}
    \label{fig:species_synthetic_data}
\end{figure}

\begin{figure}
    \centering
    \includegraphics[width=0.98\textwidth]{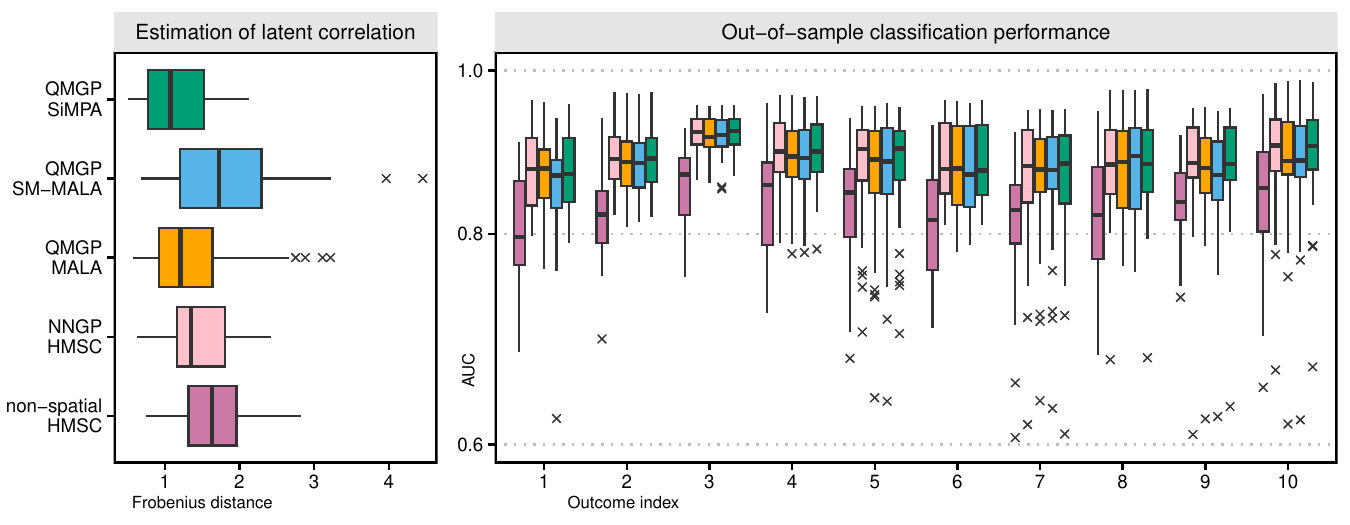}
    \caption{Box-plot summaries of estimation and classification performance over 30 datasets. Left: $\| \hat{\Omega}_{\text{corr}} - \Omega_{\text{corr}} \|_F$ for the competing methods. Right: AUC for each outcome.}
    \label{fig:species_synthetic}
\end{figure}

We use the setup of QMGPs and \texttt{Hmsc} outlined above, noting that the link function used to generate the data is correctly specified for \texttt{Hmsc} but not for our models based on QMGP due to our current software implementation in R package \texttt{meshed}. MCMC for all methods was run for 10,000 iterations, of which the first 5,000 is discarded as burn-in. 
We compare all models based on the out-of-sample classification performance on each of the 10 outcomes as measured via the area under the receiver operating characteristic curve (AUC). Since a primary interest in these settings is to estimate latent correlations across outcomes, we compare models based on $\| \hat{\Omega}_{\text{corr}} - \Omega_{\text{corr}} \|_F$, i.e. the Frobenius distance between the Monte Carlo estimate of cross-correlation and its true value. Therefore, smaller values of $\| \hat{\Omega}_{\text{corr}} - \Omega_{\text{corr}} \|_F$ are desirable. Figure \ref{fig:species_synthetic} shows box-plots summarising the results, whereas Table \ref{tab:hmsc_sim_time} reports averages along with compute times. In these settings, the non-spatial model unsurprisingly performed worst. Langevin methods for the spatial models proposed in this article -- and in particular SiMPA -- lead to improved classification performance, smaller errors in estimating latent correlations, and a 30-fold reduction in compute time, relative to the coregionalized NNGP method implemented via MCMC in \texttt{Hmsc}.

\begin{table}[ht]
\centering
\resizebox{5in}{!}{
\begin{tabular}{|r|ll|lll|}
  \hline
 Method & \multicolumn{2}{c|}{ \texttt{Hmsc} } & MALA & SM-MALA & SiMPA \\ 
 \hline
 Prior on rand. eff. & {\small non-spatial} & NNGP & \multicolumn{3}{c|}{ QMGP } \\ 
 \hline
  Avg. AUC & 0.827 & \textbf{0.885} & 0.882 & 0.874 & \textbf{0.885} \\
  Min. AUC & 0.573 & 0.608 & 0.392 & 0.530 & \textbf{0.609} \\
  Max. AUC & 0.969 & 0.983 & 0.986 & \textbf{0.987} & 0.984 \\
  \hline
  $\| \hat{\Omega}_{\text{corr}} - \Omega_{\text{corr}} \|_F$ & 1.66 & 1.43 & 1.46 & 1.91 & \textbf{1.14} \\
  \hline
  Avg. time (minutes) &  5.15 & 17.4 & \textbf{0.44} & 0.73 & 0.53  \\
   \hline
\end{tabular}
}
\caption{Performance in classification, estimation, and compute time, over 30 synthetic datasets.} \label{tab:hmsc_sim_time} 
\end{table}

\subsection{North American breeding bird survey data} \label{nabbs}
The North American Breeding Bird Survey dataset contains avian point count data for more than 700 North American bird taxa (species, races, and unidentified species groupings). These data are collected annually during the breeding season, primarily June, along thousands of randomly established roadside survey routes in the United States and Canada.

We consider a dataset of $n=4118$ locations spanning the continental U.S., and $q=27$ bird species. The specific species we consider belong to the \textit{passeriforme} order and are observed at a number of locations which is between 40\% and 60\% of the total number of available locations -- Figure \ref{fig:nabbs:data} shows a subset of the data. We dichotomize the observed counts to obtain presence/absence data. The effective data size is $nq = $ 111,186. We implement Langevin methods using coregionalized QMGPs with $k=2, 4, 6, 8, 10$ spatial factors using exponential correlations with decay $\phi \sim U[0.1, 10]$ \textit{a priori}. We also test the sensitivity to the domain partitioning scheme by testing $8\times 4$ (coarse), $16\times 8$ (medium), and $32 \times 16$ (fine) axis-parallel domain partitioning schemes. Finer partitioning implies more restrictive spatial conditional independence assumptions. In implementing the shrinkage prior of \cite{bhattacharya_dunson}, \texttt{Hmsc} dynamically chooses the number of factors up to a maximum $k_{\text{max}}$: in the non-spatial  \texttt{Hmsc} model, letting $k_{\text{max}}=10$ ultimately leads to 6 or fewer factors being used during MCMC. In the spatial \texttt{Hmsc} models using NNGPs, we set $k_{\text{max}}=2$ or $k_{\text{max}}=5$ to restrict run times. 
Figure \ref{fig:nabbs:results} reports average classification performance and run times. QMGP-MALA scales only linearly with the number of factors, but its performance is strongly negatively impacted by partition size. QMGP-SM-MALA exhibits large improvements in classification performance, however these improvements come at a large run time cost. 
QMGP-SiMPA outperforms all other models while providing large time savings relative to SM-MALA and being less sensitive to the choice of partition. A QMGP-SiMPA model on the $32\times 16$ partition with $k=4$ outperforms a spatial NNGP-\texttt{Hmsc} model in classifying the 27 bird species with a reduction in run time of over three orders of magnitude (respectively 4.1 minutes and 70.7 hours). We provide a summary of the efficiency in sampling the elements of $\bOmega_{\text{corr}}$ in Table \ref{tab:nabbs:compare}, where we make comparisons of ESS/s relative to the non-spatial \texttt{Hmsc} model. While efficient estimation of $\bOmega_{\text{corr}}$ remains challenging, QMGP-SiMPA models show marked improvements relative to a state-of-the-art alternative.

\begin{figure}
    \centering
    \includegraphics[width=.99\textwidth]{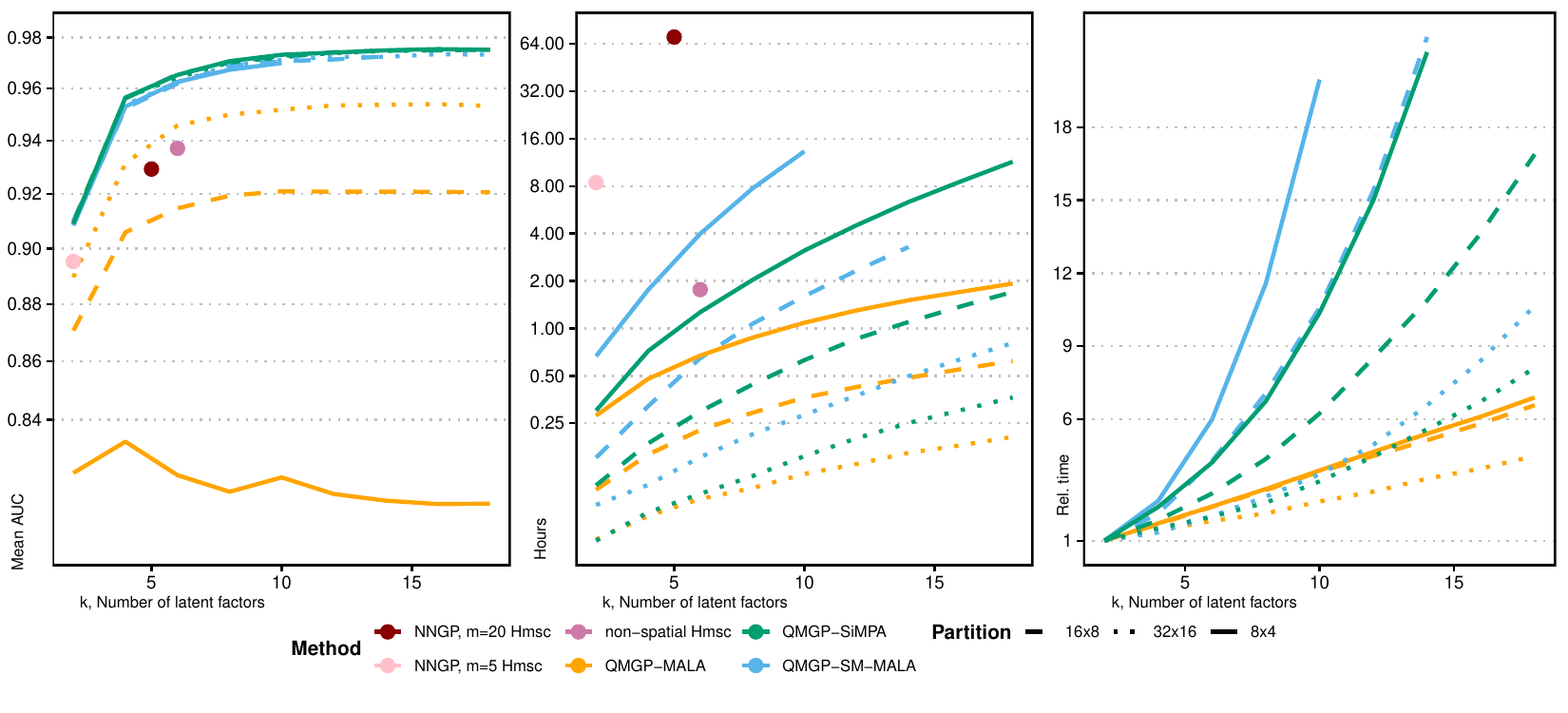}
    \caption{Left: mean AUC across the 27 bird species for different choices of $k$. Center: run times in hours. Right: run time of QMGP models as a proportion to the run time choosing $k=2$.}
    \label{fig:nabbs:results}
\end{figure}

\begin{figure}
    \centering
    \includegraphics[width=.66\textwidth]{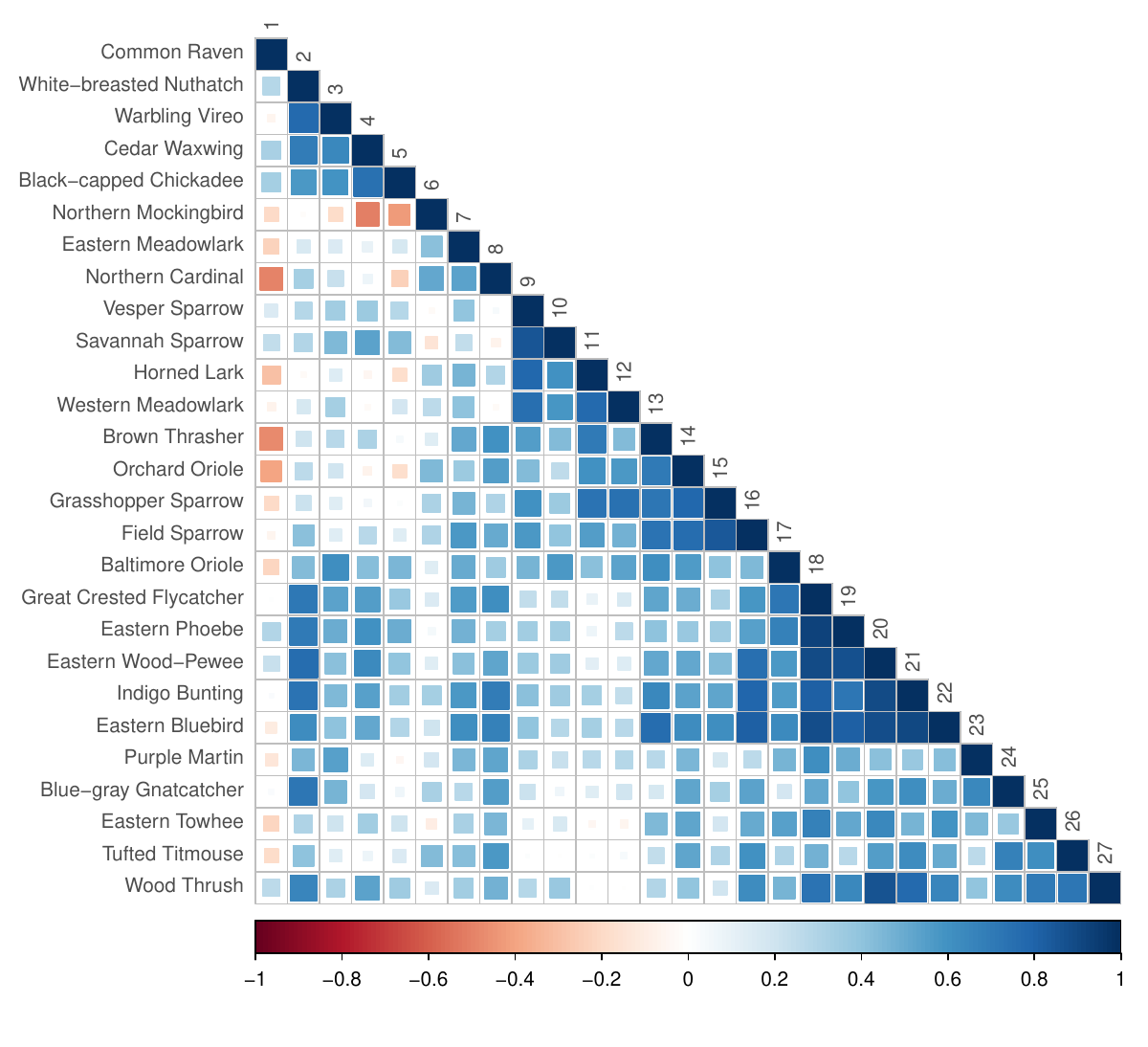}
    \caption{Lower triangular portion of $\bOmega_{\text{corr}}$, the estimated correlation among the 27 bird species.}
    \label{fig:nabbs:corrplot}
\end{figure}

\begin{table}[ht]
\centering
\resizebox{5.5in}{!}{
\begin{tabular}{|r|cc|cccc|}
  \hline
 Method & \multicolumn{2}{c|}{ \texttt{Hmsc} } & \multicolumn{4}{c|}{ SiMPA } \\ 
 \hline
 Prior & {\small non-spatial} & NNGP & \multicolumn{4}{c|}{ QMGP } \\ 
 \hline
 $k$ & $\leq 10$ & 5 & 4 & 4 & 10 & 10 \\ 
  Setting &  & $m=20$ & $32\times 16$ & $8\times 4$ & $32\times 16$ & $8\times 4$ \\ 
 \hline
  Avg. AUC & 0.9349 & 0.9293 & 0.9565 & 0.9565 & 0.9728 & \textbf{0.9732} \\
  \hline
  Time (minutes) & 87.45 & 4245.02 & \textbf{4.08} & 43.10 & 9.27 & 187.24  \\
   \hline
  \multicolumn{6}{|l}{ ESS/s \footnotesize{for elements of $\bOmega_{\text{corr}}$ (relative to \texttt{Hmsc} non-spatial)} } &   \\
  \hline
  min & \textbf{1} & $10^{-4}$ & 0.57 & 0.02  & 0.05 & 0.01 \\
  median & 1 & 0.012 & \textbf{2.12} & 0.33  & 0.86 & 0.05 \\
  mean & 1 & 0.015 & \textbf{3.69} & 0.45  & 1.20 & 0.07  \\
  max & 1 & 0.102 & \textbf{42.47} & 3.95 & 11.15 & 0.56 \\
  \hline
\end{tabular}
}
\caption{Out-of-sample performance in classification of the 27 bird species, compute time, and efficiency in estimation of $\bOmega_{\text{corr}}$, relative to a non-spatial JDSM model.} \label{tab:nabbs:compare} 
\end{table}

\section{Discussion} \label{sec:discussion}
We have introduced Bayesian hierarchical models based on DAG constructions of latent spatial processes for large scale non-Gaussian multivariate multi-type data which may be misaligned, along with computational tools for adaptive posterior sampling. We illustrated our methods using applications with data sizes in the tens to hundreds of thousands, with compute times ranging from a few seconds to under 30 minutes in a single workstation. The compute time for a single SiMPA iteration for a univariate Poisson outcome observed on gridded coordinates with $n=10^6$ is under 0.2 seconds after burn-in; in other words, our methods enable running MCMC for hundreds of thousands of iterations on massive spatial data under a total time budget of 12 hours.

We have applied our methodologies using practical cross-covariance choices such as models of coregionalization built on independent stationary covariances. However, nonstationary models are desirable in many applied settings. Recent work \citep{bags} highlights that DAG choice must be made carefully when considering explicit models of nonstationary, as spatial process models based on sparse DAGs induce nonstationarities even when using stationary covariances. 
Our work in this article will enable new research into nonstationary models of large scale non-Gaussian data.
Furthermore, our methods can be applied for posterior sampling of Bayesian hierarchies based on more complex conditional independence models of multivariate dependence \citep{deyetal20}. 

In our work, we have assumed a common DAG and partitioning for all spatial variables. In some settings, these assumptions may lead to inflexibility in modeling variables with fundamentally different dependence structures and/or spatial domain constraints (see, e.g., \citealt{boragp}). In the multivariate setting, one potentially useful direction towards building a highly flexible class of models is to infer different DAGs for different factors within a spatial factor model by extending the methods of \cite{bags}. Understanding how to generally build flexible and scalable spatial factor models using different DAGs and accounting for unequal domain constraints is an interesting direction for future research. 

Our methodologies rely on the ability to embed the assumed spatial DAG within the larger Bayesian hierarchy and lead to drastic reductions in wall clock time compared to models based on unrestricted GPs. Nevertheless, high posterior correlations of high dimensional model parameters may still negatively impact overall sampling efficiency in certain cases. Motivated by recent progress in improving sampling efficiency of multivariate Gaussian models \citep{grips}, future research will consider generalized strategies for improving MCMC performance in spatial factor models of highly multivariate non-Gaussian data. Finally, optimizing DAG choice for MCMC performance is another interesting path, and recent work on the theory of Bayesian computation for hierarchical models \citep{zanellaroberts21} might motivate further development for spatial process models based on DAGs.

\if0\blind
{
	\subsection*{Acknowledgements}
The authors have received funding from the European Research Council (ERC) under the European Union’s Horizon 2020 research and innovation programme (grant agreement No 856506), and grants R01ES028804 and R01ES035625 of the United States National Institutes of Health (NIH).
} \fi

\newpage
\vspace{2cm}
\begin{center}
    {\Huge \textsc{Supplementary Material}}
\end{center}

\appendix

\section{Spatial meshing with projections} \label{appx:meshing_not_predictive}
The customary setup of a DAG-based model based on spatial meshing is to let $\calS \cap \calT \approx \calT$ as the resulting large overlap between knots and observed locations avoids sampling at non-reference locations. However, it is often desirable to allow flexible choices of $\calS$; for example, there are some computational advantages when $\calS$ is a grid and $\calT$ are irregularly spaced, or when the data are observed with particular patterns \citep{grips}. 
In order to let $\calS$ be more flexibly determined while also avoiding sampling $\bw(\bl)$ at non-reference locations, we introduce a linear projection operator $\bH(\bl)$ of dimension $q\times qn_{[\bl]}$ and where $n_{[\bl]}$ is the number of locations in $[\bl] \subset \calS$; after denoting $\tilde{\bw}(\bl) = \bH(\bl) \bw_{[\bl]}$, we assume that if $\bl\in \calS$ then $\bH(\bl)$ is such that $\tilde{\bw}(\bl) = \bw(\bl)$. Then, we build the outcome model as
\begin{equation} \label{eq:gridmeshed_hierarchy}
    \begin{aligned}
    y_j(\bl) \given \eta_j(\bl), \gamma_j &\sim F_j(\eta_j(\bl), \gamma_j), \qquad \eta_j(\bl) = \bx_j(\bl)^\top \bbeta_j + \tilde{w}_j(\bl), \\
    \bw(\cdot) &\sim \Pi_{\calG}
    \end{aligned}
\end{equation}
where we have replaced $\bw(\bl)$ with $\tilde{\bw}(\bl)$. Setting $\bH(\bl)$ such that $\tilde{\bw}(\bl) = E[ \bw(\bl) \given \bw_{[\bl]}]$ leads to an interpretation of (\ref{eq:gridmeshed_hierarchy}) as a ``local'' predictive process \citep{gp_predictive_process}. 
The posterior distribution for this model is:
\begin{equation} \label{eq:gridmeshedposterior}
\begin{aligned}
    \pi( \{ \bbeta_j, \gamma_j \}_{j=1}^q, \bw_{\calS}, \btheta \given \by_{\calT})  &\propto 
    \pi(\btheta) \pi_{\calG}(\bw_\calS \given \btheta) \prod_{j=1}^q \pi(\bbeta_j, \gamma_j) \prod_{\bl \in \calT_j}  dF_j(y_j(\bl) \given \tilde{w}_j(\bl), \bbeta_j(\bl), \gamma_j).
\end{aligned}
\end{equation}

In this scenario, omitting the residual term $\be(\bl) = \bw(\bl) - \tilde{\bw}(\bl)$ from (\ref{eq:gridmeshed_hierarchy}) leads to advantages in sampling, but possible oversmoothing of the latent spatial surface due to the fact that $\text{var}[\tilde{\bw}(\bl)] < \text{var}[\bw(\bl)]$. In the conditionally conjugate Gaussian setting, such biases can be partly corrected \citep{gp_pp_biasadj, grips}. Certain ad-hoc solutions may be available by allowing spatial variation of $\gamma_j$, i.e. replacing it with $\gamma_j(\bl)$. However, we may choose to ignore the residual term because (1) it is common to assume smoother surfaces with non-Gaussian data, (2) we can choose $\calS$ to be very large, reducing possible biases, (3) we can revert to model (\ref{eq:meshed_hierarchy}) by setting $\calS = \calT$. Posterior sampling for (\ref{eq:gridmeshed_hierarchy}) proceeds via Algorithm~\ref{algorithm:gridmeshed_posterior}.

\begin{algorithm}[H]
{ \small
  \caption{Posterior sampling of model (\ref{eq:gridmeshed_hierarchy}).}\label{algorithm:gridmeshed_posterior}
  \begin{algorithmic}[1]
  \Statex Initialize $\bbeta_j^{(0)}$ and $\gamma_j^{(0)}$ for $j=1, \dots, q$, $\bw_{\calS}^{(0)}$, and $\btheta^{(0)}$
  \Statex \textbf{for} $t \in \{1, \dots, T^*, T^* + 1, \dots, T^* + T\}$ \textbf{do} \Comment{{\footnotesize MCMC loop}}
    \foritem for $j=1, \dots, q$, sample $\bbeta_j^{(t)} \given \by_\calT, \tilde{\bw}_{\calT}^{(t-1)}, \gamma_j^{(t-1)}$
    \foritem for $j=1, \dots, q$, sample $\gamma_j^{(t)} \given \by_\calT, \tilde{\bw}_{\calT}^{(t-1)}, \bbeta_j^{(t)}$
    \foritem sample $\btheta^{(t)} \given \by_{\calT}, \bw^{(t-1)}_{\calS}, \{ \bbeta_j^{(t)}, \gamma_j^{(t)} \}_{j=1}^q$
    \foritem for $i=1, \dots, M$, sample $\bw_{i}^{(t)} \given \bw_{\mb(i)}^{(t)}, \by_i, \btheta^{(t)}, \{ \bbeta_j^{(t)}, \gamma_j^{(t)} \}_{j=1}^q$ \Comment{{\footnotesize reference sampling}}
  \Statex \textbf{end for}
  \Statex Assuming convergence has been attained after $T^*$ iterations:
\Statex discard $\{ \bbeta_j^{(t)}, \gamma_j^{(t)} \}_{j=1}^q, \bw_{\calS}^{(t)}, \btheta^{(t)}$ for $t = 1, \dots, T^*$
\Statex \textbf{Output:} Correlated sample of size $T$ with density \[ \{ \bbeta_j^{(t)}, \gamma_j^{(t)} \}_{j=1}^q, \bw_{\calS}^{(t)},  \btheta^{(t)} \sim \pi(\{ \bbeta_j, \gamma_j \}_{j=1}^q, \bw_{\calS}^{(t)},  \btheta \mid \by_{\calT}).\]
\end{algorithmic} }
\end{algorithm}

\begin{figure}  
    \centering
\resizebox{.95\columnwidth}{!}{%
\begin{tikzpicture}
		\node [latent] (6) at (-9, 1) {\Large $\bw_{11}$};
		\node [latent] (9) at (-7.5, 2.75) {$\tilde{\bw}(\bl)$};
		\node [obs] (13) at (-6, 3.75) {$y_1(\bl)$};
		\node  (91) at (-6, 3.1) {$\vdots$};
		\node [obs] (14) at (-6, 2.25) {$y_q(\bl)$};
		
		\node [latent] (0) at (-9, 5) {\Large $\bw_{12}$};
		\node [latent] (10) at (-7.5, 6.75) {$\tilde{\bw}(\bl)$};
		\node [obs] (15) at (-6, 7.75) {$y_1(\bl)$};
		\node  (93) at (-6, 7.1) {$\vdots$};
		\node [obs] (16) at (-6, 6.25) {$y_q(\bl)$};
		
		\node [latent] (2) at (-1, 5) {{\Large $\bw_{22}$}};
		\node [latent] (11) at (0.5, 6.75) {$\tilde{\bw}(\bl)$};
		\node [obs] (17) at (2, 7.75) {$y_1(\bl)$};
		\node  (94) at (2, 7.1) {$\vdots$};
		\node [obs] (18) at (2, 6.25) {$y_q(\bl)$};
		
		\node [latent] (8) at (-1, 1) {\Large $\bw_{21}$};
		\node [latent] (12) at (0.5, 2.75) {$\tilde{\bw}(\bl)$};
		\node [obs] (19) at (2, 3.75) {$y_1(\bl)$};
		\node  (92) at (2, 3.1) {$\vdots$};
		\node [obs] (20) at (2, 2.25) {$y_q(\bl)$};
		
		\node [latent] (29) at (-9, 9) {\Large $\bw_{13}$};
		\node  (32) at (-7, 10.25) {$\qquad \cdots \qquad $};
		
		\node [latent] (30) at (-1, 9) {\Large $\bw_{23}$};
		\node  (33) at (1, 10.25) {$\qquad \cdots \qquad $};
		
		\node [latent] (31) at (7, 9) {\Large $\bw_{33}$};
		\node  (34) at (9, 10.25) {$\qquad \cdots \qquad $};
		
		\node [latent] (22) at (7, 5) {\Large $\bw_{32}$};
		\node  (35) at (9, 6.25) {$\qquad \cdots \qquad $};
		
		\node [latent] (21) at (7, 1) {\Large $\bw_{31}$};
		\node  (36) at (9, 2.25) {$\qquad \cdots \qquad $};
		
		\node  (37) at (-1, 11.5) {{\Large \rotatebox{90}{$\cdots$}}};
		\node  (38) at (12, 5) {{\Large $\cdots$ }};
		
		\edge {6} {9};
		\edge {9} {13};
		\edge {9} {14};
		\edge {6} {8};
		\edge {8} {2};
		\edge {6} {0};
		\edge {0} {2};
		\edge {0} {10};
		\edge {10} {16};
		\edge {10} {15};
		\edge {2} {11};
		\edge {11} {18};
		\edge {11} {17};
		\edge {8} {12};
		\edge {12} {19};
		\edge {12} {20};
		\edge {0} {29};
		\edge {29} {30};
		\edge {2} {30};
		\edge {8} {21};
		\edge {2} {22};
		\edge {21} {22};
		\edge {22} {31};
		\edge {30} {31};
		\edge {29} {32};
		\edge {30} {33};
		\edge {31} {34};
		\edge {22} {35};
		\edge {21} {36};
		
		\plate {data1} {(9)(13)(14)} {$\bl \in \calT\cap \calU_{11}$} ;
		\plate {data2} {(10)(15)(16)} {$\bl \in \calT\cap \calU_{12}$} ;
		\plate {data4} {(11)(17)(18)} {$\bl \in \calT\cap \calU_{22}$} ;
		\plate {data5} {(12)(19)(20)} {$\bl \in \calT\cap \calU_{21}$} ;
		
		\plate {dataplus1} {(32)} {$\bl \in \calT\cap \calU_{13}$} ;
		\plate {dataplus2} {(33)} {$\bl \in \calT\cap \calU_{23}$} ;
		\plate {dataplus3} {(34)} {$\bl \in \calT\cap \calU_{33}$} ;
		\plate {dataplus4} {(35)} {$\bl \in \calT\cap \calU_{32}$} ;
		\plate {dataplus5} {(36)} {$\bl \in \calT\cap \calU_{31}$} ;
\end{tikzpicture}
}

    \caption{Directed acyclic graph representing a special case of model (\ref{eq:gridmeshed_hierarchy}), for locations at which at least one outcome is observed. For simplicity, we consider $\calS \cap \calT = \emptyset$ and omit the directed edges from $(\bbeta_j, \gamma_j)$ to each $y_j(\bl)$. If $y_j(\bl)$ is unobserved and therefore $\bl \notin \calT_j$, the corresponding node is missing.}
    \label{fig:gridqmgpdag}
\end{figure}

\section{Choice of DAG and partition}
\begin{figure}[H]
    \centering
\resizebox{.4\columnwidth}{!}{%
\begin{tikzpicture}
\path (2, 4.3) node (desc) {{\tiny Partitioning scheme for $\Pi_{\btheta, \calG}^{(1)}$ }};
\draw[line width=0.05cm] (0,0) -- (4,0) -- (4,4) -- (0,4) -- (0,0);
\draw[ dash pattern=on .2cm off .1cm] (0,2) -- (4,2);
\draw[ dash pattern=on .2cm off .1cm] (2,0) -- (2,4);
\path (1,1) node (w1) {$\bw_1$};
\path (3,1) node (w2) {$\bw_2$};
\path (1,3) node (w3) {$\bw_3$};
\path (3,3) node (w4) {$\bw_4$};
\path (-.3, 0) node (D) {$\calD$};
\end{tikzpicture}
}
\resizebox{.4\columnwidth}{!}{%
\begin{tikzpicture}
\path (2, 4.3) node (desc) {{\tiny Partitioning scheme for $\Pi_{\btheta, \calG}^{(2)}$ }};
\draw[line width=0.05cm] (0,0) -- (4,0) -- (4,4) -- (0,4) -- (0,0);
\draw[ dash pattern=on .2cm off .1cm] (0,2) -- (4,2);
\draw[ dash pattern=on .2cm off .1cm] (2,0) -- (2,4);
\draw[ dash pattern=on .05cm off .05cm] (3,0) -- (3,4);
\path (1,1) node (w1) {$\bw_1$};
\path (1,3) node (w3) {$\bw_3$};
\path (2.5,1) node (w41) {$\bw_{2,1}$};
\path (2.5,3) node (w21) {$\bw_{4,1}$};
\path (3.5,1) node (w22) {$\bw_{2,2}$};
\path (3.5,3) node (w42) {$\bw_{4,2}$};
\path (-.3, 0) node (D) {$\calD$};
\end{tikzpicture}
}
\caption{Illustration of the two partitioning schemes. On the right, we juxtapose the second partitioning scheme to clarify the changes relative to the scenario on the left.}
\label{fig:partitionchoice}
\end{figure}
Spatially meshed models on the same partition of $\calS$ can be compared in terms of the sparsity of $\calG$. If edges are added to a sparse DAG $\calG_1$ to obtain $\calG_2$, the child process $\Pi_{\calG_2}$ is closer to the parent process $\Pi_{\btheta}$ (in a Kullback-Leibler (KL) sense) relative to $\Pi_{\calG_1}$ \citep{meshedgp}. For treed DAGs and recursive domain partitioning, the KL divergence of $\Pi_{\calG}$ from $\Pi$ can be reduced by increasing the block size at the root nodes \citep{spamtrees}.
Here, we analyse the modeling implications different non-nested domain partitions have, while using the same DAG structure to govern dependence between partition regions. This scenario occurs e.g. when constructing a cubic MGP model (QMGP). 

We consider two partitions of the x-coordinate axis within a $2\times 2$ axis-parallel partitioning scheme (Figure \ref{fig:partitionchoice}) and construct $\Pi_{\calG}^{(i)}$, $i=1,2$ based on each partitioning scheme. 
According to the first partitioning scheme, $\bw_{\calS}$ (in short, $\bw$) is partitioned as $\bw = \{\bw_1, \bw_2, \bw_3, \bw_4 \}$ whereas with the alternative we have $\bw = \{\bw_1^*, \bw_{2,2}, \bw_3^*, \bw_{4,2} \}$ where $\bw_1^* = \{\bw_1, \bw_{2,1} \} $ and $\bw_3^* = \{ \bw_3, \bw_{4,1} \}$. When analysing the relative KL divergence of these two models from $\Pi$, we see
\begin{align*}
    KL(\pi \| \pi_{\calG}^{(2)}) - KL(\pi \| \pi_{\calG}^{(1)}) &= \int \log \frac{\pi(\bw)}{\pi_{\calG}^{(2)}(\bw)} \pi(\bw) d\bw - \int \log \frac{\pi(\bw)}{\pi_{\calG}^{(1)}(\bw)} \pi(\bw) d\bw \\
    &= \int \log \pi_{\calG}^{(1)}(\bw) \pi(\bw) d\bw - \int \log \pi_{\calG}^{(2)}(\bw) \pi(\bw) d\bw \\
    &= \int \left( \log \pi_{\calG}^{(1)}(\bw)  - \log \pi_{\calG}^{(2)}(\bw) \right) \pi(\bw) d\bw
\end{align*}
Since we fix the same $\calG$ across partitions, we have 
\begin{align*}
    \pi_{\calG}^{(1)}(\bw_\calS) &= \pi(\bw_1) \pi(\bw_2 \given \bw_1) \pi(\bw_3 \given \bw_1) \pi(\bw_4 \given \bw_2, \bw_3)\\
    &=  \pi(\bw_1) \pi(\bw_{2,1} \given \bw_1) p(\bw_{2,2} \given \bw_{1}, \bw_{2,1}) \pi(\bw_3 \given \bw_1) \pi(\bw_{4,1} \given \bw_{2,1}, \bw_{2,2}, \bw_{3} )\ \cdot \\
    &\qquad \cdot\ \pi(\bw_{4,2} \given \bw_{2,1}, \bw_{2,2}, \bw_{3}, \bw_{4,1})\\
    \pi_{\calG}^{(2)}(\bw_\calS) &= \pi(\bw_1^*) \pi(\bw_{2,2} \given \bw_1^*) \pi(\bw_3^* \given \bw_1^*) \pi(\bw_{4,2} \given \bw_{2,2}, \bw_3^*) \\
    &= \pi(\bw_1) \pi(\bw_{2,1} \given \bw_1) p(\bw_{2,2} \given \bw_{1}, \bw_{2,1}) \pi(\bw_3 \given \bw_1, \bw_{2,1}) \pi(\bw_{4,1} \given \bw_1, \bw_{2,1}, \bw_{3} )\ \cdot \\
    &\qquad \cdot \ \pi(\bw_{4,2} \given \bw_{2,2}, \bw_{3}, \bw_{4,1}),
\end{align*} 
and therefore the sign of $KL(\pi \| \pi_{\calG}^{(2)}) - KL(\pi \| \pi_{\calG}^{(1)})$ depends on 
\begin{align*}
    \log\frac{\pi_{\calG}^{(1)}(\bw)}{\pi_{\calG}^{(2)}(\bw)} &= \log\left(  \frac{\pi(\bw_3 \given \bw_1)}{\pi(\bw_3 \given  \textcolor{teal}{\bw_{2,1}}, \bw_1 )} \frac{\pi(\bw_{4,1} \given \textcolor{teal}{\bw_{2,2}}, \bw_{2,1}, \bw_{3} )}{\pi(\bw_{4,1} \given \textcolor{teal}{\bw_1}, \bw_{2,1}, \bw_{3} )} \frac{\pi(\bw_{4,2} \given \textcolor{teal}{\bw_{2,1}}, \bw_{2,2}, \bw_{3}, \bw_{4,1})}{\pi(\bw_{4,2} \given \bw_{2,2}, \bw_{3}, \bw_{4,1})} \right),
\end{align*}
where we see that the performance of $\Pi_{\calG}^{(1)}$ relative to $\Pi_{\calG}^{(2)}$ in approximating $\Pi$ is undetermined because there is no ordering between the number of edges in $\Pi_{\calG}^{(1)}$ and $\Pi_{\calG}^{(2)}$. Nevertheless, the above discussion remains useful in practice when the reference set $\calS$ is chosen at observed locations. For example, if data are unavailable at $(2,1)$, then $\bw_{2,1}$ has length zero, and one would then choose $\Pi_{\calG}^{(1)}$ over $\Pi_{\calG}^{(2)}$ if uncertainty about $\bw_{4,1}$ is reduced by knowledge of $\bw_{2,2}$ more than it is reduced by knowledge of $\bw_1$.

\section{Gradient-based sampling} \label{appx:melange_proof}
We outline proofs for propositions of Section \ref{sec:melange}. \\
\textbf{Proposition \ref{prop:smmala_is_gibbs}}. In the hierarchical model $\balpha \sim N_k(\balpha; \bm_{\alpha}, \bV_{\alpha})$, $\bx \given \balpha, S \sim N_n(\bx; A\balpha, S)$, consider the following proposal for updating $\balpha \given \bx, S$:
\[ \balpha^* = \balpha + \frac{\varepsilon_1^2}{2} \bG_{\balpha} \nabla_{\balpha}  \log p(\balpha \mid \others) + \varepsilon_2 \bG_{\balpha}^{\frac{1}{2}} \bu, \]
where $\bu \sim N_n(0, I_{n})$, and we set $\varepsilon_1 = \sqrt{2}$, $\varepsilon_2 = 1$. Then, $q(\balpha^* \given \balpha) = p(\balpha^* \given \bx, S)$, i.e. this modified SM-MALA proposal leads to always accepted Gibbs updates.
\begin{proof} We compute
\begin{align*}
    \nabla_{\balpha} \log p(\balpha \given \others) &= \nabla_{\balpha} \log p(\bx \given \balpha, S) \pi(\balpha) = \nabla_{\balpha}\log\left\{ N_n(\bx; A\balpha, S) N_k(\balpha; \bm_{\alpha}, \bV_{\alpha}) \right\} \\
    &= - \frac{1}{2} \nabla_{\balpha} \{ (\balpha - \bm_{\alpha})^\top \bV_{\alpha}^{-1} (\balpha - \bm_{\alpha}) + (\bx - A \balpha)^\top S^{-1} (\bx - A \balpha) \} \\
    &= A^\top S^{-1} \bx + \bV_{\alpha}^{-1}\bm_{\alpha} - \left(A^\top S^{-1} A + \bV_{\alpha}^{-1} \right) \balpha 
\end{align*}
from which we immediately find $\bG_{\balpha} = \left( A^\top S^{-1} A  + \bV_{\alpha}^{-1} \right)^{-1}$. Then, the update is
\begin{align*}
    \balpha^* &= \balpha + \frac{\varepsilon_1^2}{2} \left( A^\top S^{-1} A  + \bV_{\alpha}^{-1} \right)^{-1} \left( A^\top S^{-1} \bx + \bV_{\alpha}^{-1}\bm_{\alpha} - \left(A^\top S^{-1} A + \bV_{\alpha}^{-1} \right)\balpha \right) + \tilde{\bu} \\
    &= \frac{\varepsilon_1^2}{2} \left( A^\top S^{-1} A  + \bV_{\alpha}^{-1} \right)^{-1} \left( A^\top S^{-1} \bx + \bV_{\alpha}^{-1}\bm_{\alpha}\right) - \left(1- \frac{\varepsilon_1^2}{2} \right)\balpha + \tilde{\bu},
\end{align*}
where $\tilde{\bu} \sim N(\bzero, \varepsilon_2^2 \left( A^\top S^{-1} A  + \bV_{\alpha}^{-1} \right)^{-1} )$. Setting $\varepsilon_1 = \sqrt{2}$ and $\varepsilon_2=1$ leads to the Gibbs update one obtains from a Gaussian likelihood and a Gaussian conjugate prior. In fact, since $q(\balpha^* \given \balpha) = p(\balpha^* \given \bx, S)$ then the acceptance probability for $\balpha^*$ is $\frac{p(\balpha^* \given \bx, S)q(\balpha \given \balpha^* )}{p(\balpha \given \bx, S)q(\balpha^* \given \balpha)} = 1$.
\end{proof}

\textbf{Proposition \ref{prop:simpa_converges}}. 
Suppose $\pi$ is everywhere non-zero and twice differentiable so that $\bg_{\bx}$ and $\bG_{\bx}$ are well defined. Let $\varepsilon >0$, $K \subset \Re^d$, $D > 0$. Additionally assume that if $\bx_{(m)} \in  K$ with $\text{dist}(\bx_{(m)}, K^c) = u$ with $0\leq u \leq 1$ then the proposal is changed to $\bx_{(\text{new})} \sim N( \bx_{(m)} + \frac{\varepsilon^2}{2} \bM_{(T^\text{adapt})} \tilde{\bg}_{\bx_{(m)}}, \varepsilon^2 \bM_{(T^\text{adapt})})$. Then, Algorithm \ref{alg:simpa} converges in distribution to $P$.

\begin{proof}
We show that SiMPA satisfies the assumptions of Theorem 21 of \cite{craiuetal15}. Algorithm \ref{alg:simpa} has by construction bounded jumps, no adaptation outside $K$ after iteration $T^{\text{adapt}}$ and the fixed kernel outside $K$ is bounded above by $(2\pi)^{-d/2} |\bM_{T^{\text{adapt}}}|^{1/2}$. Because we bound $\bg_{\bx}$ and $\bG_{\bx}$ with $D$ by using $\tilde{\bg}_{\bx}$ and $\tilde{\bG}_{\bx}$, the adaptive proposal kernel inside $K$ is $Q_{\delta}(\bx', \bx)$ where $\delta \in \Delta$ and $\Delta$ is a compact index set. Because outside $K$ we use a fixed proposal kernel with continuous densities with respect to Lebesgue measure and we assumed that the target density $p(\cdot)$ is also continuous, the $\epsilon$-$\delta$ condition holds (eq (6) in \citealt{craiuetal15}). Continuity of the target density and the proposal kernels hold by assumption. These assumptions are sufficient for the algorithm to satisfy the containment condition. The additional requirement to achieve convergence is diminishing adaptation, which holds by construction given the decreasing sequence $\{\gamma_m\}$. 
\end{proof}

\section{Coregionalization of MGPs}
\subsection{Equivalency result}
\begin{proposition} A $q$-variate MGP on a fixed DAG $\calG$, a domain partition $\mathbf{T}$, and a LMC cross-covariance function $\Cov_{\btheta}$ is equal in distribution to a LMC model built upon $k$ independent univariate MGPs, each of which is defined on the same DAG $\calG$ and the same domain partition $\mathbf{T}$.
\end{proposition}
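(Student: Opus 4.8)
The plan is to verify the claimed equality of laws by matching the two node-wise factorizations. Write $L_A = I_{n_A} \otimes \bLambda$ for the lifting map attached to any location set of size $n_A$, so that the LMC construction reads $\bw_A = L_A \bv_A$ and every cross-covariance block of the base process takes the Kronecker form $\Cov_{A,B} = L_A \brho_{A,B} L_B^\top$. Since $\pi_\calG(\bw_\calS) = \prod_{i=1}^M \pi(\bw_i \mid \bw_{[i]})$ and the target $\prod_{h=1}^k \pi_\calG^{(h)}(v^{(h)}_\calS) = \prod_{i=1}^M \prod_{h=1}^k \pi(v^{(h)}_i \mid v^{(h)}_{[i]})$ both factorize over the same DAG $\calG$, it suffices to establish two per-node identities: first that $\bw_i \mid \bw_{[i]}$ is the image under $L_i$ of $\bv_i \mid \bv_{[i]}$, and second that $\pi(\bv_i \mid \bv_{[i]}) = \prod_{h=1}^k \pi(v^{(h)}_i \mid v^{(h)}_{[i]})$. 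Multiplying over $i=1,\dots,M$ then yields the result.

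For the first identity I would substitute the Kronecker forms into the Gaussian conditional parameters $\bH_i = \Cov_{i,[i]} \Cov_{[i]}^{-1}$ and $\bR_i = \Cov_i - \bH_i \Cov_{[i],i}$ and check that the $\bLambda$-factors telescope: the conditional mean becomes $\bH_i \bw_{[i]} = L_i \brho_{i,[i]} \brho_{[i]}^{-1} \bv_{[i]} = L_i \bH_i^{(v)} \bv_{[i]}$ and the conditional covariance becomes $\bR_i = L_i\bigl(\brho_i - \brho_{i,[i]} \brho_{[i]}^{-1} \brho_{[i],i}\bigr) L_i^\top = L_i \bR_i^{(v)} L_i^\top$, so that $\bw_i \mid \bw_{[i]}$ is exactly the pushforward of $\bv_i \mid \bv_{[i]}$ under $L_i$. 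The second identity is then immediate from the LMC assumption: because the $k$ latent components are independent, $\brho_{A,B}$ is block-diagonal across the $k$ margins, hence so are $\bH_i^{(v)} = \brho_{i,[i]} \brho_{[i]}^{-1}$ and $\bR_i^{(v)}$, and the Gaussian conditional of $\bv_i$ splits into the product of $k$ univariate-MGP conditionals $\pi(v^{(h)}_i \mid v^{(h)}_{[i]})$, each driven by $\rho_h$ on the same $\calG$ and $\mathbf{T}$.

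The main obstacle is that when $k<q$ the base process with cross-covariance $\bLambda\brho\bLambda^\top$ is degenerate: each block $\Cov_{[i]} = L_{[i]}\brho_{[i]} L_{[i]}^\top$ has rank $kn_{[i]} < qn_{[i]}$ and is not invertible, so the telescoping cannot literally use $\Cov_{[i]}^{-1}$. I would handle this by replacing the inverse with the Moore--Penrose pseudoinverse throughout and exploiting that $\bLambda$ has full column rank, so that $L_{[i]}^{+} = I_{n_{[i]}} \otimes (\bLambda^\top\bLambda)^{-1}\bLambda^\top$ satisfies $L_{[i]}^{+} L_{[i]} = I$. With $\brho_{[i]}$ positive definite one obtains $\Cov_{[i]}^{+} = L_{[i]}(L_{[i]}^\top L_{[i]})^{-1}\brho_{[i]}^{-1}(L_{[i]}^\top L_{[i]})^{-1}L_{[i]}^\top$, and the same cancellations persist because both $\bw_{[i]} = L_{[i]}\bv_{[i]}$ and the products $L_{[i]}^{+}L_{[i]}$ collapse to identities on the relevant factors; the degenerate conditional $\bw_i\mid\bw_{[i]}$ is then correctly read as a singular Gaussian supported on the column space of $L_i$, matching the pushforward interpretation of the first identity. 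Once both identities hold at every node, taking the product over $i$ gives $\pi_\calG(\bw_\calS) = \prod_{h=1}^k \pi_\calG^{(h)}(v^{(h)}_\calS)$, which is the asserted equality in distribution.
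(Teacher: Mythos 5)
Your proof is correct and takes essentially the same route as the paper's: you telescope the Kronecker factors $I\otimes\bLambda$ through $\bH_i = \Cov_{i,[i]}\Cov_{[i]}^{-1}$ and $\bR_i$ using the full-column-rank Moore--Penrose pseudoinverse (your formula $\Cov_{[i]}^{+} = L_{[i]}(L_{[i]}^\top L_{[i]})^{-1}\brho_{[i]}^{-1}(L_{[i]}^\top L_{[i]})^{-1}L_{[i]}^\top$ coincides with the paper's $(I_{n_{[i]}}\otimes(\bLambda^\top)^{+})\brho_{[i]}^{-1}(I_{n_{[i]}}\otimes\bLambda^{+})$), then split $\pi(\bv_i \given \bv_{[i]})$ into $k$ univariate-MGP conditionals via block-diagonality of $\brho$ across factors (the paper makes this explicit with a factor-reordering permutation $K_i$), and finally multiply over the nodes of $\calG$. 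Your pushforward/singular-Gaussian reading of the degenerate $k<q$ case is a slightly more careful rendering of the same step, which the paper carries out informally as a density identity $\pi(\bw_i \given \bw_{[i]}) = \pi(\bv_i \given \bv_{[i]})$.
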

\begin{proof}
For $i=1, \dots, M$, we want to show that the conditional densities $\pi(\bw_i \mid \bw_{[i]}) = N(\bw_i; \bH_i \bw_{[i]}, \bR_i)$ a $q$ variate MGP based on LMC cross-covariance $\Cov(\bl, \bl') = \bLambda \brho(\bl, \bl) \bLambda^\top$ (we drop $\btheta$ and $\bPhi$ subscripts on $\Cov$ and $\brho$, respectively, for simplicity) can be obtained equivalently via a LMC in which the $k$ margins are univariate MGPs
\begin{equation} 
\begin{aligned}
    \Cov_{i, [i]} &= (I_{n_i} \otimes \bLambda) \brho_{i, [i]} (I_{n_{[i]}} \otimes \bLambda^\top) \qquad \Cov_{[i]}^{-1} = (I_{n_{[i]}} \otimes (\bLambda^\top)^{+}) \brho_{[i]}^{-1} (I_{n_{[i]}} \otimes \bLambda^{+}) \\
    \bH_i \bw_{[i]} &= \Cov_{i, [i]} \Cov_{[i]}^{-1} \bw_{[i]} \\
    &= (I_{n_i} \otimes \bLambda) \brho_{i, [i]} (I_{n_{[i]}} \otimes \bLambda^\top) (I_{n_{[i]}} \otimes (\bLambda^\top)^{+}) \brho_{[i]}^{-1} (I_{n_{[i]}} \otimes \bLambda^{+}) (I_{n_{[i]}} \otimes \bLambda) \bv_{[i]} \\
    &= (I_{n_i} \otimes \bLambda) \brho_{i, [i]} (I_{n_{[i]}} \otimes \bLambda^\top (\bLambda^\top)^{+}) \brho_{[i]}^{-1} (I_{n_{[i]}} \otimes \bLambda^{+} \bLambda) \bv_{[i]}\\
    &= (I_{n_i} \otimes \bLambda) \brho_{i, [i]} \brho_{[i]}^{-1} \bv_{[i]} = (I_{n_i} \otimes \bLambda) \Ddot{\bH}_i \bv_{[i]},
\end{aligned}
\end{equation}
where we denoted $\Ddot{\bH}_i = \brho_{i, [i]} \brho_{[i]}^{-1}$ and $\bLambda^{+}$ denotes the Moore-Penrose pseudoinverse $\bLambda^+ = (\bLambda^\top \bLambda)^{-1}$ (which exists because $\bLambda$ is assumed of full column rank), and therefore $\bLambda^+ \bLambda = I_k = \bLambda^\top (\bLambda^\top)^{+}$. Similarly,
\begin{equation} 
\begin{aligned}
    \bR_i &= \Cov_i - \bH_i \Cov_{[i], i} = (I_{n_i} \otimes \bLambda) \brho_i (I_{n_i} \otimes \bLambda^\top) - 
    (I_{n_i} \otimes \bLambda) \brho_{i, [i]} \brho_{[i]}^{-1} \brho_{[i], i} (I_{n_{i}} \otimes \bLambda^\top) \\
    &= (I_{n_i} \otimes \bLambda) \left( \brho_i  - 
     \brho_{i, [i]} \brho_{[i]}^{-1} \brho_{[i], i} \right) (I_{n_{i}} \otimes \bLambda^\top) =(I_{n_i} \otimes \bLambda)  \Ddot{\bR}_i (I_{n_{i}} \otimes \bLambda^\top).
\end{aligned}
\end{equation}
Then
\begin{equation}
\begin{aligned}
    \pi(\bw_i \given \bw_{[i]}) &\propto |\bR_i|^{-\frac{1}{2}} \exp\left\{ -\frac{1}{2}(\bw_i - \bH_i \bw_{[i]})^\top \bR_i (\bw_i - \bH_i \bw_{[i]}) \right\} \\
    &= |(I_{n_i} \otimes \bLambda)  \Ddot{\bR}_i (I_{n_{i}} \otimes \bLambda^\top)|^{-\frac{1}{2}}\ \cdot \\
    &\qquad \cdot\ \exp\left\{ -\frac{1}{2}((I_{n_i} \otimes \bLambda)\bv_i - (I_{n_i} \otimes \bLambda)\Ddot{\bH}_i \bv_{[i]})^\top \ \cdot \right. \\
    &\left.\qquad\qquad\qquad \ \cdot \left( (I_{n_i} \otimes \bLambda)  \Ddot{\bR}_i (I_{n_{i}} \otimes \bLambda^\top) \right)^{-1} ((I_{n_i} \otimes \bLambda)\bv_i - (I_{n_i} \otimes \bLambda)\Ddot{\bH}_i \bv_{[i]}) \right\} \\
    &= |\Ddot{\bR}_i|^{-\frac{1}{2}} \exp \left\{ -\frac{1}{2}(\bv_i - \Ddot{\bH}_i \bv_{[i]})^\top \Ddot{\bR}_i^{-1} (\bv_i - \Ddot{\bH}_i \bv_{[i]}) \right\} = \pi(\bv_i \given \bv_{[i]}).
\end{aligned}
\end{equation}
We then proceed by reordering $\bv_i$, $\Ddot{\bH}_i$ and $\Ddot{\bR}_i$ by factor index (from $h=1, \dots, k$) rather than by location (see discussion above). After letting $K_i$ denote the appropriate permutation matrix that applies such reordering and letting $v_i^{(h)}$ be the $n_i \times 1$ vector whose elements are realizations of the $h$th latent factor at the reference subset $\calS_i$, we can write
\begin{align*}
    K_i \bv_i &= \begin{bmatrix} v_i^{(1)} \\ \vdots \\ v_i^{(k)} \end{bmatrix} \qquad\qquad  K_i \Ddot{\bH}_i \bv_i = \begin{bmatrix} \tilde{H}_i^{(1)} v_{[i]}^{(1)} \\ \vdots \\ \tilde{H}_i^{(1)}v_{[i]}^{(k)} \end{bmatrix} \\
    &K_i \Ddot{\bR}_i^{-1} K_i^\top = \text{blockdiag}\left\{ \tilde{R}^{(1)}_i , \dots, \tilde{R}^{(h)}_i \right\},
\end{align*}
where $\tilde{H}^{(h)}_i v^{(h)}_{[i]} = \rho^{(h)}_{i, [i]} \rho_{[i]}^{(h)^{-1}} v^{(h)}_{[i]}$ and $\tilde{R}^{(h)}_i = \rho^{(h)}_i  - \rho^{(h)}_{i, [i]} \rho_{[i]}^{(h)^{-1}} \rho^{(h)}_{[i], i} $, with $\rho^{(h)}_{i, [i]}$ denoting the correlation function of the $h$th LMC component evaluated between pairs of $\calS_i $ and $\calS_{[i]}$ and the other terms are defined analogously. Since reordering does not affect the joint density $\pi(\bv_i \given \bv_{[i]})$, we obtain
\begin{align*}
    \pi(K_i \bv_i \given \bv_{[i]}) = \pi(\bv_i \given \bv_{[i]}) &= \prod_{h=1}^k N(v^{(h)}_i; \tilde{H}^{(h)}_i, \tilde{R}^{(h)}_i).
\end{align*}
We have shown that the density of $(\bw_i \given \bw_{[i]})$ is the same as that of $(\bv_i \given \bv_{[i]})$ and that it can be written as a product of independent conditional densities. Then, for $i=1, \dots, M$:
\begin{align*}
    \pi_{\calG}(\bw_{\calS}) &= \prod_{i=1}^M \pi(\bw_i \given \bw_{[i]}) = \prod_{i=1}^M \pi(\bv_i \given \bv_{[i]}) = \prod_{i=1}^M \prod_{h=1}^k N(v^{(h)}_i; \tilde{H}^{(h)}_i, \tilde{R}^{(h)}_i) \\
    &= \prod_{h=1}^k \prod_{i=1}^M  N(v^{(h)}_i; \tilde{H}^{(h)}_i, \tilde{R}^{(h)}_i) = \prod_{h=1}^k \pi^{(h)}_{\calG}(v^{(h)}_{\calS}).
\end{align*}
We have shown that the meshed density $\pi_{\calG}$ at $\calS$ is equal to the product of $k$ independent meshed densities which are defined on the same DAG $\calG$ and the same partitioning of the spatial domain (i.e., $k$ independent MGPs).
\end{proof}

\subsection{Langevin methods for coregionalized MGPs} \label{sec:langevin_coreg}
We now show how Algorithm \ref{algorithm:meshed_posterior} is specified for the latent MGP model with LMC cross-covariance using \melange\ when targeting (\ref{eq:mgp_fullcond}). Let $K_i$ be the permutation matrix that reorders $\bv_i$ by factor, i.e. the $h$th block of $\tilde{\bv}_i = K_i \bv_i$ is the $n_i \times 1$ vector $v_i^{(h)}$, for $h=1,\dots,k$. Then, after letting $\bH_i = (I_{n_i} \otimes \bLambda) \Ddot{\bH}_i$ and $\bR_i = (I_{n_i} \otimes \bLambda) \Ddot{\bR}_i (I_{n_i} \otimes \bLambda^\top)$ and $r_j^{(h)} = v_j^{(h)} - \tilde{H}^{(h)}_{[j]\setminus \{ i\}} v^{(h)}_{[j]\setminus \{ i\}}$, the gradient $\nabla_{\bv_i} p(\bv_i \mid \others)$ can be found as we get
\begin{equation} \label{eq:gradient_gaussian}
    \begin{aligned}
    \nabla_{\bv_i} p(\bv_i \mid \others) &= - \Ddot{\bR}_i \left( \bv_i - \Ddot{\bH}_i \bv_{[i]} \right) + \bbf_i\\
    &= - K_i^\top \begin{bmatrix} 
    \tilde{R}_i^{(1)} \left(v_i^{(1)} - \tilde{H}_i^{(1)} v^{(1)}_{[i]} \right) +  \tilde{H}^{(1)^{\top}}_{i\to j} \tilde{R}^{(1)^{-1}}_{j} \left(r_j^{(1)} - \tilde{H}^{(1)}_{i\to j} v_i^{(1)}\right)
    \\ 
    \vdots \\
    \tilde{R}_i^{(k)} \left(v_i^{(k)} - \tilde{H}_i^{(k)} v^{(k)}_{[i]} \right) +  \tilde{H}^{(k)^{\top}}_{i\to j} \tilde{R}^{(k)^{-1}}_{j} \left(r_j^{(k)}- \tilde{H}^{(k)}_{i\to j} v_i^{(k)}\right)
    \end{bmatrix}+ \bbf_i,
    \end{aligned}
\end{equation}
where, letting $\calS_i = \{\bl_1, \dots, \bl_{n_i} \}$, we compute $\bbf_i = (\bbf_{i, \bl_{1}}^\top, \dots, \bbf_{i, \bl_{n_i}}^\top)^\top$ as the $n_i k \times 1$ vector whose $\bl$ block is \[ \bbf_{i, \bl} = \bLambda^\top \begin{bmatrix} \nabla_{\bv(\bl)} dF_1(y_1(\bl) \given \bv(\bl), \blambda_{[1,:]},  \bbeta_q, \gamma_q  )\\ \vdots \\ 
\nabla_{\bv(\bl)} dF_q(y_q(\bl) \given \bv(\bl), \blambda_{[q,:]},  \bbeta_q, \gamma_q  ) \end{bmatrix}. \]
For SM-MALA and SiMPA (Algorithm \ref{alg:simpa}) we compute 
\begin{equation} \label{eq:gmatrix_gaussian}
    \begin{aligned}
    \bG_{\bv_i}^{-1} &= K_i^\top\left( \oplus\left\{ \tilde{R}_{i}^{(h)} + \tilde{H}^{(h)^{\top}}_{i\to j} \tilde{R}_j^{(h)^{-1}} \tilde{H}^{(h)}_{i\to j} \right\}_{h=1}^k  + K_i \bF_i K_i^\top \right)K_i, \\
    \end{aligned}
\end{equation}
where $\oplus$ is the direct sum operator, $\bF_i = \oplus \{ \bA_i(\bl) \}_{\bl \in \calS_i}$, and after letting $x_j(\bl) =\blambda_{[j,:]}\bv(\bl)$, we compute $\bA_i(\bl) = -\sum_{j=1}^q \blambda_{[j,:]}^\top \blambda_{[j,:]} E\left[ \frac{\delta^2}{\delta^2 x_j(\bl)} \log dF_j(y_j(\bl) \given \bv(\bl), \blambda_{[j,:]}, \beta_j, \gamma_j) \right] $.

\subsection{Complexity in fitting coregionalized cubic MGPs} \label{appx:complexity_lmcqmgp}
We now consider model (\ref{eq:latent_gaussian_lmc}) and replace the GP prior with an MGP based on LMCs (as in Section \ref{sec:meshed_lmc}) using a cubic mesh (Figure \ref{fig:qmgpdag}), whose main feature is that the number of parents of each reference node is at most $d$ when the dimension of the input space is $d$ (in spatial settings, $d=2$). The resulting coregionalized QMGP is implemented on $k$ factors to model dependence across $q\ge k$ outcomes, when at $n$ locations we observe at least one of them. We assume $\barcalT = \emptyset$, SiMPA updates at each block and let $H$ refer to the number of available processors for parallel computations.

In the resulting Algorithm \ref{algorithm:lmc_meshed_posterior}, step \ref{alg:meshed_posterior:step1} requires the update of $q$ sets of $p$ covariates plus $k$ factor loadings. SiMPA can be used here for an expected cost at iteration $m$ of $O(\gamma_mnq(p+k)^3 +  nq(p+k)^2)$ which is approximately $O(qn(p+k)^2)$ for large $m$ because $\gamma_m \downarrow 0$. The compute time is $O(\gamma_mnq(p+k)^3/H + nq(p+k)^2/H)$, respectively, because $(\bbeta_j, \blambda_{[j,:]}) \perp (\bbeta_h, \blambda_{[h,:]}) \given \by_\calT, \bv_\calS$. Step \ref{alg:meshed_posterior:step2} costs $O(qn)$ flops assuming a Metropolis update, and the compute time is $O(qn/H)$. Step \ref{alg:meshed_posterior:step3} involves the evaluation of $k$ independent sets of MGP densities, each of which is a product of $M$ Gaussian conditional densities. We make the simplifying assumption that $n_i \approx m \approx n/M$ and $n_{[i]} \le dm \approx dn/M$ for all $i=1, \dots, M$---we are taking $M$ partitions of size $m$ and a cubic mesh which attributes at most $d$ parents to each node in the DAG. The cost for this update is due to computing $\Ddot{\bR}_{i}$ for all $i$, which is $O(kM(dm)^3) = O(nkd^3 m^2)$ flops in $O(nkd^3 m^2/H)$ time. Finally, reference sampling of $\bv_i$, $i=1, \dots, M$, whose sizes are $mk$, is performed via SiMPA in $O(\gamma_m nm^2k^3 + nmk^2)$ flops and in $O(\gamma_m nm^2k^3/H + nmk^2/H)$ time, respectively, assuming that each color of $\calG$ includes at least $H$ nodes. 
In summary, the cost of a $k$-factor coregionalized QMGP fit via SiMPA is linear in $n$ and $q$, which may be large, quadratic on $k$ and $p$, which we assume relatively small, and cubic on the domain dimension $d$, which is 2 or 3 for the spatial and spatiotemporal settings on which we focus.

\section{Applications Supplement} \label{appx:applications}
In all our applications, all methods are configured to use up to 16 CPU threads in a workstation with 128GB memory and an AMD Ryzen 9 5950X CPU on the Ubuntu 22.04.2 LTS operating system and using Intel MKL version 2019.5.28 for BLAS/LAPACK. R package \texttt{meshed} (v.0.2) allows one to set the number of OpenMP \citep{dagum1998openmp} threads, whereas \texttt{Hmsc} takes advantage of parallelization via BLAS when performing expensive operations (e.g., \texttt{chol($\cdot$)}). The \texttt{R-INLA} package used to implement SPDE-INLA methods can similarly take advantage of multithreaded operations.

\subsection{Bivariate counts analysis on 750 synthetic datasets} \label{appx:750}
The comparison above is based on a single dataset; we replicate the same analysis on 750 smaller datasets. We generate Poisson data on a $50 \times 50$ regular grid, for a total of 2500 observations for $y_j(\bl) \sim Pois(\exp\{ \eta_j(\bl) \})$ where $\bolds{\eta}(\bl) = \bLambda \bv(\bl)$ and $\bv(\cdot)$ is a bivariate GP with independent Mat\'ern correlations with $\nu_j = 1/2$ for $j=1, 2$ and $\phi_2 = 2.5$. We choose $\phi_1 \in \{ 2.5, 12.5, 25\}$. We introduce missing values at $1/5$ of spatial locations independently for each outcome.
We fix the $2\times 2$ loading matrix via
\[ \bLambda = \begin{bmatrix} \lambda_{11} & \lambda_{12} \\ \lambda_{21} & \lambda_{22} \end{bmatrix} = \texttt{chol}\left( \begin{bmatrix} \lambda_1 & 0 \\ 0 & 1  \end{bmatrix} \cdot \begin{bmatrix}  
1 & \rho \\ \rho & 1 
\end{bmatrix} \cdot \begin{bmatrix}  \lambda_1 & 0 \\ 0 & 1  \end{bmatrix} \right),
\]
\begin{figure}
    \centering
    \includegraphics[width=.5\textwidth]{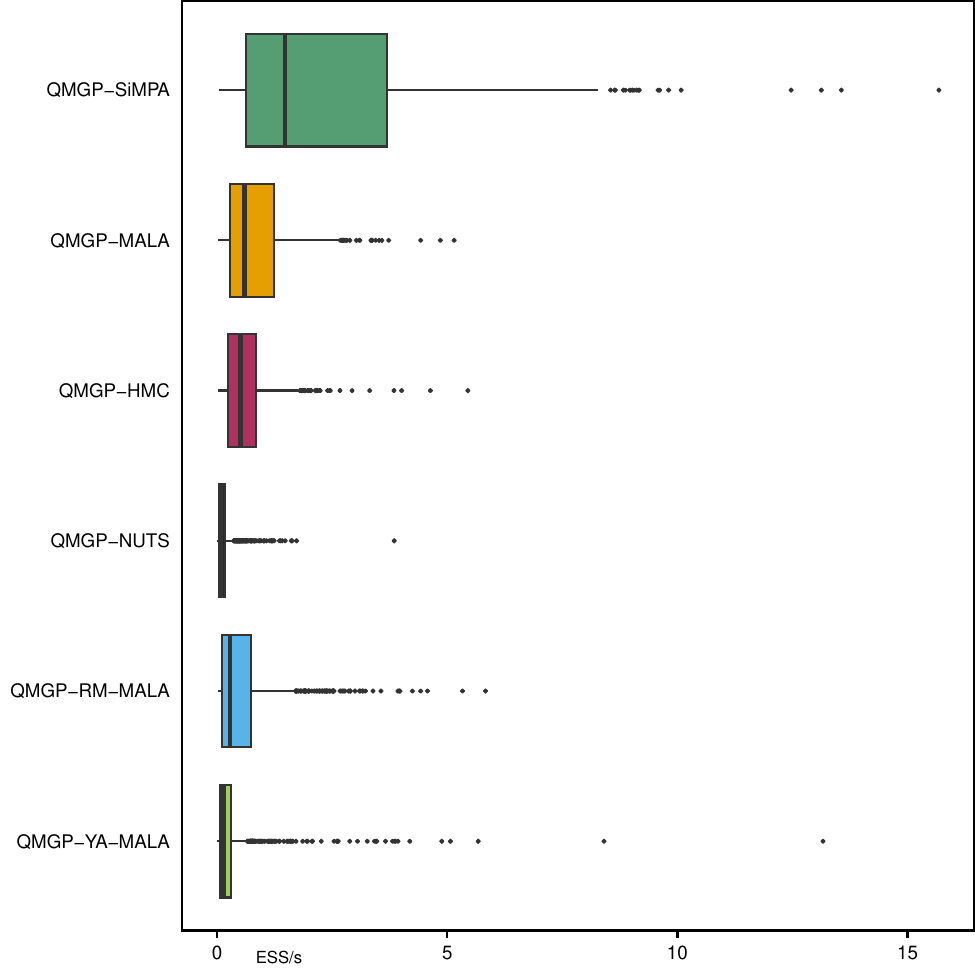}
    \caption{Efficiency in terms of ESS/s in the estimation of $\rho$ over 750 simulated datasets.}
    \label{fig:mvpoisson_estim_ess}
\end{figure}
which implies $\lambda_{11} = \lambda_1$, $\lambda_{12} = 0$, and $\lambda_{21}$ and $\lambda_{22}$ are such the latent correlation between the first and second margin is exactly $\rho$. We choose $\lambda_1 \in \{ \frac{\sqrt{2}}{2}, 2 \}$ and $\rho \in \{ -.9, -0.25, 0, 0.65, .9\}$. We generate 25 datasets for every combination of values of $\phi_1$, $\lambda_1$ and $\rho$. 
We target estimation of the latent correlation $\rho = \text{Corr}(w_1(\bl), w_2(\bl))$ in terms of absolute error and efficiency (ESS/s), along with the empirical coverage of 95\% intervals for the log-intensity for both outcomes. We compare SiMPA with several other methods -- all the coregionalized QMGP methods use parameter expansion as in \cite{grips}. Figure \ref{fig:mvpoisson_estim_ess} and Table \ref{tab:appx_lgcp750_summary_table} summarize our findings across the 750 datasets: SiMPA has low estimation error, high sampling efficiency, and excellent uncertainty quantification relative to all other tested methods. 

\begin{table}[]
    \centering
    \resizebox{0.5\textwidth}{!}{%
    \begin{tabular}{|l|ccc|}
\hline
 & $\rho$  & \multicolumn{2}{c|}{$\eta_{\text{test}}(\bl)$} \\
\multirow{-2}{*}{Method} & RMSE & RMSPE & Covg$_{\text{95\%}}$\\
\hline
QMGP-SiMPA & \textbf{0.08} & \textbf{0.45} & \textbf{0.95}\\
QMGP-MALA & 0.09 & 0.46 & 0.94\\
QMGP-HMC & 0.09 & 0.55 & 0.93\\
QMGP-NUTS & 0.16 & 0.46 & 0.93\\
QMGP-RM-MALA & 0.37 & 0.86 & 0.87\\
QMGP-YA-MALA & 0.42 & 1.83 & 0.07\\
QMGP(AG10)-NUTS & 0.56 & 0.91 & 0.94\\
QMTP(AG10)-NUTS & 0.47 & 0.46 & 0.92\\
SPDE-INLA & 0.21 & 0.66 & 0.67\\
\hline
\end{tabular} }
    \caption{Performance summary across 750 datasets in the estimation of the latent correlation and the linear predictor on the test sets.}
    \label{tab:appx_lgcp750_summary_table}
\end{table}

\subsection{Latent process sampler efficiency in multi-type data}
The analysis in the previous section models both outcomes as Poisson counts. In this section, we use the same setup and $\lambda_1 \in \{2.5, 12.5 \}$, but consider the following pairs of outcome types: $\{ (\text{Gaussian}, \text{Poisson}), (\text{Neg. Binomial}, \text{Binomial}), (\text{Neg. Binomial}, \text{Poisson}) \}$, for a total of 1500 datasets, of which 500 include a Binomial or Gaussian outcome and 1000 include a Poisson or Neg. Binomial outcome. Because we target a comparison of posterior sampling efficiency in integrating out the latent spatial effects via MCMC, we fix all unknowns ($\bLambda$, $\phi_1$, $\phi_2$) to their true values except for the latent process. For each dataset, we calculate ESS/s for samples of $\bw(\bl_i)$, $\bl_i = 1, \dots n $. After computing the median ESS/s as a summary efficiency measure for each dataset, we compute the mean of this measure over all datasets. Efficiency summary results are reported in Table \ref{tab:samplers_efficiency}. We also report each method's RMSPE and coverage about $\eta_j(\bl), j=1,2$ in Table \ref{tab:samplers_predict}. SiMPA is again more efficient than other methods in integrating out the spatial effects, while matching or outperforming them in  out-of-sample inference about model parameters.

\begin{table}[ht]
\centering
\begin{tabular}{|l|rrrr|}
  \hline
Method & Binomial & Gaussian & Negative Binomial & Poisson \\ 
  \hline
MALA & 1.15 & 8.35 & 1.31 & 2.81 \\ 
NUTS & 0.25 & 2.15 & 0.32 & 0.68 \\ 
SiMPA & \textbf{4.26} & \textbf{17.65} & \textbf{9.03} & \textbf{8.90} \\ 
SM-MALA & 1.81 & 7.85 & 3.45 & 3.59 \\ 
   \hline
\end{tabular}
\caption{Efficiency in posterior sampling of $\bw(\cdot)$, in terms of ESS/s, for different types of outcomes in the bivariate synthetic data application with multi-type outcomes.}
    \label{tab:samplers_efficiency}
\end{table}

\begin{table}[ht]
\centering
\resizebox{\textwidth}{!}{
\begin{tabular}{|l|rrrr|rrrr|}
  \hline
\multirow{2}*{Method}  & \multicolumn{4}{c|}{RMSPE} & \multicolumn{4}{c|}{Covg. 95\%} \\ 
 & Binomial & Gaussian & Neg. Bin. & Poisson & Binomial & Gaussian & Neg. Bin. & Poisson \\ 
  \hline
MALA & 0.450 & 0.332 & 14.206 & 1.198 & 0.897 & 0.942 & 0.899 & 0.930 \\ 
  NUTS & 0.449 & 0.328 & 13.932 & 1.195 & 0.932 & 0.944 & 0.934 & 0.941 \\ 
  SiMPA & \textbf{0.449} & \textbf{0.327} & \textbf{13.923} & \textbf{1.194} & \textbf{0.944} & \textbf{0.948} & \textbf{0.947} & \textbf{0.947} \\ 
  SM-MALA & \textbf{0.449} & \textbf{0.327} & 13.971 & 1.212 & 0.943 & \textbf{0.948} & 0.939 & 0.940 \\ 
   \hline
\end{tabular}
}
\caption{RMSPE in predicting $\eta_j(\bl)$ at locations in the test set and empirical coverage of 95\% credible intervals about $\eta_j(\bl)$, $j=1,2$, for different types of outcomes in the bivariate synthetic data application with multi-type outcomes.}
    \label{tab:samplers_predict}
\end{table}

\subsection{Multi-species N-mixture abundance modeling} \label{sec:spammix}
The total number of individuals of a certain animal species in a region is known as the local abundance.
Community ecologists seek to estimate abundance using spatially replicated count data of multiple species. At each spatial location, the observed counts correspond to a portion of the latent abundance of each of $q$ species. The unobserved abundance of species $j$ can be estimated via a model for count data that accounts for imperfect detection. See \cite{Royle2004}, \cite{Mimnaghetal22} and reference therein. In the context of joint species distribution models of count data, one lets the local abundance depend on covariates and latent variables accounting for cross-species dependence through a Poisson log-linear model, with the observed abundance then having a conditional binomial likelihood.
\begin{figure}
    \centering
    \includegraphics[width=0.9\textwidth]{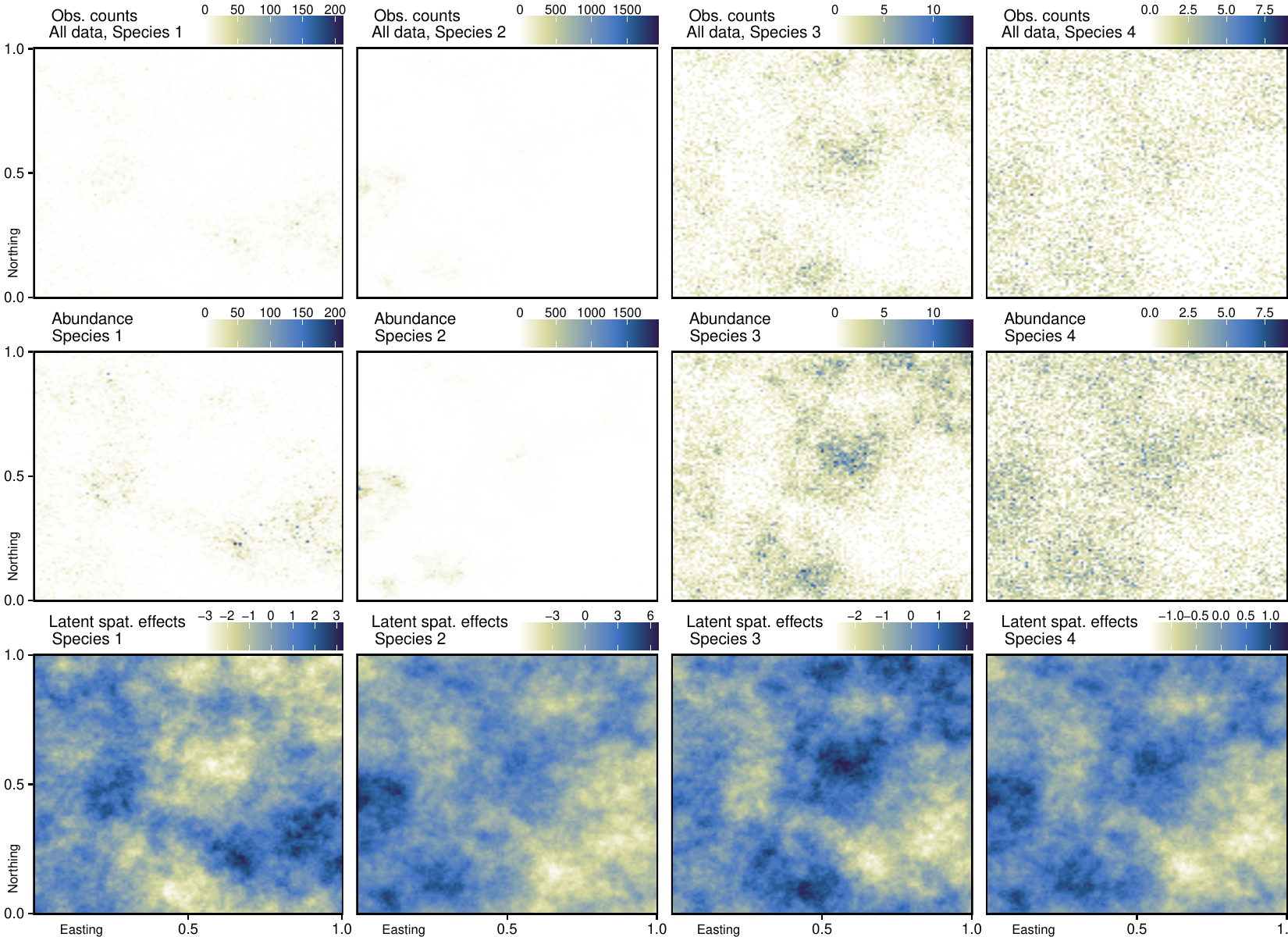}
    \caption{Simulated data on multi-species abundance. Top row: all counts observed with imperfect detection, including the 20\% missing from the training data; mid row: unobserved species abundance; bottom row: realization of $w_j(\bl) = \blambda_{[j,:]} \bv(\bl)$, $\bl \in \calD$.}
    \label{fig:abundance_plot_data}
\end{figure}
Here, we consider an extension of \cite{Mimnaghetal22} to include MGP random effects:
\begin{equation}\label{eq:nmixture}
\begin{alignedat}{3}
    v_h(\cdot) &\sim MGP_{\calG}(\bzero, \rho_h(\cdot,  \cdot;\btheta_h)),\qquad &&h=1,\dots, k \\
    w_j(\bl) &= \blambda_{[j,:]} \bv(\bl) &&j=1,\dots, q\\
    N_j(\bl) \mid \bbeta_j, \blambda_{[j,:]}, \bv(\bl) &\sim \text{Poisson}( \mu_j(\bl) ) &&\mu_j(\bl) = \exp\{ \bx_j(\bl)^\top\bbeta_j + w_j(\bl) \}\\
    y_j(\bl) \mid N_j(\bl), \bolds{\xi}_j &\sim \text{Binomial}\left(N_j(\bl), p_j(\bl) \right) \qquad &&p_j(\bl) = \left[ 1 + \exp\{-\bz_j(\bl)^\top \bolds{\xi}_j \} \right]^{-1},   
\end{alignedat}
\end{equation}
where we let a set of species-specific covariates $\bx_j(\bl)$ explain the latent species abundance and another set $\bz_j(\bl)$ impact the detection probability and hence the observed counts $y_j(\bl)$. A latent factor model with $k\le q$ spatial random effects is used to characterize dependence across species in abundances.
Applied goals include the estimation of $\bbeta_j$ and $\bxi_j$ for $j=1, \dots, q$, the cross-covariance function $\Cov_{\btheta} = \bLambda \brho(\bl, \bl',  \bPhi) \bLambda^\top$ via the estimation of $\btheta = (\text{vec}(\bLambda)^\top, \bPhi^\top)^\top$, and the local abundance of species $j$ at $\bl$, $N_j(\bl)$. Posterior computations for (\ref{eq:nmixture}) simplify by marginalizing $N_j(\bl)$ from the model likelihood; the marginal model is $
    p(y_j(\bl) \mid \text{---}) = \text{Poisson}\left( p_j(\bl) \mu_j(\bl)  \right)$.
After collecting posterior samples of $\bbeta_j, \bolds{\xi}_j, \bLambda, \bv$, we estimate $N_j(\bl)$ using the fact that $N_j(\bl) \mid N_j(\bl) > y_j(\bl) = y_j(\bl) + \tilde{N}_j(\bl)$, where $\tilde{N}_j(\bl) \sim \text{Poisson}([1-p_j(\bl)] \mu_j(\bl) )$. If $y_j(\bl)$ is missing, we proceed by first sampling from $\pi(\bv(\bl) \mid \bv_\calS, \bPhi)$, then $N_j(\bl) \sim \text{Poisson}(\mu_j(\bl))$. 

We simulate abundance data of $q=4$ species at $n=$14,400 spatial locations on a regular grid using model (\ref{eq:nmixture}). We sample $k=2$ latent factors from independent unrestricted GPs with exponential correlation and spatial decays $\phi_1 = \phi_2 = 4$. The factor loadings are set to $(\lambda_{11}, \lambda_{21}, \lambda_{31}, \lambda_{41}, \lambda_{22}, \lambda_{32}, \lambda_{42}) = (1.3, -0.65, -0.9, -0.3, 2, 0.35, 0.4)$; these values lead to latent spatial cross-species correlations ranging from $\text{corr}(w_3(\bl), w_1(\bl) ) 
\approx -0.93$ to $\text{corr}(w_4(\bl), w_2(\bl) ) \approx 0.95$. These correlations decrease for increasing spatial distances as modeled by the underlying exponential covariances. In order to generate the latent abundance and the observed counts at each location, we sample the covariate vector $(x(\bl) , \bz(\bl)^\top )^\top$ independently from a Gaussian distribution with correlation matrix $\Sigma_x$ whose off-diagonal elements are $\sigma_{x, z_1} = 0.8$, $\sigma_{x, z_2} = -0.3$ $\sigma_{z_1, z_2} = -0.7$. We let $(\beta_1, \beta_2, \beta_3, \beta_4) = (-1, 0.5, 0, 0)$ and $\bolds{\xi}_1 = (1, -1)^\top$, $\bolds{\xi}_2 = (-1, 1)^\top$, $\bolds{\xi}_3 = (0.5, -0.5)^\top$, $\bolds{\xi}_4 = (-1, -1)^\top$. Finally, for each of the $4$ species, we introduce missingness by independently dropping 20\% of the observed count data from the training set uniformly at random; the counts of at least one species are missing at 8,480 locations. Because not all species are observed at all spatial locations, the resulting data are misaligned. This scenario mirrors a situation in which a subset of the total number of individuals of species $j$ are counted at a subset of all locations. Figure \ref{fig:abundance_plot_data} reports the full dataset (including the missing data) along with the latent variables.

\begin{table}[H]
\centering
\resizebox{0.7\columnwidth}{!}{%
\begin{tabular}{l|c|cc|cc|cc|c}
  \hline
\multirow{2}*{Method} & \multirow{2}*{$j$} & RMSE  & MAE & CI$_{95}$ & ESS/s & RMSE & ESS/s & \multirow{2}*{Time(s)} \\ 
& & \multicolumn{2}{c|}{$N_j(\bl)$} & \multicolumn{2}{c|}{ $\bw_j(\bl)$} & \multicolumn{2}{c|}{ $(\beta, \bxi, \blambda_{[j,:]})$} &  \\
  \hline
    \multirow{4}*{SiMPA} 
   & 1 & 2.156 & 0.443 & 0.949 & \textbf{3.16} & \textbf{0.0197} & \textbf{9.09}& \multirow{4}*{139} \\ 
   & 2 & 5.948 & 0.443 & 0.951 & \textbf{3.91} & \textbf{0.0289} & \textbf{2.78}& \\ 
   & 3 & 0.951 & 0.479 & \textbf{0.950} & \textbf{3.42} & 0.0569 & \textbf{15.8}& \\ 
   & 4 & \textbf{0.828} & 0.494 & 0.959 & \textbf{4.11} & 0.0156 & \textbf{25.6}& \\ 
   \hline
\multirow{4}*{MALA} 
   & 1 & 2.164 & 0.447 & \textbf{0.950} & 0.80 & 0.0203 & 1.82&  \multirow{4}*{\textbf{114}}  \\ 
   & 2 & 5.954 & 0.449 & 0.947 & 1.64 & 0.2640 & 1.02& \\ 
   & 3 & \textbf{0.950} & 0.482 & 0.940 & 0.90 & 0.0749 & 1.86& \\ 
   & 4 & \textbf{0.828} & 0.495 & 0.948 & 1.36 & 0.0487 & 1.71& \\ 
   \hline
\multirow{4}*{SM-MALA} 
   & 1 & 2.182 & \textbf{0.423} & 0.929 & 1.87 & 0.0473 & 7.24& \multirow{4}*{194} \\ 
   & 2 & 9.786 & \textbf{0.410} & 0.932 & 2.09 & 0.4330& 1.73& \\ 
   & 3 & 0.951 & 0.488 & 0.920 & 2.14 & 0.0498 & 9.77& \\ 
   & 4 & 0.830 & \textbf{0.488} & 0.920 & 2.27 & 0.0596 & 19.5& \\  
   \hline
\multirow{4}*{HMC} 
   & 1 & \textbf{2.148} & 0.444 & 0.945 & 0.92 & 0.0421 & 1.97& \multirow{4}*{214}  \\ 
   & 2 & \textbf{5.874} & 0.444 & \textbf{0.950} & 2.04 & 0.0370 & 1.70& \\  
   & 3 & 0.951 & 0.479 & 0.944 & 1.13 & 0.0606 & 2.32 & \\  
   & 4 & \textbf{0.828} & 0.494 & 0.952 & 1.92 & 0.0140 & 2.51& \\  
   \hline
  \multirow{4}*{NUTS} 
   & 1 & 2.158 & 0.442 & 0.946 & 0.17 & 0.0242 & 0.93& \multirow{4}*{907} \\ 
   & 2 & 5.876 & 0.446 & 0.941 & 0.24 & 0.0534 & 0.61& \\  
   & 3 & \textbf{0.950} & \textbf{0.477} & 0.944 & 0.20 & \textbf{0.0563} & 1.34& \\  
   & 4 & \textbf{0.828} & 0.495 & \textbf{0.949} & 0.27 & \textbf{0.0107} & 1.31& \\  
   \hline
  \multirow{4}*{YA-MALA} 
   & 1 & 2.178 & 0.424 & 0.329 & 0.10 & 0.1073 & 0.09& \multirow{4}*{132} \\ 
   & 2 & 7.437 & 0.414 & 0.346 & 0.08 & 0.2970 & 0.20& \\  
   & 3 & 0.954 & 0.478 & 0.346 & 0.10 & 0.1151 & 0.08& \\  
   & 4 & 0.829 & 0.493 & 0.356 & 0.09 & 0.0644 & 0.10& \\  
   \hline 
  \multirow{4}*{Ellipt-SS} 
   & 1 & 3.204 & 0.486 & 0.937 & 0.79 & 0.2953 & 0.15& \multirow{4}*{264} \\ 
   & 2 & 15.570 & 0.567 & 0.821 & 0.84 & 0.3609 & 0.08& \\  
   & 3 & 1.014 & 0.512 & 0.852 & 0.79 & 0.0906 & 0.33& \\  
   & 4 & 0.834 & 0.509 & 0.874 & 0.80 & 0.0582 & 0.28& \\  
   \hline
\end{tabular}}
\caption{A comparison of posterior sampling methods for fitting the same model for abundance data with imperfect detection based on latent QMGPs. For the four species, we compare the root mean squared error (RMSE) as well as the mean absolute error (MAE) in estimating the latent abundance $N_j(\bl)$. For $\bw_j(\bl)$, we report the empirical coverage of 95\% credible intervals (CI$_{95}$) and the median effective sample size (ESS) per unit time across spatial locations. We also report the RMSE and median ESS/s in estimating the vector $(\beta, \bxi, \blambda_{[j,:]})$ for each species.}
    \label{tab:abundance_results}
\end{table}

We fit model (\ref{eq:nmixture}) with a QMGP prior on the latent effects. To build the QMGP prior, we use axis-parallel partitioning to tessellate the spatial domain into $M=400$ blocks each including $36$ spatial locations. We choose this partitioning setup to ensure all sampling methods proceed swiftly and without making the overly restrictive spatial conditional independence assumptions that would result from a finer partitioning scheme. We detail the common posterior sampling algorithm used for fitting model (\ref{eq:nmixture}) in Appendix \ref{appx:nmix} as a minor modification of Algorithm \ref{algorithm:lmc_meshed_posterior}.

We compare our proposed SiMPA with MALA, simplified Riemannian manifold MALA (RM-MALA; \citealt{girolamicalderhead11}), HMC and NUTS with dual averaging (Algorithms 5 and 6 in \citealt{nuts}, respectively), the elliptical slice sampler \citep{elliptss}, and YA-MALA. All methods perform 20,000 MCMC iterations, of which we drop the first half as burn-in. All gradient-based methods use dual averaging to adapt $\varepsilon$ for $T^{\text{adapt}} = 10,000$ iterations. 
Figure \ref{fig:abundance_simpa} reports the SiMPA-estimated latent effects along with uncertainty quantification. Table \ref{tab:abundance_results} summarises our findings: because all sampling methods target the same posterior distribution, we do not expect major discrepancies in point estimates. Our SiMPA method is on par with other state-of-the-art gradient-based methods when estimating unknown model parameters, but outperforms them in terms of sampling efficiency measured as ESS per unit time. Because the SiMPA 95\% intervals on the latent effects are subjectively better calibrated than those from other methods, it more robustly quantifies uncertainty about the latent spatial effects. 
Finally, because SiMPA and YA-MALA adapt the preconditioner at the same iterations, we conclude that SiMPA is a much more efficient adaptation scheme for MALA preconditioning in this context.

\begin{figure}
    \centering
    \includegraphics[width=\textwidth]{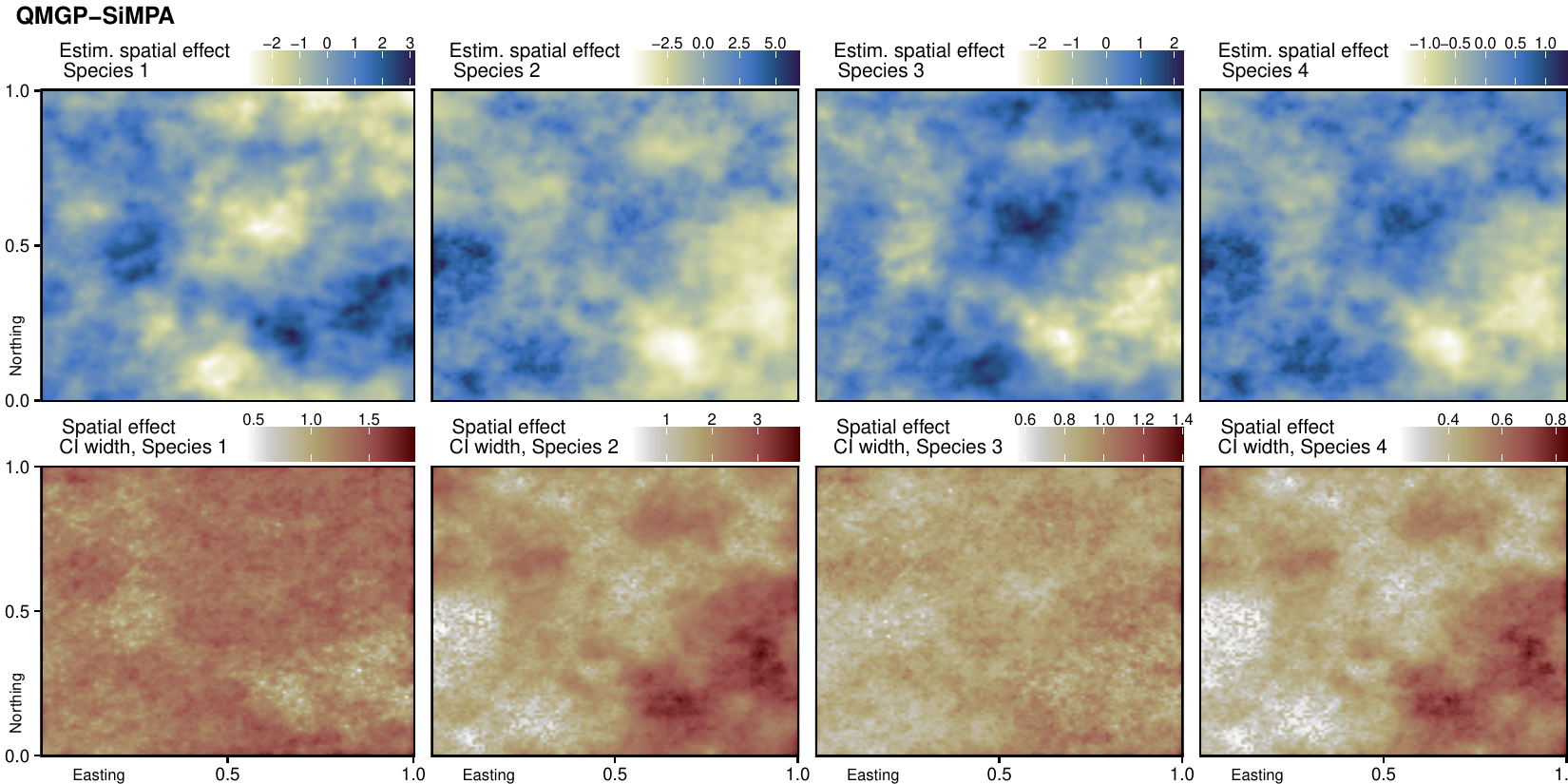}
    \caption{Estimation of the latent spatial effects in the multi-species abundance model. Top row: posterior mean of the species-specific spatial random effects $\bw_j(\bl)$; bottom row: width of the 95\% pointwise credible interval on $\bw_j(\bl)$ as computed via SiMPA.}
    \label{fig:abundance_simpa}
\end{figure}

\subsection{Posterior sampling of the multi-species N-mixture model} \label{appx:nmix}
All the tested sampling methods are used in steps \ref{alg:spammix:step1} and \ref{alg:spammix:step3} of Algorithm \ref{algorithm:spammix}.
\begin{algorithm}
{ \small
  \caption{Posterior sampling and prediction of LMC model (\ref{eq:nmixture}) with MGP priors.}\label{algorithm:spammix}
  \begin{algorithmic}[1]
  \Statex Initialize $\bbeta_j^{(0)}, \bxi_j^{(0)}$, $\bLambda^{(0)}$ for $j=1, \dots, q$, $\bv_{\calS}^{(0)}$, and $\bPhi^{(0)}$
  \Statex \textbf{for} $t \in \{1, \dots, T^*, T^* + 1, \dots, T^* + T\}$ \textbf{do} \Comment{{\footnotesize sequential MCMC loop}}
    \Statex \algindent \textbf{for} $j=1, \dots, q$, \textbf{do \underline{in parallel}} 
    \foritem \algindent Block-update $\bbeta_j^{(t)}, \bxi_j^{(t)}, \blambda_{[j,:]}^{(t)} \given \by_\calT, \bv_{\calS}^{(t-1)}$ \label{alg:spammix:step1} 
    \foritem use Metropolis-Hastings to update $\bPhi^{(t)} \given \bv^{(t-1)}_\calS $ \label{alg:spammix:step2} 
    \Statex \algindent \textbf{for} $c \in \text{Colors}(\calG)$ \textbf{do} \Comment{{\footnotesize sequential}}
    \Statex \algindent \algindent \textbf{for} $i \in \{ i: \text{Color}(a_i) = c \}$ \textbf{do \underline{in parallel}} 
    \foritem \algindent \algindent Update $\bv_{i}^{(t)} \given \bv_{\mb(i)}^{(t)}, \by_i, \bLambda^{(t)}, \bPhi^{(t)}, \{ \bbeta_j^{(t)}, \gamma_j^{(t)} \}_{j=1}^q$ \label{alg:spammix:step3} 
\end{algorithmic} }
\end{algorithm}

Block updating $\bbeta_j, \bxi_j, \blambda_{[j,:]}$ using SiMPA requires second order information about the target. Because $\bx_j(\bl)^\top \bbeta_j + \blambda_{[j,:]} \bv(\bl) = (\bx_j(\bl)^\top, \bv(\bl)^\top) (\bbeta_j^\top, \blambda_{[j,:]}) = \tilde{\bx}_j(\bl) \tilde{\bbeta}_j$, we can proceed without loss of generality by outlining the block-update for $(\bbeta_j, \bxi_j)$ in the model without spatial effects. After letting $\pi_j(\bl) = (1 + \exp\{ - \bz_j(\bl)^\top \bxi_j \})^{-1}$ and $\alpha_j(\bl) = \exp\{ \bx_j(\bl)^\top \bbeta_j \}$, we find
\begin{align*}
   -\frac{\delta^2 P(y_j(\bl) = y)}{\delta^2 (\bbeta_j, \bxi_j)} &= \begin{bmatrix}
g_{11} & g_{12} \\ g_{12} & g_{22}
\end{bmatrix},
\end{align*}
where 
$g_{11} = \alpha_j(\bl) \pi_j(\bl) \bx_j(\bl)\bx_j(\bl)^\top$, $g_{12} = \alpha_j(\bl) \pi_j(\bl) (1-\pi_j(\bl)) \bx_j(\bl) \bz_j(\bl)^\top$, $g_{22} =\pi_j(\bl)(1-\pi_j(\bl))(\alpha_j(\bl)(2\pi_j(\bl) - 1) - y) \bz_j(\bl) \bz_j(\bl)^\top$.

\section{Spatial meshing of Student-t processes} \label{appx:studentt}
GPs are desirable thanks to their convenient properties; however, a similar construction based on cross-covariances can be used to model $\bw(\cdot)$ as a $q$-variate Student-t process (TP), in which case we write $\bw(\bl) \sim TP_{\nu_0}(\bzero, \Cov(\cdot,\cdot))$ where $\nu_0 > 2 \in \Re$ is a degrees of freedom parameter which controls tail heaviness; similarities with GPs include closedness under marginalization and analytic forms of conditional densities. Then, for any $\calL$, the random effects have a multivariate Student-t distribution, i.e. $\bw_{\calL} \sim MVT_{\nu_0}(0, \Cov_{\calL})$. In the limiting case $\nu_0 \to \infty$ one obtains a GP with cross-covariance $\Cov(\cdot,\cdot)$. \cite{studenttprocess} and \cite{mvstudentreg} introduce and consider TPs as alternatives to GPs in regression, citing improved flexibility owing to the ability of a TP to capture more extreme behavior. There are difficulties associated to using TPs in regression, notably the lack of closedness under linear combinations. This implies that spatial meshing of multivariate TPs built upon a LMC does not equate the LMC of spatially meshed univariate TPs. 

The TP is closed under marginalization and conditioning, which implies that it is relatively easy to build a spatially meshed TP. Letting $\bw_\calL = \bw$ and $\Cov_\calL = \Cov$ for simplicity, the density of a zero-mean TP evaluated at $\bw$, denoted as $MVT(\nu, \bzero, \Cov)$, is defined as \citep{studenttprocess}
\begin{equation*} \label{eq:student_mvdensity}
    \begin{aligned}
    p(\bw \given \nu_0) &= \frac{\Gamma(\frac{\nu + n}{2})}{((\nu-2) \pi)^{n/2}} |\Cov|^{-\frac{1}{2}} \left( 1 + \frac{1}{\nu - 2} \bw^\top \Cov^{-1} \bw \right)^{-\frac{\nu + n}{2}}.
    \end{aligned}
\end{equation*}
The above density formulation leads to $\text{cov}(\bw) = \Cov$. Closedness of the TP under marginalization and conditioning leads to the TP conditional densities also being multivariate t's; we find
\begin{equation*} \label{eq:student_conditional}
    \begin{aligned}
    \pi(\bw_i \given \bw_{[i]}) &\sim MVT\left(\nu + n_{[i]}, \bH_i \bw_{[i]}, \frac{\bb  + \nu- 2}{n_{[i]} + \nu - 2} \bR_i \right),
    \end{aligned}
\end{equation*}
where $\bH_i$ and $\bR_i$ are defined like in the GP, and the new term $\bb = \bw_{[i]}^\top \Cov^{-1}_{[i]} \bw_{[i]}$ determines how the conditional variance of $\bw_i \given \bw_{[i]}$ also depends on the values of $\bw_{[i]}$. In fact, $\text{cov}(\bw_i \given \bw_{[i]}) = \frac{\bb  + \nu- 2}{n_{[i]} + \nu - 2} \bR_i$, where the (covariance-weighted) sum of squares $\bb$ is used to inform the conditional density about the observed variance in the conditioning set. In fact, if $\bb/n_{[i]}$ is large (i.e., the conditioning set has larger spread), then the conditional variance is also larger. This intuitive behavior is missing from a GP, which we obtain in this context by letting $\nu \to \infty$ (or $n_{[i]} \to \infty$, which is uninteresting when doing spatial meshing).

\noindent 
\textbf{Gradient based sampling for MTPs.}\\
When building gradient-based MCMC methods for posterior sampling MTPs, we require $\nabla_{\bw_i} \log p(\bw_i \given \others) = \bbf_i + \frac{\delta}{ \delta \bw_i} \log p(\bw_{i} \given \bw_{[i]}, \btheta) + \sum_{j \to \{i \to j \}}\frac{\delta}{ \delta \bw_i} \log p(\bw_{j} \given \bw_{i}, \bw_{[j]\setminus \{i\}}, \btheta )$. In particular, letting $\br_i = \bw_i - \bH_i \bw_{[i]}$ we find 
\begin{equation*} 
    \begin{aligned}
    \frac{\delta}{ \delta \bw_i} \log \pi(\bw_{i} \given \bw_{[i]}, \btheta) &= -  \frac{\nu + n_i + n_{[i]}}{\nu - 2 + \bw_{[i]} \Cov_{[i]}^{-1} \bw_{[i]} + \br_i^\top \bR_{i}^{-1} \br_i} \bR_{i}^{-1} \br_i,
    \end{aligned}
\end{equation*}
and we proceed similarly for $\nabla_{\bw_i} \log \pi(\bw_j \given \bw_i, \bw_{[j] \setminus \{i \}}, \btheta)$, where $\pi(\bw_j \given \bw_i, \others)$ is a MVT density of $\bw_j$ but not of $\bw_i$ because MVT are not closed under linear combinations. We partition $\bH_j$ and $\Cov_{[j]}^{-1}$ as
\begin{align*}
    \bH_j &= \begin{bmatrix}
    A & B
    \end{bmatrix} \qquad \Cov_{[j]}^{-1} = \begin{bmatrix}
    C & D\\
    D^\top & E
    \end{bmatrix},
\end{align*}
with $A$ and $C$ corresponding to blocks which refer to node $a_i \in [j]$, whereas $B$ and $E$ refer to nodes $[j]\setminus a_i$. Let $\tilde{\bw}_j = \bw_j - B\bw_{[j]\setminus \{ i\}}$, $\alpha = \frac{\nu + n_j + n_{[j]}}{2}$, $\beta = \nu - 2 + \bw_{[j] \setminus \{ i\}}^\top E _{[j] \setminus \{ i\}}\bw_{[j] \setminus \{ i\}}$, $c_1 = \bw_i^\top C \bw_i + 2\bw_i^\top D \bw_{[j]\setminus\{i \}}$, $c_2=(\tilde{\bw}_{j} - A\bw_i)^\top \bR_j^{-1}(\tilde{\bw}_{j} - A\bw_i)$. Then
\begin{equation*} 
    \begin{aligned}
    \nabla_{\bw_i} \log \pi(\bw_j \given \bw_i, \bw_{[j] \setminus \{i \}}, \btheta) &= \frac{\delta}{\delta \bw_i} \left\{ -\alpha \log \left( 1+\frac{(\tilde{\bw}_j - A\bw_i )^\top \bR_j^{-1}(\tilde{\bw}_j - A\bw_i) }{  \bw_i^\top C \bw_i + 2\bw_i ^\top D \bw_{[j] \setminus \{ i\}} + \beta }\right) \right\} \\
    &= \frac{2\alpha}{\beta + c_1 + c_2} \left(A^\top \bR_j^{-1}(\bw_j - A\bw_i) + \frac{c_2 (C\bw_i + D\bw_{[j]\setminus\{i\}})}{\beta + c_1} \right).
    \end{aligned}
\end{equation*}


\bibliography{biblio}

\begin{thebibliography}{84}
\providecommand{\natexlab}[1]{#1}
\providecommand{\url}[1]{\texttt{#1}}
\expandafter\ifx\csname urlstyle\endcsname\relax
  \providecommand{\doi}[1]{doi: #1}\else
  \providecommand{\doi}{doi: \begingroup \urlstyle{rm}\Url}\fi

\bibitem[Andrieu and Thoms(2008)]{andrieuthoms2008}
Christophe Andrieu and Johannes Thoms.
\newblock A tutorial on adaptive {MCMC}.
\newblock \emph{Statistics and Computing}, 18:\penalty0 343--373, 2008.
\newblock \doib{10.1007/s11222-008-9110-y}.

\bibitem[Apanasovich and Genton(2010)]{apanasovich_genton2010}
Tatiyana~V. Apanasovich and Marc~G. Genton.
\newblock Cross-covariance functions for multivariate random fields based on
  latent dimensions.
\newblock \emph{Biometrika}, 97:\penalty0 15--30, 2010.
\newblock \doib{10.1093/biomet/asp078}.

\bibitem[Atchadé(2006)]{atchade06}
Yves~F. Atchadé.
\newblock An adaptive version for the {Metropolis} adjusted {Langevin}
  algorithm with a truncated drift.
\newblock \emph{Methodology and Computing in Applied Probability}, 8:\penalty0
  235--254, 2006.
\newblock \doib{10.1007/s11009-006-8550-0}.

\bibitem[Banerjee(2017)]{sudipto_ba17}
Sudipto Banerjee.
\newblock High-dimensional {Bayesian} geostatistics.
\newblock \emph{Bayesian Analysis}, 12\penalty0 (2):\penalty0 583--614, 2017.
\newblock \doib{10.1214/17-BA1056R}.

\bibitem[Banerjee(2020)]{sudipto_ss20}
Sudipto Banerjee.
\newblock Modeling massive spatial datasets using a conjugate {Bayesian} linear
  modeling framework.
\newblock \emph{Spatial Statistics}, 37:\penalty0 100417, 2020.
\newblock \doib{10.1016/j.spasta.2020.100417}.

\bibitem[Banerjee et~al.(2008)Banerjee, Gelfand, Finley, and
  Sang]{gp_predictive_process}
Sudipto Banerjee, Alan~E. Gelfand, Andrew~O. Finley, and Huiyan Sang.
\newblock Gaussian predictive process models for large spatial data sets.
\newblock \emph{Journal of the Royal Statistical Society, Series B},
  70:\penalty0 825--848, 2008.
\newblock \doib{10.1111/j.1467-9868.2008.00663.x}.

\bibitem[Banerjee et~al.(2010)Banerjee, Finley, Waldmann, and
  Ericsson]{gp_pp_biasadj}
Sudipto Banerjee, Andrew~O. Finley, Patrik Waldmann, and Tore Ericsson.
\newblock Hierarchical spatial process models for multiple traits in large
  genetic trials.
\newblock \emph{Journal of American Statistical Association}, 105\penalty0
  (490):\penalty0 506--521, 2010.
\newblock \doib{10.1198/jasa.2009.ap09068}.

\bibitem[Betancourt(2018)]{hmc_conceptual}
Michael Betancourt.
\newblock A conceptual introduction to {Hamiltonian Monte Carlo}, 2018.
\newblock \arXiv{1701.02434}.

\bibitem[Bhattacharya and Dunson(2011)]{bhattacharya_dunson}
A.~Bhattacharya and D.~B. Dunson.
\newblock Sparse {Bayesian} infinite factor models.
\newblock \emph{Biometrika}, 98\penalty0 (2):\penalty0 291--306, 2011.
\newblock \doib{10.1093/biomet/asr013}.

\bibitem[Blomstedt et~al.(2019)Blomstedt, {Parente Paiva Mesquita}, Lintusaari,
  Sivula, Corander, and Kaski]{blomstedtetal2019}
Paul Blomstedt, Diego {Parente Paiva Mesquita}, Jarno Lintusaari, Tuomas
  Sivula, Jukka Corander, and Samuel Kaski.
\newblock Meta-analysis of {Bayesian} analyses, 2019.
\newblock \arXiv{1904.04484}.

\bibitem[Bradley et~al.(2018)Bradley, Holan, and Wikle]{bradley_ba}
Jonathan~R. Bradley, Scott~H. Holan, and Christopher~K. Wikle.
\newblock Computationally efficient multivariate spatio-temporal models for
  high-dimensional count-valued data (with discussion).
\newblock \emph{Bayesian Analysis}, 13\penalty0 (1):\penalty0 253--310, 2018.
\newblock \doib{10.1214/17-BA1069}.

\bibitem[Bradley et~al.(2019)Bradley, Holan, and Wikle]{bradley_jasa}
Jonathan~R. Bradley, Scott~H. Holan, and Christopher~K. Wikle.
\newblock Bayesian hierarchical models with conjugate full-conditional
  distributions for dependent data from the natural exponential family.
\newblock \emph{Journal of the American Statistical Association}, 2019.
\newblock \doib{10.1080/01621459.2019.1677471}.

\bibitem[Carpenter et~al.(2017)Carpenter, Gelman, Hoffman, Lee, Goodrich,
  Betancourt, Brubaker, Li, and Riddell]{stan}
Bob Carpenter, Andrew Gelman, Matthew~D. Hoffman, Daniel Lee, Ben Goodrich,
  Michael Betancourt, Jiqiang Brubaker, Marcus~Guo, Peter Li, and Allen
  Riddell.
\newblock Stan: A probabilistic programming language.
\newblock \emph{Journal of Statistical Software}, 76\penalty0 (1), 2017.
\newblock \doib{10.18637/jss.v076.i01}.

\bibitem[Chen et~al.(2020)Chen, Wang, and Gorban]{mvstudentreg}
Zexun Chen, Bo~Wang, and Alexander~N. Gorban.
\newblock Multivariate {Gaussian} and {Student-t} process regression for
  multi-output prediction.
\newblock \emph{Neural Computing and Applications}, 32:\penalty0 3005–3028,
  2020.
\newblock \doib{10.1007/s00521-019-04687-8}.

\bibitem[Craiu et~al.(2015)Craiu, Gray, Łatuszyński, Madras, Roberts, and
  Rosenthal]{craiuetal15}
Radu~V. Craiu, Lawrence Gray, Krzysztof Łatuszyński, Neal Madras, Gareth~O.
  Roberts, and Jeffrey~S. Rosenthal.
\newblock Stability of adversarial {Markov} chains, with an application to
  adaptive {MCMC} algorithms.
\newblock \emph{The Annals of Applied Probability}, 25\penalty0 (6):\penalty0
  3592 -- 3623, 2015.
\newblock \doib{10.1214/14-AAP1083}.

\bibitem[Cressie and Johannesson(2008)]{frk}
Noel Cressie and Gardar Johannesson.
\newblock Fixed rank kriging for very large spatial data sets.
\newblock \emph{Journal of the Royal Statistical Society, Series B},
  70:\penalty0 209--226, 2008.
\newblock \doib{10.1111/j.1467-9868.2007.00633.x}.

\bibitem[Dagum and Menon(1998)]{dagum1998openmp}
Leonardo Dagum and Ramesh Menon.
\newblock {OpenMP}: an industry standard api for shared-memory programming.
\newblock \emph{Computational Science \& Engineering, IEEE}, 5\penalty0
  (1):\penalty0 46--55, 1998.

\bibitem[Datta et~al.(2016{\natexlab{a}})Datta, Banerjee, Finley, and
  Gelfand]{nngp}
Abhirup Datta, Sudipto Banerjee, Andrew~O. Finley, and Alan~E. Gelfand.
\newblock Hierarchical nearest-neighbor {Gaussian} process models for large
  geostatistical datasets.
\newblock \emph{Journal of the American Statistical Association}, 111:\penalty0
  800--812, 2016{\natexlab{a}}.
\newblock \doib{10.1080/01621459.2015.1044091}.

\bibitem[Datta et~al.(2016{\natexlab{b}})Datta, Banerjee, Finley, Hamm, and
  Schaap]{nngp_aoas}
Abhirup Datta, Sudipto Banerjee, Andrew~O. Finley, Nicholas A.~S. Hamm, and
  Martijn Schaap.
\newblock Nonseparable dynamic nearest neighbor {Gaussian} process models for
  large spatio-temporal data with an application to particulate matter
  analysis.
\newblock \emph{The Annals of Applied Statistics}, 10:\penalty0 1286--1316,
  2016{\natexlab{b}}.
\newblock \doib{10.1214/16-AOAS931}.

\bibitem[Dey et~al.(2021)Dey, Datta, and Banerjee]{deyetal20}
Debangan Dey, Abhirup Datta, and Sudipto Banerjee.
\newblock Graphical {Gaussian} process models for highly multivariate spatial
  data.
\newblock \emph{Biometrika}, 2021.
\newblock in press. \doib{doi.org/10.1093/biomet/asab061}.

\bibitem[Dorazio and Royle(2005)]{DorazioRoyle05}
Robert~M Dorazio and J.~Andrew Royle.
\newblock Estimating size and composition of biological communities by modeling
  the occurrence of species.
\newblock \emph{Journal of the American Statistical Association}, 100\penalty0
  (470):\penalty0 389--398, 2005.
\newblock \doib{10.1198/016214505000000015}.

\bibitem[Doser et~al.(2022)Doser, Finley, K\'ery, and Zipkin]{Doseretall2022}
Jeffrey~W. Doser, Andrew~O. Finley, Marc K\'ery, and Elise~F. Zipkin.
\newblock {spOccupancy}: An {R} package for single-species, multi-species, and
  integrated spatial occupancy models.
\newblock \emph{Methods in Ecology and Evolution}, 13\penalty0 (8):\penalty0
  1670--1678, 2022.
\newblock \doib{10.1111/2041-210X.13897}.

\bibitem[Duane et~al.(1987)Duane, A.D., Pendleton, and Roweth]{hmc_duane}
Simon Duane, Kennedy A.D., Brian~J. Pendleton, and Duncan Roweth.
\newblock {Hybrid Monte Carlo}.
\newblock \emph{Physics Letters B}, 195:\penalty0 216--222, 1987.

\bibitem[Dunson and Johndrow(2020)]{hastings50}
David Dunson and James~E. Johndrow.
\newblock The {Hastings} algorithm at fifty.
\newblock \emph{Biometrika}, 107\penalty0 (1):\penalty0 1--23, 2020.
\newblock \doib{10.1093/biomet/asz066}.

\bibitem[Finley et~al.(2008)Finley, Banerjee, Ek, and McRoberts]{finley2008}
Andrew~O. Finley, Sudipto Banerjee, Alan~R. Ek, and Ronald~E. McRoberts.
\newblock Bayesian multivariate process modeling for prediction of forest
  attributes.
\newblock \emph{Journal of Agricultural, Biological, and Environmental
  Statistics}, 13:\penalty0 60, 2008.
\newblock \doib{10.1198/108571108X273160}.

\bibitem[Finley et~al.(2019)Finley, Datta, Cook, Morton, Andersen, and
  Banerjee]{nngp_algos}
Andrew~O. Finley, Abhirup Datta, Bruce~D. Cook, Douglas~C. Morton, Hans~E.
  Andersen, and Sudipto Banerjee.
\newblock Efficient algorithms for {Bayesian} nearest neighbor {Gaussian}
  processes.
\newblock \emph{Journal of Computational and Graphical Statistics},
  28:\penalty0 401--414, 2019.
\newblock \doib{10.1080/10618600.2018.1537924}.

\bibitem[Furrer et~al.(2006)Furrer, Genton, and Nychka]{taper1}
Reinhard Furrer, Marc~G. Genton, and Douglas Nychka.
\newblock Covariance tapering for interpolation of large spatial datasets.
\newblock \emph{Journal of Computational and Graphical Statistics},
  15:\penalty0 502--523, 2006.
\newblock \doib{10.1198/106186006X132178}.

\bibitem[Genton and Kleiber(2015)]{genton_ccov}
Marc~G. Genton and William Kleiber.
\newblock Cross-covariance functions for multivariate geostatistics.
\newblock \emph{Statistical Science}, 30:\penalty0 147--163, 2015.
\newblock \doib{10.1214/14-STS487}.

\bibitem[Girolami and Calderhead(2011)]{girolamicalderhead11}
Mark Girolami and Ben Calderhead.
\newblock Riemann manifold {Langevin} and {Hamiltonian Monte Carlo} methods.
\newblock \emph{Journal of the Royal Statistical Society: Series B},
  73\penalty0 (2):\penalty0 123--214, 2011.
\newblock \doib{10.1111/j.1467-9868.2010.00765.x}.

\bibitem[Gramacy and Apley(2015)]{lagp}
Robert~B. Gramacy and Daniel~W. Apley.
\newblock Local {Gaussian} process approximation for large computer
  experiments.
\newblock \emph{Journal of Computational and Graphical Statistics}, 24\penalty0
  (2):\penalty0 561–578, 2015.
\newblock \doib{10.1080/10618600.2014.914442}.

\bibitem[Guhaniyogi and Banerjee(2018)]{metakriging}
Rajarshi Guhaniyogi and Sudipto Banerjee.
\newblock Meta-kriging: Scalable {Bayesian} modeling and inference for massive
  spatial datasets.
\newblock \emph{Technometrics}, 60\penalty0 (4):\penalty0 430--444, 2018.
\newblock \doib{10.1080/00401706.2018.1437474}.

\bibitem[Haario et~al.(2001)Haario, Saksman, and Tamminen]{haario2001}
Heikki Haario, Eero Saksman, and Johanna Tamminen.
\newblock An adaptive {Metropolis} algorithm.
\newblock \emph{Bernoulli}, 7\penalty0 (2):\penalty0 223--242, 2001.
\newblock \doib{10.2307/3318737}.

\bibitem[Heaton et~al.(2019)Heaton, Datta, Finley, Furrer, Guinness,
  Guhaniyogi, Gerber, Gramacy, Hammerling, Katzfuss, Lindgren, Nychka, Sun, and
  Zammit-Mangion]{Heaton2019}
Matthew~J. Heaton, Abhirup Datta, Andrew~O. Finley, Reinhard Furrer, Joseph
  Guinness, Rajarshi Guhaniyogi, Florian Gerber, Robert~B. Gramacy, Dorit
  Hammerling, Matthias Katzfuss, Finn Lindgren, Douglas~W. Nychka, Furong Sun,
  and Andrew Zammit-Mangion.
\newblock A case study competition among methods for analyzing large spatial
  data.
\newblock \emph{Journal of Agricultural, Biological and Environmental
  Statistics}, 24\penalty0 (3):\penalty0 398--425, Sep 2019.
\newblock \doib{10.1007/s13253-018-00348-w}.

\bibitem[Hoffman and Gelman(2014)]{nuts}
Matthew~D. Hoffman and Andrew Gelman.
\newblock The no-{U}-turn sampler: Adaptively setting path lengths in
  {Hamiltonian Monte Carlo}.
\newblock \emph{Journal of Machine Learning Research}, 15\penalty0
  (47):\penalty0 1593--1623, 2014.
\newblock \url{https://www.jmlr.org/papers/v15/hoffman14a.html}.

\bibitem[Jin et~al.(2021)Jin, Peruzzi, and Dunson]{bags}
Bora Jin, Michele Peruzzi, and David~B. Dunson.
\newblock Bag of {DAGs}: Flexible \& scalable modeling of spatiotemporal
  dependence, 2021.
\newblock \arXiv{2112.11870}.

\bibitem[Jin et~al.(2022)Jin, Herring, and Dunson]{boragp}
Bora Jin, Amy~H. Herring, and David~B. Dunson.
\newblock Spatial predictions on physically constrained domains: Applications
  to arctic sea salinity data, 2022.
\newblock \arXiv{2210.03913}.

\bibitem[Johndrow et~al.(2020)Johndrow, Pillai, and Smith]{johndrowetal20}
James~E. Johndrow, Natesh~S. Pillai, and Aaron Smith.
\newblock No free lunch for approximate {MCMC}, 2020.
\newblock \arXiv{2010.125147}.

\bibitem[J\"onsson et~al.(2010)J\"onsson, Eklundh, Hellstr\"om, B\"arring, and
  J\"onsson]{jonsson2010modis_snow}
A.~M. J\"onsson, L.~Eklundh, M.~Hellstr\"om, L.~B\"arring, and P.~J\"onsson.
\newblock Annual changes in {MODIS} vegetation indices of {S}wedish coniferous
  forests in relation to snow dynamics and tree phenology.
\newblock \emph{Remote Sensing of Environment}, 114:\penalty0 2719–2730,
  2010.
\newblock \doib{10.1016/j.rse.2010.06.005}.

\bibitem[Jurek and Katzfuss(2020)]{jurekkatzfuss2020}
Marcin Jurek and Matthias Katzfuss.
\newblock Hierarchical sparse {Cholesky} decomposition with applications to
  high-dimensional spatio-temporal filtering, 2020.
\newblock \arXiv{2006.16901}.

\bibitem[Katzfuss(2017)]{katzfuss_jasa17}
Matthias Katzfuss.
\newblock A multi-resolution approximation for massive spatial datasets.
\newblock \emph{Journal of the American Statistical Association}, 112:\penalty0
  201--214, 2017.
\newblock \doib{10.1080/01621459.2015.1123632}.

\bibitem[Katzfuss and Guinness(2021)]{katzfuss_vecchia}
Matthias Katzfuss and Joseph Guinness.
\newblock A general framework for {Vecchia} approximations of {Gaussian}
  processes.
\newblock \emph{Statistical Science}, 36\penalty0 (1):\penalty0 124--141, 2021.
\newblock \doib{10.1214/19-STS755}.

\bibitem[Kaufman et~al.(2008)Kaufman, Schervish, and Nychka]{taper2}
Cari~G. Kaufman, Mark~J. Schervish, and Douglas~W. Nychka.
\newblock Covariance tapering for likelihood-based estimation in large spatial
  data sets.
\newblock \emph{Journal of the American Statistical Association}, 103:\penalty0
  1545--1555, 2008.
\newblock \doib{10.1198/016214508000000959}.

\bibitem[Lindgren et~al.(2011)Lindgren, Rue, and Lindström]{spde}
Finn Lindgren, Håvard Rue, and Johan Lindström.
\newblock An explicit link between {Gaussian} fields and {Gaussian} {Markov}
  random fields: the stochastic partial differential equation approach.
\newblock \emph{Journal of the Royal Statistical Society: Series B},
  73:\penalty0 423--498, 2011.
\newblock \doib{10.1111/j.1467-9868.2011.00777.x}.

\bibitem[Marshall and Roberts(2012)]{marshallroberts12}
Tristan Marshall and Gareth Roberts.
\newblock An adaptive approach to {Langevin} {MCMC}.
\newblock \emph{Statistics and Computing}, 22:\penalty0 1041--1057, 2012.
\newblock \doib{10.1007/s11222-011-9276-6}.

\bibitem[Matheron(1982)]{matheron82}
G.~Matheron.
\newblock Pour une analyse krigeante des données régionalisées.
\newblock \emph{Technical report N.732, Centre de Géostatistique}, 1982.

\bibitem[Mesquita et~al.(2020)Mesquita, Blomstedt, and Kaski]{mesquitaetal2020}
Diego Mesquita, Paul Blomstedt, and Samuel Kaski.
\newblock Embarrassingly parallel {MCMC} using deep invertible transformations.
\newblock In Ryan~P. Adams and Vibhav Gogate, editors, \emph{Proceedings of
  Machine Learning Research}, volume 115, pages 1244--1252, Tel Aviv, Israel,
  22--25 Jul 2020. PMLR.
\newblock \url{http://proceedings.mlr.press/v115/mesquita20a.html}.

\bibitem[Mimnagh et~al.(2022)Mimnagh, Parnell, Prado, and
  de~Andrade~Moral]{Mimnaghetal22}
Niamh Mimnagh, Andrew Parnell, Estev\~ao Prado, and Rafael de~Andrade~Moral.
\newblock Bayesian multi-species {N}-mixture models for unmarked animal
  communities.
\newblock \emph{Environmental and Ecological Statistics}, 29:\penalty0
  755--778, 2022.
\newblock \doib{10.1007/s10651-022-00542-7}.

\bibitem[Murray et~al.(2010)Murray, Adams, and MacKay]{elliptss}
Iain Murray, Ryan Adams, and David MacKay.
\newblock Elliptical slice sampling.
\newblock In Yee~Whye Teh and Mike Titterington, editors, \emph{Proceedings of
  the Thirteenth International Conference on Artificial Intelligence and
  Statistics}, volume~9 of \emph{Proceedings of Machine Learning Research},
  pages 541--548, Chia Laguna Resort, Sardinia, Italy, 13--15 May 2010. PMLR.
\newblock \url{https://proceedings.mlr.press/v9/murray10a.html}.

\bibitem[Neal(2011)]{hmc_neal}
R.~M. Neal.
\newblock {MCMC} using {Hamiltonian} dynamics.
\newblock In S.~Brooks, A.~Gelman, G.~L. Jones, and X.-L. Meng, editors,
  \emph{Handbook of {Markov} Chain Monte Carlo}. CRC Press, New York, 2011.
\newblock \doib{10.1201/b10905}.

\bibitem[Neiswanger et~al.(2014)Neiswanger, Wang, and Xing]{neiswangeretal2014}
Willie Neiswanger, Chong Wang, and Eric~P. Xing.
\newblock Asymptotically exact, embarrassingly parallel {MCMC}.
\newblock In \emph{Proceedings of the Thirtieth Conference on Uncertainty in
  Artificial Intelligence}, UAI'14, page 623–632, Arlington, Virginia, USA,
  2014. AUAI Press.
\newblock ISBN 9780974903910.

\bibitem[Nemeth and Sherlock(2018)]{nemeth2018}
Christopher Nemeth and Chris Sherlock.
\newblock Merging {MCMC} subposteriors through {Gaussian}-process
  approximations.
\newblock \emph{Bayesian Analysis}, 13\penalty0 (2):\penalty0 507--530, 06
  2018.
\newblock \doib{10.1214/17-BA1063}.

\bibitem[Peruzzi and Dunson(2022)]{spamtrees}
Michele Peruzzi and David~B. Dunson.
\newblock Spatial multivariate trees for big data {Bayesian} regression.
\newblock \emph{Journal of Machine Learning Research}, 23\penalty0
  (17):\penalty0 1--40, 2022.
\newblock \url{http://jmlr.org/papers/v23/20-1361.html}.

\bibitem[Peruzzi et~al.(2021)Peruzzi, Banerjee, Dunson, and Finley]{grips}
Michele Peruzzi, Sudipto Banerjee, David~B. Dunson, and Andrew~O. Finley.
\newblock {G}rid-{P}arametrize-{S}plit {(GriPS)} for improved scalable
  inference in spatial big data analysis, 2021.
\newblock \arXiv{2101.03579}.

\bibitem[Peruzzi et~al.(2022)Peruzzi, Banerjee, and Finley]{meshedgp}
Michele Peruzzi, Sudipto Banerjee, and Andrew~O. Finley.
\newblock Highly scalable {Bayesian} geostatistical modeling via meshed
  {Gaussian} processes on partitioned domains.
\newblock \emph{Journal of the American Statistical Association}, 117\penalty0
  (538):\penalty0 969--982, 2022.
\newblock \doib{10.1080/01621459.2020.1833889}.

\bibitem[Roberts and Rosenthal(2007)]{robertsrosenthal07}
Gareth~O. Roberts and Jeffrey~S. Rosenthal.
\newblock Coupling and ergodicity of adaptive {Markov} chain {Monte} {Carlo}
  algorithms.
\newblock \emph{Journal of Applied Probability}, 44:\penalty0 458--475, 2007.
\newblock \doib{10.1239/jap/1183667414}.

\bibitem[Roberts and Rosenthal(2009)]{robertsrosenthal09examples}
Gareth~O. Roberts and Jeffrey~S. Rosenthal.
\newblock Examples of adaptive {MCMC}.
\newblock \emph{Journal of Computational and Graphical Statistics}, 18\penalty0
  (2):\penalty0 349--367, 2009.
\newblock \doib{10.1198/jcgs.2009.06134}.

\bibitem[Roberts and Stramer(2002)]{robertsstramer02}
Gareth~O. Roberts and Osnat Stramer.
\newblock Langevin diffusions and {Metropolis-Hastings} algorithms.
\newblock \emph{Methodology And Computing In Applied Probability}, 4:\penalty0
  337–357, 2002.
\newblock \doib{10.1023/A:1023562417138}.

\bibitem[Roberts and Tweedie(1996)]{robertstweedie96}
Gareth~O. Roberts and Richard~L. Tweedie.
\newblock Exponential convergence of {Langevin} distributions and their
  discrete approximations.
\newblock \emph{Bernoulli}, 2\penalty0 (4):\penalty0 341--363, 1996.

\bibitem[Royle(2004)]{Royle2004}
J.A. Royle.
\newblock {N-Mixture Models for Estimating Population Size from Spatially
  Replicated Counts}.
\newblock \emph{Biometrics}, 60\penalty0 (1):\penalty0 108--115, 2004.
\newblock \doib{10.1111/j.0006-341X.2004.00142.x}.

\bibitem[Rue and Held(2005)]{grmfields}
Havard Rue and Leonhard Held.
\newblock \emph{Gaussian {Markov} Random Fields: Theory and Applications}.
\newblock Chapman \& Hall/CRC, 2005.
\newblock \doib{10.1007/978-3-642-20192-9}.

\bibitem[Rue et~al.(2009)Rue, Martino, and Chopin]{inla}
Håvard Rue, Sara Martino, and Nicolas Chopin.
\newblock Approximate {Bayesian} inference for latent {Gaussian} models by
  using integrated nested laplace approximations.
\newblock \emph{Journal of the Royal Statistical Society: Series B},
  71:\penalty0 319--392, 2009.
\newblock \doib{10.1111/j.1467-9868.2008.00700.x}.

\bibitem[Sang and Huang(2012)]{fsa}
Huiyan Sang and Jianhua~Z. Huang.
\newblock A full scale approximation of covariance functions for large spatial
  data sets.
\newblock \emph{Journal of the Royal Statistical Society, Series B},
  74:\penalty0 111--132, 2012.
\newblock \doib{10.1111/j.1467-9868.2011.01007.x}.

\bibitem[Schmidt and Gelfand(2003)]{schmidtgelfand}
Alexandra~M. Schmidt and Alan~E. Gelfand.
\newblock A {Bayesian} coregionalization approach for multivariate pollutant
  data.
\newblock \emph{Journal of Geophysical Research}, 108:\penalty0 D24, 2003.
\newblock \doib{10.1029/2002JD002905}.

\bibitem[Sengupta and Cressie(2013)]{sengupta_cressie}
Aritra Sengupta and Noel Cressie.
\newblock Hierarchical statistical modeling of big spatial datasets using the
  exponential family of distributions.
\newblock \emph{Spatial Statistics}, 2013.
\newblock \doib{10.1016/j.spasta.2013.02.002}.

\bibitem[Shah et~al.(2014)Shah, Wilson, and Ghahramani]{studenttprocess}
Amar Shah, Andrew~G. Wilson, and Zoubin Ghahramani.
\newblock Student-t processes as alternatives to {Gaussian} processes.
\newblock In \emph{Proceedings of the 17th International Conference on
  Artificial Intelligence and Statistics (AISTATS)}, 2014.

\bibitem[Shirota et~al.(2019)Shirota, Finley, Cook, and Banerjee]{conjnngp}
Shinichiro Shirota, Andrew~O. Finley, Bruce~D. Cook, and Sudipto Banerjee.
\newblock Conjugate nearest neighbor {Gaussian} process models for efficient
  statistical interpolation of large spatial data, 2019.
\newblock \arXiv{1907.10109}.

\bibitem[Stein(2002)]{stein_screening}
Michael~L. Stein.
\newblock The screening effect in kriging.
\newblock \emph{The Annals of Statistics}, 30\penalty0 (1):\penalty0 298--323,
  2002.
\newblock \doib{10.1214/aos/1015362194}.

\bibitem[Stein(2014)]{stein2014}
Michael~L. Stein.
\newblock Limitations on low rank approximations for covariance matrices of
  spatial data.
\newblock \emph{Spatial Statistics}, 8:\penalty0 1--19, 2014.
\newblock \doib{doi:10.1016/j.spasta.2013.06.003}.

\bibitem[Stein et~al.(2004)Stein, Chi, and Welty]{steinetal2004}
Michael~L. Stein, Zhiyi Chi, and Leah~J. Welty.
\newblock Approximating likelihoods for large spatial data sets.
\newblock \emph{Journal of the Royal Statistical Society, Series B},
  66:\penalty0 275--296, 2004.
\newblock \doib{10.1046/j.1369-7412.2003.05512.x}.

\bibitem[Sun et~al.(2011)Sun, Li, and Genton]{sunligenton}
Y.~Sun, B.~Li, and M.~Genton.
\newblock Geostatistics for large datasets.
\newblock In J.~Montero, E.~Porcu, and M.~Schlather, editors, \emph{Advances
  and Challenges in Space-time Modelling of Natural Events}, pages 55--77.
  Springer-Verlag, Berlin Heidelberg, 2011.
\newblock \doib{10.1007/978-3-642-17086-7}.

\bibitem[Taylor and Diggle(2014)]{taylordiggle14}
Benjamin~M. Taylor and Peter~J. Diggle.
\newblock {INLA} or {MCMC}? a tutorial and comparative evaluation for spatial
  prediction in log-{Gaussian} {Cox} processes.
\newblock \emph{Journal of Statistical Computation and Simulation}, 84\penalty0
  (10):\penalty0 2266--2284, 2014.
\newblock \doib{10.1080/00949655.2013.788653}.

\bibitem[Tikhonov et~al.(2020)Tikhonov, Opedal, Abrego, Lehikoinen, de~Jonge,
  Oksanen, and Ovaskainen]{hmsc_package}
Gleb Tikhonov, Oystein~H. Opedal, Nerea Abrego, Aleksi Lehikoinen, Melinda
  M.~J. de~Jonge, Jari Oksanen, and Otso Ovaskainen.
\newblock Joint species distribution modelling with the {R}-package {Hmsc}.
\newblock \emph{Methods in Ecology and Evolution}, 11\penalty0 (3):\penalty0
  442--447, 2020.
\newblock \doib{10.1111/2041-210X.13345}.

\bibitem[Vecchia(1988)]{vecchia88}
A.~V. Vecchia.
\newblock Estimation and model identification for continuous spatial processes.
\newblock \emph{Journal of the Royal Statistical Society, Series B},
  50:\penalty0 297--312, 1988.
\newblock \doib{10.1111/j.2517-6161.1988.tb01729.x}.

\bibitem[Vihola(2012)]{vihola2012}
Matti Vihola.
\newblock Robust adaptive {Metropolis} algorithm with coerced acceptance rate.
\newblock \emph{Statistics and Computing}, 22:\penalty0 997--1008, 2012.
\newblock \doib{10.1007/s11222-011-9269-5}.

\bibitem[Wackernagel(2003)]{wackernagel03}
Hans Wackernagel.
\newblock \emph{Multivariate Geostatistics: An Introduction with Applications}.
\newblock Springer, Berlin, 2003.
\newblock \doib{10.1007/978-3-662-05294-5}.

\bibitem[Walker et~al.(1993)Walker, Halfpenny, Walker, and
  Wessman]{longterm_snow}
D.~A. Walker, James~C. Halfpenny, Marilyn~D. Walker, and Carol~A. Wessman.
\newblock Long-term studies of snow-vegetation interactions.
\newblock \emph{BioScience}, 43\penalty0 (5):\penalty0 287--301, 1993.
\newblock \doib{10.2307/1312061}.

\bibitem[Wang et~al.(2015{\natexlab{a}})Wang, Zhang, Qiu, Ji, Tian, Wang, and
  Wang]{snow_effects}
Kun Wang, Li~Zhang, Yubao Qiu, Lei Ji, Feng Tian, Cuizhen Wang, and Zhiyong
  Wang.
\newblock Snow effects on alpine vegetation in the {Qinghai-Tibetan} plateau.
\newblock \emph{International Journal of Digital Earth}, 8\penalty0
  (1):\penalty0 58--75, 2015{\natexlab{a}}.
\newblock \doib{10.1080/17538947.2013.848946}.

\bibitem[Wang and Dunson(2014)]{wangdunson2014}
Xiangyu Wang and David~B. Dunson.
\newblock Parallelizing {MCMC} via {Weierstrass} sampler, 2014.
\newblock \arXiv{1312.4605}.

\bibitem[Wang et~al.(2015{\natexlab{b}})Wang, Guo, Heller, and
  Dunson]{wangetal2015}
Xiangyu Wang, Fangjian Guo, Katherine~A. Heller, and David~B. Dunson.
\newblock Parallelizing {MCMC} with random partition trees.
\newblock In \emph{Proceedings of the 28th International Conference on Neural
  Information Processing Systems - Volume 1}, NIPS'15, page 451–459,
  Cambridge, MA, USA, 2015{\natexlab{b}}. MIT Press.
\newblock \arXiv{1506.03164}.

\bibitem[Xie et~al.(2020)Xie, Jonas, Rixen, {de Jong}, Garonna, Notarnicola,
  Asam, Schaepman, and Kneub\"uhler]{xieetal2020}
Jing Xie, Tobias Jonas, Christian Rixen, Rogier {de Jong}, Irene Garonna,
  Claudia Notarnicola, Sarah Asam, Michael~E. Schaepman, and Mathias
  Kneub\"uhler.
\newblock Land surface phenology and greenness in {Alpine} grasslands driven by
  seasonal snow and meteorological factors.
\newblock \emph{Science of The Total Environment}, 725:\penalty0 138380, 2020.
\newblock \doib{10.1016/j.scitotenv.2020.138380}.

\bibitem[Zanella and Roberts(2021)]{zanellaroberts21}
Giacomo Zanella and Gareth Roberts.
\newblock Multilevel linear models, gibbs samplers and multigrid
  decompositions.
\newblock \emph{Bayesian Analysis}, 2021.
\newblock \doib{10.1214/20-BA1242}.

\bibitem[Zhang and Banerjee(2022)]{zhangbanerjee20}
Lu~Zhang and Sudipto Banerjee.
\newblock Spatial factor modeling: A {Bayesian} matrix-normal approach for
  misaligned data.
\newblock \emph{Biometrics}, 78\penalty0 (2):\penalty0 560--573, 2022.
\newblock \doib{10.1111/biom.13452}.

\bibitem[Zhu et~al.(2022)Zhu, Peruzzi, Li, and Dunson]{radgp}
Yichen Zhu, Michele Peruzzi, Cheng Li, and David~B. Dunson.
\newblock Radial neighbors for provably accurate scalable approximations of
  {Gaussian} processes, 2022.
\newblock \arXiv{2211.14692}.

\bibitem[Zilber and Katzfuss(2020)]{vecchialaplace}
Daniel Zilber and Matthias Katzfuss.
\newblock {Vecchia-Laplace} approximations of generalized {Gaussian} processes
  for big non-{Gaussian} spatial data, 2020.
\newblock \arXiv{1906.07828}.

\end{thebibliography}

\end{document}